\author{\Large Jesse Hemerik
\\
 Erasmus University Rotterdam, The Netherlands\\
e-mail: hemerik@ese.eur.nl}

\date{\Large May 2026}

\documentclass[11pt]{article}

\usepackage{amsmath}
\usepackage{amsfonts}
\usepackage{amssymb,mathrsfs}
\usepackage{bm}
\usepackage{bbm}
\usepackage{amsthm}
\usepackage{comment}
\usepackage[nottoc]{tocbibind}
\usepackage{graphicx}

\usepackage{subcaption}

\usepackage{changepage}
\usepackage{natbib}
\usepackage{abstract}
\usepackage[colorlinks=true,citecolor=blue,runcolor=blue,linkcolor=blue, pdfborder={0 0 0}]{hyperref}

\usepackage{tikz}
\usetikzlibrary{positioning, arrows.meta, calc} 
\usepackage{pgfplots}
\pgfplotsset{compat=1.17}

\usepackage{algorithm}
\usepackage[noend]{algorithmic}

\usepackage[margin=4cm]{geometry}

\theoremstyle{plain}

\newtheorem{theorem}{Theorem}[section]
\newtheorem{lemma}[theorem]{Lemma}
\newtheorem{proposition}[theorem]{Proposition}

\newtheorem{fact}[theorem]{Fact}

\theoremstyle{definition}
\newtheorem{definition}{Definition}[section]
\newtheorem{assumption}{Assumption}[section]

\newtheorem{remark}{Remark}[section]
\newtheorem{example}{Example}[section]

\newcommand{\reals}{\mathbb{R}}

\newcommand{\N}{\mathcal{N}}
\newcommand{\F}{\mathcal{F}}
\newcommand{\R}{\mathcal{R}}
\newcommand{\K}{\mathcal{K}}

\newcommand{\I}{\mathcal{I}}
\newcommand{\J}{\mathcal{J}}
\newcommand{\G}{\mathcal{G}}
\newcommand{\X}{\mathcal{X}}
\newcommand{\A}{\mathcal{A}}
\newcommand{\C}{\mathcal{C}}
\newcommand{\B}{\mathcal{B}}
\newcommand{\E}{\mathcal{E}}
\newcommand{\Hy}{\mathcal{H}}

\newcommand{\pr}{\mathbb{P}}

\newcommand{\Sigm}{\mathbf{\Sigma}}
\newcommand{\subs}{\mathcal{C}}
\newcommand{\ssp}{\mathcal{S}}

\newcommand{\citt}{\citet}
\newcommand{\citp}{\citep}

\excludecomment{sols}

\excludecomment{commentj}

\newcommand{\indep}{\mathrel{\perp\!\!\!\perp}}

\DeclareMathOperator{\Var}{Var}
\DeclareMathOperator{\Cov}{Cov}

\begin{document}

\begin{titlepage}

\centering

\vspace*{2cm}

{\huge\bfseries Multiple Testing\par}

\vspace{2cm}

{\Large Jesse Hemerik\par}

\vspace{0.3cm}

{\large Econometric Institute, Erasmus University Rotterdam, the Netherlands\par
\texttt{hemerik@ese.eur.nl}\par}

\vspace{1cm}

{\large \today\par}

\vspace{2cm}

\begin{minipage}{0.8\textwidth}
\small
\begin{center}
\textbf{Abstract}
\end{center}
This text provides an introduction to multiple hypothesis testing. 
It covers various error criteria and testing procedures, and includes references to relevant R packages.
An earlier version of this text served as the lecture notes for a PhD-level course on multiple testing.
\end{minipage}

\vfill

\end{titlepage}

\pagenumbering{roman}
\clearpage
\tableofcontents

\newpage

\section*{Mathematical notation}
\addcontentsline{toc}{section}{Mathematical notation}


\begin{tabular}{ll}
\textbf{Notation} & \textbf{Meaning} \\

$P_i$ & p-value corresponding to hypothesis $\Hy_i$ \\
$P_{(i)}$ & $i$-th sorted p-value\\
$Q_i$ & p-value corresponding to the $i$-th true hypothesis \\
$\N$ & the set of indices of true hypotheses\\
$\R$ & the set of indices of rejected hypotheses\\
$V$ &  the number of false positives, $|\N\cap \R|$\\
$\subs$ & the set of all nonempty subsets of $\{1,...,m\}$\\
$\Hy_{\I}$ &  intersection hypothesis corresponding to the indices in $\I$.\\
$\X$ &  set of all $\I\in \C$ for which $\Hy_{\I}$ is rejected by the CTP\\
$\reals$ &  set of real numbers, $(-\infty,\infty)$ \\
$\log(a)$ &  natural logarithm of  $a$\\
$\mathbbm{1}$ & indicator function\\
$a\wedge b$ & minimum of numbers  $a$ and $b$\\
$a\vee b$ & maximum of numbers  $a$ and $b$\\ 
$\lceil a \rceil$ & smallest integer that is at least $a$\\ 
$\lfloor a \rfloor$ & largest integer that is at most $a$\\ 
$|a|$ & absolute value of $a$ (if $a$ is a number)\\
$|\mathcal{A}|$ & cardinality of $\mathcal{A}$  (if $\mathcal{A}$ is a set)\\
$\mathcal{A}\cup \mathcal{B}$ & union of sets $\mathcal{A}$ and $\mathcal{B}$\\
$\mathcal{A}\cap \mathcal{B}$ & intersection of sets $\mathcal{A}$ and $\mathcal{B}$ \\
$\mathcal{A}\setminus \mathcal{B}$ & set $\mathcal{A}$ minus set $\mathcal{B}$ \\
$\mathcal{A}^c$ &$\{1,....,m\}\setminus\mathcal{A}$ (if $\mathcal{A}\subseteq \{1,...,m\}$) \\
$\mathcal{A}\subseteq \mathcal{B}$ & $\mathcal{A}$ is a subset of $\mathcal{B}$\\
$\emptyset$ &  empty set \\
$\bigcup_{...}...$ & union over ... of ... \\
$\bigcap_{...}...$ & intersection over ... of ... \\
$\exists$ & there exists\\
$\forall$ & for all\\
\end{tabular}

\newpage
\section*{Abbreviations}
\addcontentsline{toc}{section}{Abbreviations}
\begin{tabular}{ll}
\textbf{Notation} & \textbf{Meaning} \\
i.i.d. & independent and identically distributed \\
CDF  & cumulative distribution function\\
FWER & familywise error rate\\
k-FWER & k-familywise error rate\\
FDX & false discovery exceedance\\
FDP & false discovery proportion\\
FDR & false discovery rate\\
MTP & multiple testing procedure\\
CTP & closed testing procedure\\

\end{tabular}
\newpage
\pagenumbering{arabic}

\section{Introduction}
When we test a single hypothesis,  we are interested in the probability of committing a type I error, as well as the power. When we test multiple hypotheses,  it likewise makes sense to try  to know the risks of incurring false positives. At the same time, we want  to have a good chance of rejecting hypotheses that are false. These are the aims of  multiple hypothesis testing procedures (MTPs).

There are many MTPs. One reason is that there a various kinds of demands that we could make regarding type I error control. For example, some methods ensure that the probability of incurring any false positives is small. This is called \emph{familywise error rate} (FWER) control. Other methods are more lenient. Another reason is that there are different ways to measure power, and given some definition of power, which method is most powerful may depend on the situation. Further, different methods require different assumptions. Finally, knowledge on multiple testing has improved throughout the years, and sometimes an older method is ``uniformly improved'' by a new method. 

This text covers some of the most important and widely used MTPs. We start off with a section (\S\ref{secsingle}) about testing a single hypothesis, focusing on ideas that will return in later sections. \S\ref{secmtbasics} covers some basic building blocks of multiple testing theory, which will return in later sections as well. For example, in \S\ref{secmtbasics} we cover various \emph{global tests}, which will be important in e.g. \S\ref{secCT}. In that section we discuss \emph{closed testing} for FWER control. This is a fundamental theory that can be used for designing and understanding FWER controlling methods. \S\ref{secresamplingFWER} covers powerful resampling-based methods for FWER control. This section extends  resampling-based tests from \S\ref{secsingle} to multiple testing settings. \S\ref{secFDX} discusses \emph{false discovery exceedance control}, which is  less conservative than FWER control. The Romano-Wolf method discussed in  \S\ref{secFDX} relies on so-called k-FWER methods, which are covered in earlier sections.

\S\ref{secCTPfdp} introduces an approach to multiple testing that is quite different from those in other sections. Traditional MTPs tell the user which hypotheses to reject, and the user has limited freedom to investigate other hypotheses afterwards using the same data. However, the methods in \S\ref{secCTPfdp}, which connect to \S\ref{secCT}, provide information about all possible subsets of hypotheses simultaneously: for each subset of hypotheses, we obtain a lower bound for the number of false hypotheses in the set. With large probability, all these bounds are simultaneously correct, so that the user can freely explore various subsets of hypotheses in a post hoc manner.

\S\ref{secFDR} and \S\ref{secknockoffs}  are about \emph{false discovery rate} (FDR) control, which means ensuring that the expected proportion of false positives is small. The theory in these sections is quite different from that of \S\S\ref{secCT}-\ref{secCTPfdp}, so that \S\ref{secFDR} and \S\ref{secknockoffs} can mostly be read independently from those sections. Two proofs in \S\ref{secFDR} and \S\ref{secknockoffs} rely on some limited martingale theory, which is covered in the appendices.

The end of each section offers various exercises that help improve and expand knowledge of the topics. Very nonstandard exercises are indicated with a star. 

There are many relevant methods which this text does not cover. For example, there is a huge literature on FDR controlling procedures. This text only covers a few such methods, which are relatively prominent. Other relevant topics that are not covered are e.g. graph-based procedures \citp{bogomolov2021hypotheses}, Bayesian approaches \citp{storey2002direct}, e-value-based approaches \citp{ramdas2025hypothesis} and selective inference. Regarding the topics that \emph{are} covered, only a selection of the available theory is provided, and some theory is only developed in exercises.

\newpage

\section{Testing a single hypothesis} \label{secsingle}
This section  recalls core ideas on hypothesis testing. Nearly all ideas introduced here will return in later sections.
Moreover, this section  defines the meanings of various terms, whose interpretations may differ slightly  between textbooks.

\subsection{Hypotheses and p-values} 
A statistical model is a set of distributions in which we assume that the true distribution of our data lies. More formally, a model is a set of probability measures (see \S\ref{apprv}). For example, the model may be that the data $X$ is an $n$-dimensional vector with i.i.d. normally distributed variables,  with unknown mean $\mu\in \reals$ and variance $\sigma^2>0$. Often we are interested in testing whether the true distribution of our data lies in some particular set of distributions, which is a subset of our model. More precisely, we want to see whether we can statistically prove that the true distribution does not lie in this subset. This subset of our model is then called the \emph{null hypothesis} $\Hy_0$.
For example, we may be interested in testing whether $\mu=0$, i.e., we want to see whether we can prove that $\mu\neq0$. The null hypothesis is then the set of all distributions in our model for which $\mu=0$. The null hypothesis could also be $\Hy_0: \mu\in [-0.1,0.1]$. This is an example of a \emph{composite null hypothesis}, because $\Hy_0$ corresponds to multiple parameter values. Even if the null hypothesis is $\Hy_0: \mu=0$, we will call it composite, since $\Hy_0$ contains multiple distributions, namely all $n$-vectors of i.i.d. normal variables with mean 0 and positive variance. A hypothesis that contains a single distribution is called \emph{simple}.

It can be useful to formulate an alternative hypothesis $\Hy_{\text{a}}$ as well, which is a subset of our model that does not intersect with the null hypothesis. $\Hy_{\text{a}}$ is the set of distributions under which  we wish to have a high probability of rejecting $\Hy_0$. $\Hy_{\text{a}}$ may influence which test statistic we use (see below). Specifying an alternative hypothesis can be relevant  when considering power properties of a test, i.e., when  studying the probability of rejecting $\Hy_0$ given that the data follow $\Hy_{\text{a}}$.

Hypothesis testing is related to the concept of a \emph{p-value}.
We will call a random variable $P$ a \emph{valid} p-value if it satisfies the following property  under every distribution in $\Hy_0$:
\begin{equation} \label{eqlunif}
\pr(P\leq c)\leq c \qquad \text{for all }c\in[0,1].
\end{equation}
This property means that $P$  is stochastically larger than standard uniform, meaning its CDF is smaller than (or equal to) the CDF of the $U[0,1]$ distribution. 

In order to compute a p-value, we typically compute a test statistic $T=T(X)$, which is a function of the data $X$, say. 
We try  to define $T$ in such a way that it is likely to reject (see below) $\Hy_0$ when $\Hy_0$ is false, or similarly, when a particular alternative hypothesis is true.
 A test statistic is  a random variable. When we consider a specific realized value of the test statistic, we will write $T_{\text{obs}}$ here.  In case $\Hy_0$ is a \emph{simple hypothesis} and large values of $T_{\text{obs}}$ are evidence against $\Hy_0$, we often compute a p-value as follows:
$$p = \pr_{\Hy_0}(T\geq T_{\text{obs}}).$$
Here $\pr_{\Hy_0}(...)$ means the probability of ``...'' under $\Hy_0$, i.e., if $\Hy_0$ is true.

If we are testing a hypothesis $\Hy_0:\theta=\theta_0$ about a one-dimensional parameter $\theta\in \reals$ and we want to have power against left-sided alternatives, then usually, very low values of $T_{\text{obs}}$ are evidence against $\Hy_0$, so that we compute the p-value as
$$p = \pr_{\Hy_0}(T\leq T_{\text{obs}}).$$ If $\theta\in \reals$ and we want power in both directions, we can compute a left-sided and a right-sided p-value and take $p$ to be 2 times their minimum.
We wrote ``$p$'' as a  lowercase letter, because we considered its realized value. Of course a p-value is a random variable and we will often write ``$P$'' to emphasize that.

When $\Hy_0$ is a \emph{composite hypothesis} (see above), we often compute the p-value as follows:
$$p = \sup_{\pr\in\Hy_0}\pr(T\geq T_{\text{obs}}).$$
To understand this notation, realize that $\Hy_0$ is a set of distributions, i.e., it is a set of probability measures $\pr$ on the sample space (see \S\ref{apprv}). By taking the supremum, we ensure that $p$ is larger than 
$\pr(T\geq T_{\text{obs}})$ for every $\pr\in\Hy_0$.

If $\Hy_0$ is true and $P$ is computed in one of the ways indicated, then \eqref{eqlunif} is satisfied.
When $\Hy_0$ is simple and the null distribution of $T$ is continuous and $P$ is computed in one of the above-mentioned ways, then $P$ is exactly standard uniform on $[0,1]$ under $\Hy_0$. It is equally valid to say that $P$ is exactly standard uniform on $(0,1]$ or $(0,1)$, since a standard uniform variable is strictly larger than 0 and smaller than 1 \emph{almost surely} (see Appendix \ref{apprv}).
Note that if $P$ is valid, then $\pr_{\Hy_0}(P=0)=0$. This means that under $\Hy_0$, a valid p-value cannot be 0 in practice.


\subsection{Hypothesis testing}

One typically \emph{rejects} $\Hy_0$ when $P\leq \alpha$, where $\alpha\in(0,1)$ is some prespecified significance threshold. This ensures that the rejection probability is at most $\alpha$ if $\Hy_0$ is true. Note that if $p$ is computed as $\pr_{\Hy_0}(T\geq T_{\text{obs}})$, then it is equivalent to  rejecting when $T$ exceeds its $(1-\alpha)$-quantile. When $\Hy_0$ is rejected, it is common to say that the test or p-value is ``significant''. Otherwise we do not reject (so we ``retain'' or  ``accept'' $\Hy_0$). That does not mean that we conclude that $\Hy_0$ is true; when we do not reject $\Hy_0$, we simply do not make a claim at all. Since we can reject or not, and $\Hy_0$ can be true or false, there are four possible outcomes of a hypothesis test, see Table \ref{tabtestoutc}.

\begin{table}[h]
\centering
\begin{tabular}{ c c c } 
\hline
 & \textbf{$\Hy_0$ true} & \textbf{$\Hy_0$ false} \\
\hline
\textbf{Reject } 
& False Positive (Type I error) 
& True Positive (Correct rejection) \\
\hline
\textbf{Don't reject} 
& True Negative  
& False Negative (Type II error) \\
\hline
\end{tabular}
\caption{Possible outcomes of a hypothesis test}
\label{tabtestoutc}
\end{table}

A true positive is often called a ``true discovery'' and a false positive is sometimes called a ``false discovery''.
Of course, this is because scientific discoveries are often based on rejecting a null hypothesis. For example, if the parameter $\theta\in \reals$ indicates the effect of some genetic mutation on a disease or the effect of some governmental policy, then rejecting $\Hy_0:\theta=0$ means ``discovering'' that there is some effect.

If we consider some distribution in $\Hy_0$, then we may consider the \emph{type I error rate}, which is also called the  \emph{level} and means the probability of rejecting $\Hy_0$, when the data follow that null distribution.
If we consider some distribution in the alternative hypothesis $\Hy_{\text{a}}$, then the \emph{power} is the probability of rejecting $\Hy_0$ if the data follow that distribution.
The \emph{type II error rate} is then $1$ minus the power.

If we are testing a hypothesis $\Hy_0: \theta=\theta_0$ about a one-dimensional parameter $\theta\in \reals$, then often we desire to have power in both directions, as mentioned above.  Different approaches are then possible, after we have computed a  statistic $T$ that is 0-centered under $\Hy_0$:
\begin{enumerate}
\item compute a valid left-sided and  right-sided p-value and take $p$ to be 2 times their minimum. This gives a valid p-value for $\Hy_0$ and we can reject if $p\leq \alpha$;
\item reject if $T$ lies below its $\alpha/2$ or above its $(1-\alpha/2)$-quantile;
\item compute the absolute value $|T|$ and compare it  with its $(1-\alpha)$-quantile.
\end{enumerate}
If the test statistic is symmetric about 0 under $\Hy_0$, then these approaches are equivalent. If the test statistic is not symmetric, then the third approach is not equivalent to the first two. 

The value $\alpha$ usually indicates the desired or maximally allowed type I error rate, specified by the user. It is sometimes called the \emph{nominal} type I error rate. When the actual type I error rate is (potentially) larger than $\alpha$, then we call the test \emph{liberal} or \emph{anti-conservative}. 
We say that a test controls the type I error rate (or is ``valid'') if $\pr(\text{reject})\leq \alpha$ (for every distribution $\pr$ in the null hypothesis, or for some specified set of null distributions). When the test  has a type I error rate strictly below $\alpha$ (for every $\pr\in \Hy_0$ or for some specified set of distributions in $\Hy_0$), we call the test \emph{conservative}. 
The terms \emph{size} and \emph{level} are related and sometimes used interchangeably.
We define the \emph{size} of a test to be $\sup_{\pr\in\Hy_0}\pr(\text{reject})$. 
A test is called \emph{exact}  when $\pr(\text{reject})=\alpha$ (under some specified set of null distributions). The definitions of these terms can differ slightly between textbooks.

A concept closely related to hypothesis tests, are confidence intervals. If we consider a one-dimensional parameter $\theta$ and a test for hypotheses of the form $\Hy_0:\theta=\theta_0$, then a confidence interval for  $\theta$ can be constructed as the set of all $\theta_0$ for which the corresponding hypothesis test does not reject. This is called \emph{inverting} a test.

Note that when a null hypothesis $\Hy_0$ is rejected (a ``discovery''), this often leads researchers to publish the result (if the estimated effect is considered substantial). On the other hand, if $\Hy_0$ is not rejected, then this usually not a reason to publish the result. The result may still be published, but even then it will usually receive less attention than  rejected hypotheses. This is one of the reasons why type I errors are often considered to be more problematic than type II errors.
 Also note that a type II error is in a sense not an error at all, because we do not make any claim when we fail to reject $\Hy_0$.

In recent years, many articles have been published that warn against an overly dichotomous approach to hypothesis testing, where a finding is treated as important when $p\leq \alpha$ and unimportant otherwise. In line with this, some researchers argue that ``significance testing'' should be abandoned in their field.
They often  propose to instead look at \emph{how} small the p-value is and to look at the size of the estimated effect  --- or at the confidence interval. For example, when the sample size is large, then the p-value may be very small, but this does not mean that the estimated effect is large; the estimated effect may be very small, so that it is not very relevant. Note that some people have not only suggested to ban significance testing, but to even stop reporting p-values. However,  to many statisticians that goes too far, since a p-value provides valuable information in addition to an effect estimate. Indeed, a large estimated effect is highly dubious when the sample size is too small. Note that confidence intervals are also highly useful in this context, although they can be less practical when the parameter of interest is multi-dimensional.

\subsection{Examples of parametric tests}

A \emph{parametric statistical model} is a model in which finitely many parameters fully describe the possible distributions. An example is the model mentioned earlier, where we assume that the data $(X_1,...,X_n)$ consists of i.i.d. $N(\mu,\sigma^2)$ variables. That model consists of infinitely many distributions, but each distribution is fully specified by the two parameters $\mu\in \reals$ and $\sigma>0$.

If we only assume that $(X_1,...,X_n)$ consists of i.i.d. variables, and make no assumption about the distributional shape of the variables, then our model is nonparametric. Such a model is useful if we are not comfortable with assuming that the observations are e.g. normally or exponentially distributed. Another advantage of nonparametric approaches is that they can be very powerful in a multiple testing context.  Nonparametric models will be considered later. We first discuss some simple examples of parametric models.


\subsubsection{Example 1: Testing for Correlation} \label{secexcortest}

Suppose we observe i.i.d. pairs $(X_1,Y_1),\ldots,(X_n,Y_n)$, where the variables $(X_i,Y_i)$ are assumed to be jointly normally distributed. 
Note that this model is fully  parametric, because if we know the mean vector and covariance matrix of $(X_i,Y_i)$, we know the distribution of the data.

A common question is whether  the two variables are correlated.
Let $\rho$ denote the true Pearson  correlation between $X$ and $Y$. We wish to test
\[
H_0: \rho = 0
\qquad \text{versus} \qquad
H_{\text{a}}: \rho \neq 0.
\]


The test statistic is based on the sample Pearson correlation coefficient
\[
r = \frac{\sum_{i=1}^n (X_i - \bar X)(Y_i - \bar Y)}
{\sqrt{\sum_{i=1}^n (X_i - \bar X)^2}\sqrt{\sum_{i=1}^n (Y_i - \bar Y)^2}}.
\]
If $\Hy_0$ is true, the statistic
\[
T  = r \sqrt{\frac{n-2}{1-r^2}}
\]
follows a $t$-distribution with $n-2$ degrees of freedom \citp[][p.218]{lehmann2022testing}.


The p-value is obtained by comparing the observed value of $T$ to this reference distribution. 
We can for example obtain the two-sided p-value by computing 2 times the minimum of the left- and the right-sided p-value.
If the model is correct, then $P$ is exactly standard uniformly distributed under $\Hy_0$.  This will usually not be the case if the  multivariate normality assumption fails. However, when $n$ is large, the test is still roughly exact, even when that assumption fails.

In Table \ref{tab:cars} two columns of some  rows of the dataset \emph{mtcars} from the \emph{dataset} package in R are shown. This dataset contains characteristics of various car models; here we look at the continuous variables \emph{mpg} (miles per US gallon) and \emph{weight} (in 1000 pounds). Applying the correlation test described above (using the R function \emph{cor.test()}) leads to the p-value  $p= 0.2412$.\footnote{If we compute the two-sided p-value for the hypothesis that $\beta_1=0$ in a simple Gaussian regression model (using \emph{lm()} in R), then we will obtain the same p-value as above. This is related to the fact that if the $(X_i,Y_i)$ are multivariate normal and $\rho=0$, then conditional on $X_1,...,X_n$, the slope $\beta_1$ in the regression model is 0. 
Note however that the traditional regression model does not assume that the  $(X_i,Y_i)$ are multivariate normal --- in fact, it does not require the  $X_i$ to be stochastic. A consequence is that for the traditional regression model, $\beta_1\neq0$ does not necessarily mean that $\rho\neq0$. } The empirical Pearson correlation is $-0.6437$.



\begin{table}[h]
\centering
\caption{Dataset with two continuous variables}
\label{tab:cars}
\begin{tabular}{ccc}
\hline
car model & mpg & weight \\
\hline
Mazda RX4 &21.0  & 2.62 \\
Datsun 710 &22.8 & 2.32 \\
Valiant      & 18.1   & 3.46 \\
Duster 360 &14.3 & 3.57 \\
Merc 240D  &24.4 & 3.19 \\
\hline
\end{tabular}
\end{table}

\subsubsection{Example 2: Non-inferiority Testing} \label{secnoninf}

Suppose we compare a new treatment to a standard treatment. For example, we may want to compare the effect of a new policy with the effect of the standard policy, or the effect of a new drug with the effect of the standard drug. Let $\mu_N$ and $\mu_S$ denote the mean outcomes under the new and standard treatments, respectively. Suppose higher values are  better.

We are willing to accept the new treatment as adequate if it is not worse than the standard treatment by more than a pre-specified margin $\Delta > 0$. This leads to the hypotheses
\[
\Hy_0: \mu_N - \mu_S \le -\Delta
\qquad \text{versus} \qquad
\Hy_1: \mu_N - \mu_S > -\Delta.
\]

Here, the null hypothesis represents the case where the new treatment is inferior by more than the acceptable margin. Rejecting $\Hy_0$ provides evidence that the new treatment is \emph{non-inferior} to the standard treatment.


\begin{figure}[h]
\centering
\begin{tikzpicture}[scale=1.1]

\draw[->] (-4,0) -- (4,0) node[right] {$\mu_N - \mu_S$};

\draw (-3,0.1) -- (-3,-0.1) node[below] {$-\Delta$};
\draw (0,0.1) -- (0,-0.1) node[below] {$0$};

\draw[very thick, red] (-4,0) -- (-3,0);
\draw[very thick, green!60!black] (-3,0) -- (4,0);

\node[red, above] at (-3.5,0.2) {$H_0$};
\node[green!60!black, above] at (1,0.2) {$H_1$};

\end{tikzpicture}
\caption{Non-inferiority testing. The null hypothesis corresponds to effects worse than $-\Delta$.}
\label{fig:noninferiority}
\end{figure}

Note that $\Hy_0$ is a composite null hypothesis, since it corresponds to all values of $\mu_N-\mu_S$ below $-\Delta$. Assume that the differences 
$$D_1=X^N_1-X^S_1,\ldots,D_n=X^N_n-X^S_n$$ 
are i.i.d. and normally distributed with mean $\mu_N-\mu_S$.

To test $\Hy_0$, we compute the t-statistic

$$T=\frac{\bar{D}-(-\Delta)}{\hat{\sigma}/\sqrt{n}},$$
where $$\hat{\sigma} = \sqrt{\frac{1}{n-1} \sum_{i=1}^n( D_i - \bar{D})^2}.$$

The p-value is 
$$\sup_{\pr\in\Hy_0}\{\pr(T\geq T_{\text{obs}})\}=$$
$$\pr_{\mu_N-\mu_S=-\Delta}(T\geq T_{\text{obs}}).$$
This probability can be computed by using that if $\mu_N-\mu_S=-\Delta$, then $T$ has a t-distribution with $n-1$ degrees of freedom.

\subsubsection{Example 3: Equivalence Testing} \label{secequiv}

In equivalence testing, the goal is to show that two variables have similar means, up to a small and practically irrelevant difference. For example, in the field of food safety testing, the aim may be to show that the concentration of a certain molecule in a new variety of potato is similar to the concentration in an existing variety.

Let $\mu_A$ and $\mu_B$ denote the means of the two variables, and let $\Delta > 0$ be a pre-specified equivalence margin. The hypotheses are
\[
\Hy_0: |\mu_A - \mu_B| \ge \Delta
\qquad \text{versus} \qquad
\Hy_1: |\mu_A - \mu_B| < \Delta.
\]

Equivalently, the null hypothesis can be written as 
$$\Hy_0 = \Hy_0^-\cup \Hy_0^+,$$
 where 
\[
\Hy_0^-: \mu_A - \mu_B \le -\Delta \;\;, \qquad \;\; \Hy_0^+: \mu_A - \mu_B \ge \Delta.
\]

Rejecting $\Hy_0$ implies that the difference between the means is small enough to be considered practically negligible, and we conclude that the treatments are \emph{equivalent}. In practice, equivalence testing is often implemented using Two One-Sided Tests (TOST). Thus, we test $\Hy_0^-$ and $\Hy_0^+$ separately, e.g. using the test from\S\ref{secnoninf}. We reject $\Hy_0$ if and only if both $\Hy_0^-$ and $\Hy_0^+$ are rejected. Note that this is a simple example of \emph{multiple hypothesis testing}, since we test two hypotheses, $\Hy_0^-$ and $\Hy_0^+$. However, we do not need a \emph{multiple testing correction} here. The reason is that at most one  of the hypotheses $\Hy_0^-$ and $\Hy_0^+$ can be true, so that the probability of rejecting a true hypothesis stays below $\alpha$ with the TOST procedure (see also Example \ref{exampTOST}).

\begin{figure}[h]
\centering
\begin{tikzpicture}[scale=1.1]

\draw[->] (-4,0) -- (4,0) node[right] {$\mu_A - \mu_B$};

\draw (-2,0.1) -- (-2,-0.1) node[below] {$-\Delta$};
\draw (2,0.1) -- (2,-0.1) node[below] {$\Delta$};
\draw (0,0.1) -- (0,-0.1) node[below] {$0$};

\draw[very thick, red] (-4,0) -- (-2,0);
\draw[very thick, green!60!black] (-2,0) -- (2,0);
\draw[very thick, red] (2,0) -- (4,0);

\node[green!60!black, above] at (0,0.2) {$\Hy_1$};
\node[red, above] at (-3,0.2) {$\Hy_0$};
\node[red, above] at (3,0.2) {$\Hy_0$};

\end{tikzpicture}
\caption{Equivalence testing. ``Equivalence'' means that that $|\mu_A-\mu_B|\leq\Delta$.}
\label{fig:equivalence}
\end{figure}

\subsection{Permutation tests and related nonparametric tests}

In \S\ref{secintronptests} we indicate what nonparametric tests are and why they are useful. In the subsequent sections we give some examples of permutation tests and related tests, to give an impression of how these tests work. Then we provide somewhat general theory on permutation tests. Finally, we discuss bootstrap tests.

\subsubsection{Nonparametric models and tests} \label{secintronptests}
Earlier we mentioned the distinction between parametric and nonparametric models: parametric models involve a finite number of parameters, that completely define the possible distributions under this model. An example is a negative binomial model, which is completely define by the coefficients and the overdispersion parameter.

Nonparametric models are models that are not defined in terms of any parameters. For example, the model may be that we have i.i.d. observations $X_1,...,X_n$, with no known distributional shape.

Semiparametric models are partly defined by parameters, but not completely. An example is a model that is the same as a classical Gaussian linear model, except that we do not assume that the residuals are normally distributed or have any other known shape. That model is sometimes called the \emph{general} linear model --- not be confused with a \emph{generalized} linear model.

A nonparametric test is a test that can be applied when consider a nonparametric model. An example of a nonparametric test is a permutation test. A permutation test can also be semiparametric, in the sense that a semiparametric model is assumed. For example, there are permutation tests for testing hypotheses about coefficients in the non-Gaussian, general linear model mentioned above.

One advantage of nonparametric tests such as permutation tests is well-known: they require fewer distributional assumptions than parametric tests in order to provide reliable inference. In the context of multiple testing however, there is another major potential advantage: permutation-based (and related) multiple testing methods can be much more powerful than other multiple testing methods. The reason is that they can adapt to the dependence structure in the data, as will be discussed later.

\subsubsection{Example 1: a permutation test} \label{secexptest}
We again consider the dataset from Table  \ref{tab:cars}, which is a subset of the dataset  \emph{mtcars}. Suppose we wish to know whether the variables \emph{mpg} ($X$, say) and \emph{weight} ($Y$, say) are independent of each other, so $\Hy_{\text{indep}}: X \indep Y$. If we proceed as in \S\ref{secexcortest} and assume that the pairs $(X_1,Y_1),\ldots(X_5,Y_5)$ are i.i.d. and  multivariate Gaussian, then we can obtain a valid p-value for $\Hy_0$ by using the parametric correlation test. Indeed, if those assumptions are valid and $\Hy_{\text{indep}}$ is true, then the zero-correlation hypothesis $\Hy_0$ from 
\S\ref{secexcortest} is also true. Thus the probability of rejecting $\Hy_0$ will be at most $\alpha$ and likewise
$\pr(P\leq c)\leq c$ for every $c\in[0,1]$. 

However, suppose that the model assumed in \S\ref{secexcortest}  is not valid. More precisely, suppose that $(X_1,Y_1),\ldots(X_5,Y_5)$ are i.i.d. but not multivariate Gaussian. Then the properties of the parametric correlation test break down. We can then instead perform a \emph{permutation test}, which does not require parametric assumptions such as normality. Besides requiring fewer or no paramatric assumptions, permutation or bootstrap methods also have an additional advantage when we are testing \emph{multiple} hypotheses; namely, permutation or bootstrap methods then often have superior power (see e.g. \S\ref{secresamplingFWER}).

All we need for the permutation test to control the type I error rate, is that under $\Hy_{\text{indep}}$, the observations $Y_1,...,Y_n$ are i.i.d. conditional on $X=(X_1,...,X_n)$ (or the other way around).
The permutation test proceeds as follows. First it computes a test statistic based on the original data. A typical choice is to use the empirical correlation $r=r(X,Y)$, which is $-0.6437$ for these data.
Note that there are $n!=5!$ ways to shuffle (permute) the values $Y=(Y_1,...,Y_n)$, if we include the original ordering. The test permutes $Y$ in all possible ways, thus creating $5!-1$ ``fake'' datasets. For each of these permuted datasets, the test recomputes the test statistic $r(X,\pi(Y))$, where $\pi(Y)$ is a permuted version of $Y$ and $r(X,\pi(Y))$ is the corresponding empirical Pearson correlation. A histogram of these  test statistics is shown in Figure \ref{fig:histptest}. This is a reference distribution --- often termed a \emph{permutation distribution} --- with which we compare our original test statistic $-0.6437$. This original test statistic is somewhat in the left tail of the reference distribution, but not very extreme.

\begin{figure}[ht!]
    \centering
    \includegraphics[width=0.85\textwidth]{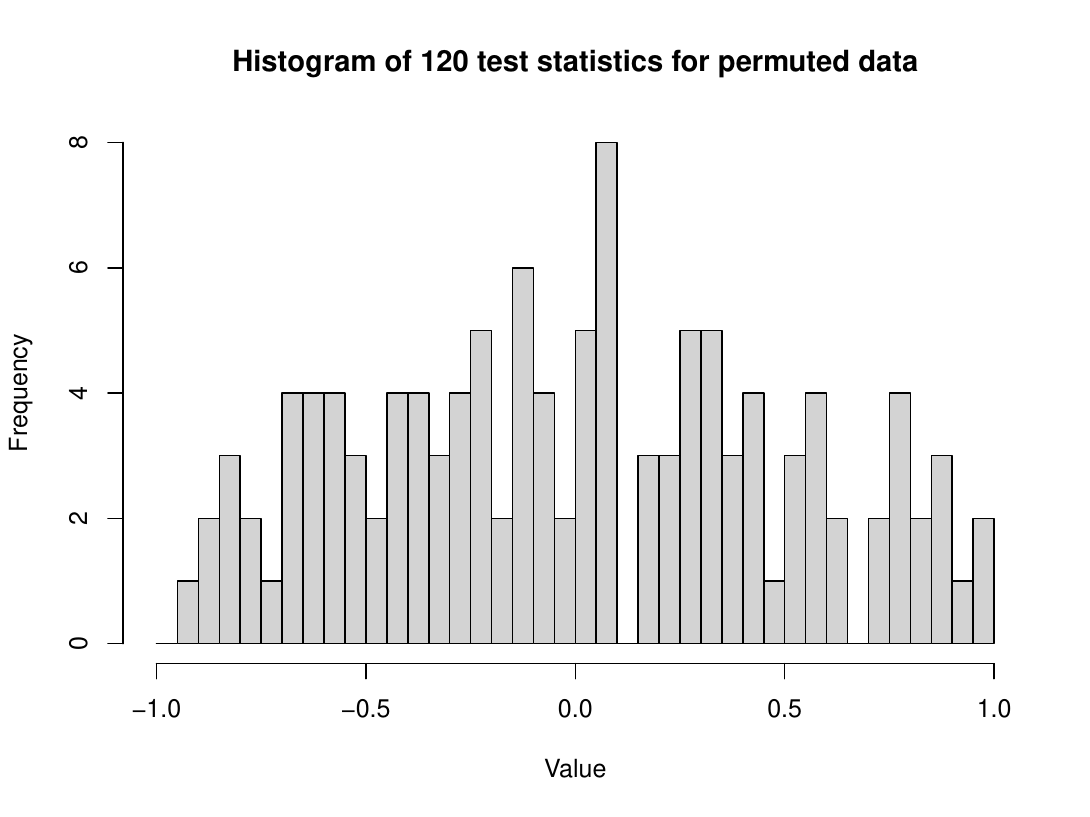}
    \caption{Histogram of $5!=120$ test statistics based on permuted versions of the dataset on car models}
    \label{fig:histptest}
\end{figure}

Note that each permutation map $\pi()$ is a function from $\reals^5\rightarrow \reals^5$. Let $\G$ be the set containing all these 120 permutation maps.
It turns out that we can compute a valid right-sided permutation p-value as 
$$ P_r := \frac{|\{\pi\in \G: r(X,\pi(Y))\geq r(X,Y)\}|}{|\G|} =\frac{107}{120}.$$
and valid left-sided permutation p-value as 
$$ P_l := \frac{|\{\pi\in \G: r(X,\pi(Y))\leq r(X,Y)\}|}{|\G|} =\frac{14}{120}.$$
A valid two-sided p-value is then 
$$P_{lr}:=2\cdot\min\{P_r,P_l\}=28/120\approx  0.2333.$$

An alternative  two-sided p-value --- which is also valid but may be somewhat different from the two-sided p-value above --- is
$$ P_{lr}':= \frac{|\{\pi\in \G: |r(X,\pi(Y))|\geq |r(X,Y)|\}|}{|\G|} =\frac{29}{120}.$$
In this example, $P_{lr}'=29/120\approx 0.2417,$ which is slightly different from $P_{lr}$. In general, these p-values are close when the permutation distribution (shown in Figure \ref{fig:histptest}) is approximately symmetric.
As usual, we reject $\Hy_{\text{indep}}$ when the p-value is at most $\alpha$. This ensures the type I error rate is at most $\alpha$.

It turns out that if the data are continuous (so that with probability 1 there are no ties among the test statistics), then both these p-values are exact. More precisely,  if $\Hy_{\text{indep}}$ is true, then $P_{lr}$ is uniformly distributed on $$\{2/120, 4/120, 6/120,...,120/120\}$$ and $P_{lr}'$ is uniform on $$\{1/120, 2/120, 3/120,...,120/120\}.$$ Note that an advantage of using $P_{lr}'$ is that its  smallest possible value is $1/120$, whereas the smallest possible value of $P_{lr}$ is 1/60.

Recall that the p-value computed with the parametric correlation test (\S\ref{secexcortest}) was $p= 0.2412$. This is close to the permutation-based p-values. This does not have to be case in general, in particular when $n$ is small and the data are not Gaussian.


\subsubsection{Example 2: a nonparametric test based on sign-flipping} \label{secmaizeflip}

In \citt[][ch. IV]{fisher1935} Fisher first proposed a certain nonparametric (or perhaps semiparametric) test based on sign-flipping. This test is somewhat related to the permutation test above. Fisher applied the test to a dataset collected by Charles Darwin. The dataset contains paired observations of heights of maize plants (crossed vs. self-fertilized plants). The first few columns of the dataset are shown in Table \ref{tab:maizedata}. For simplicity we will only use these observations as our dataset.

\begin{table}[h]
\centering
\begin{tabular}{ccc}
\hline
Crossed & Self-fert.  \\
\hline
$23\frac{1}{8}$ & $17\frac{3}{8}$ \\
12 & $20\frac{3}{8}$  \\
21 & 20 \\
22 & 20\\
$19\frac{1}{8}$ & $18\frac{3}{8}$  \\
$21\frac{4}{8}$ & $18\frac{5}{8}$  \\
\hline
\end{tabular}
\caption{Heights (inches) of crossed and self-fertilized plants.}
\label{tab:maizedata}
\end{table}

 The observations are paired, which means that the rows are independent of each other, but the two observations in each row may be dependent of each other.
 Denote the differences between the first and second observation in each row by $d=(d_1,...,d_6)$.
A classical parametric way to test the null hypothesis  $\Hy_0$ that the $d_i$ have mean 0, is to assume the $d_i$ are i.i.d. and normally distributed and to perform a t-test. However, we then require the following (if $\Hy_0$ is true):
\begin{itemize}
\item The $d_i$ are independent of each other;
\item The $d_i$ are normally distributed;
\item The $d_i$ all have the same variance.
\end{itemize}
Fisher's sign-flipping test requires fewer assumptions. It only requires the following:
\begin{itemize}
\item The $d_i$ are independent of each other;
\item The $d_i$ are symmetric about their means.
\end{itemize}
The test works as follows: we consider all $2^n=2^6$ sign-flipped versions of the data, i.e., all vectors of the form $(s_1 d_1,...,s_6 d_6)$, where $(s_1,...,s_6)\in\{-1,1\}^6$. 
Let $\G_{\pm}$ be the set of all $2^6$ transformations of the form $(x_1,...,x_6)\mapsto (s_1x_1,...,s_6x_6)$, where $(s_1,...,s_6)\in\{-1,1\}^6$. Thus, the sign-flipped versions of the data are all vectors $g(d)$, $g\in \G_{\pm}$.

For each of these  $2^n$ transformed vectors $g(d)$, we compute the sum of the entries:
$$T(g(d))=\sum_{i=1}^n (g(d))_i.$$
A histogram of the resulting sums (test statistics) is shown in Figure \ref{fig:histflipmaize}. Note that it is symmetric (why?).
To compute a p-value, we can proceed as in  \S\ref{secexptest}, with the difference that we use different transformations (sign-flips instead of permutations) and different test statistics (sum instead of empirical correlation). For example, 
a valid two-sided p-value is  
$$ P_{lr}:= \frac{|\{g\in \G_{\pm}: |T(g(d))|\geq |T(d)|\}|}{|\G_{\pm}|} =\frac{44}{64}.$$
If we would compute a p-value using a t-test, we would likely get a similar result. Indeed, there is no sign that these particular  data are very non-Gaussian or heteroscedastic.

\begin{figure}[ht!]
    \centering
    \includegraphics[width=0.85\textwidth]{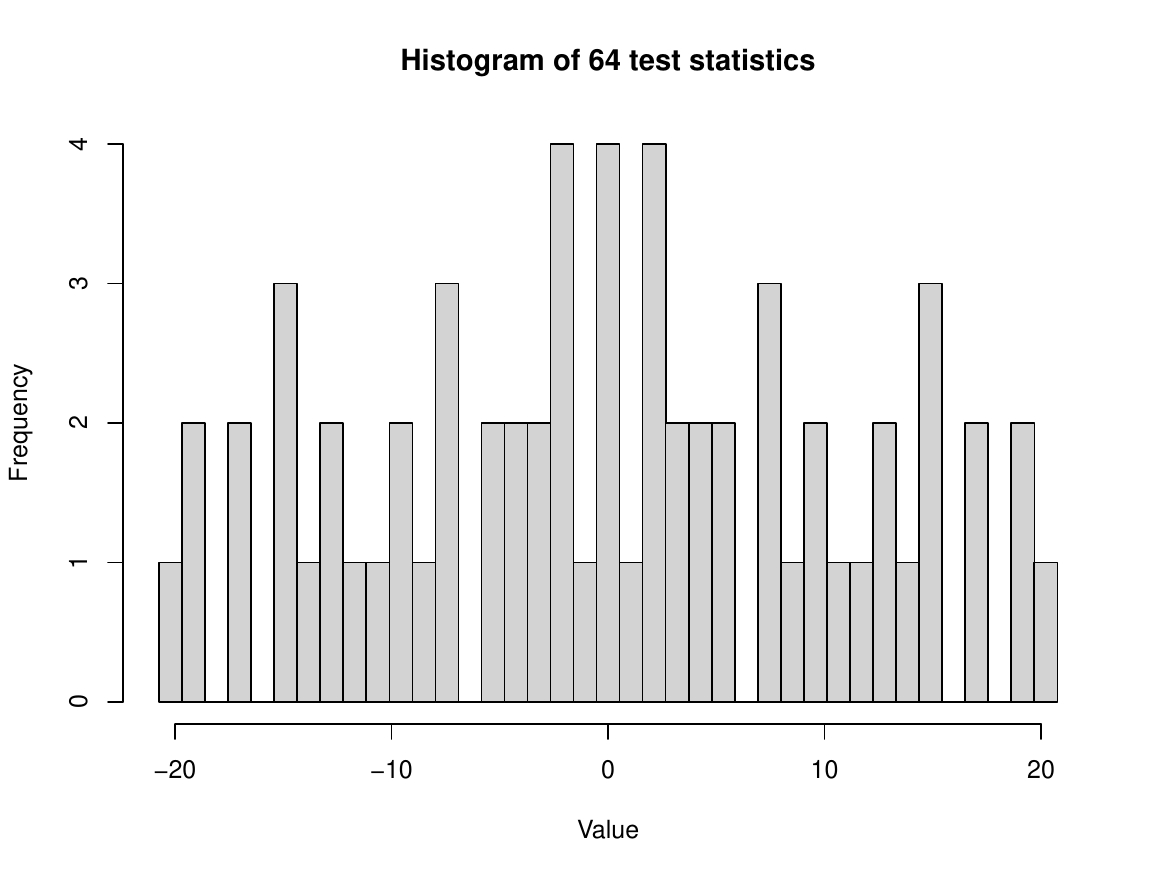}
    \caption{Histogram of $2^6=64$ test statistics based on sign-flipped versions of the dataset on maize plants.}
    \label{fig:histflipmaize}
\end{figure}

Darwin's full dataset has 15 rows. If we would apply the test to the full dataset,  this would mean applying $2^{15}=32768$ different sign-flipping transformations. In Fisher's time, that was was very time-consuming, although he managed to perform the test for this particular dataset, see \citt[][\S21]{fisher1935}. Nowadays it is easily doable with a computer. 


\subsubsection{General theory on permutation and group invariance tests} \label{secgroupinv}
In the examples above, we considered permutation maps and sign-flipping maps. It turns out that these sets of transformations are \emph{groups}, in the algebraic sense. A set $\G$ of transformations $g:\ssp\rightarrow\ssp$ from a sample space $\ssp$ to itself is called a group when it has the following properties:
\begin{enumerate}
\item every element $g\in \G$ has an inverse $g^{-1}$ in $\G$;
\item if $g$, $h\in \G$, then $g\circ h\in \G$;
\item the identity map $id$ is contained in $\ssp$;
\end{enumerate}
Here $g\circ h$, or $gh$ for short, indicates the composition of the maps $g$ and $h$, so that $g\circ h(x)=g(h(x))$. Note that the third property is in fact implied by the first two. 
An important property of groups that we will use, is that 
\begin{equation} \label{GgisG}
\forall g\in \G: \quad \G g=\G,
\end{equation}
 where $\G g$ means $\{h\circ g: h\in \G\}$.
Because tests such as those in \S\ref{secexptest} and \S\ref{secmaizeflip} involve groups of transformations, they are sometimes called \emph{group invariance tests}. Thus, a permutation test for example, is a special case of a group invariance test.

\begin{example}[Permutation maps form a group.]
In the example from \S\ref{secexptest} the sample space can be considered to be $ \reals^5\times\reals^5$. The transformations $g\in \G$ act on the data by permuting the second column. Note that the group $\G$ of $5!$ permutation maps has a group structure: every permutation map has an inverse which is also a permutation map, and if $g$, $h\in \G$, then clearly $g\circ h\in \G$.
\end{example}

\begin{example}[Sign-flipping maps form a group.]
In the example from \S\ref{secmaizeflip} the sample space was $\ssp=\reals^6$, since there are 6 observations, $d_1,...,d_6$. 
Every $g\in \G$ is of the form $(x_1,...,x_6)\mapsto (s_1x_1,...,s_6x_6)$, with $(s_1,...,s_6)\in\{-1,1\}^6$.
Note that every $g\in \G$ has an inverse $g^{-1}$ in $\G$. Indeed, note that every $g\in \G$ is its own inverse: $g=g^{-1}$. Moreover, if $g$, $h\in \G$, then clearly $g\circ h\in \G$.
\end{example}

We will call the data $X$. This could e.g. be a random vector or matrix. 
A sufficient  property that makes the tests valid (and often exact) is that \emph{under $\Hy_0$, the distribution of the data  is invariant under the transformations}, i.e., 
\begin{equation} \label{eqperminv}
\forall g\in \G: \quad X \,{\buildrel d \over =}\, gX,
\end{equation}
where $gX$ is short for $g(X)$.
In fact, it is sufficient that the following assumption (which is implied by property \eqref{eqperminv}) is satisfied. We will assume that $\G$ is finite, although this can be generalized.
\begin{assumption} \label{assgroupinv}
Under $\Hy_0$, the joint distribution of the test statistics $T(gX)$, $g\in \G$, is invariant under all transformations in $\G$ of $X$. That is, writing $\G=\{a_1,...,a_{|\G|}\}$, under $\Hy_0$
\begin{equation} \label{pinv}
\big(T(a_1X),...,T(a_{|\G|}X)\big)\,{\buildrel d \over =}\, \big(T(a_1gX),...,T(a_{|\G|}gX)\big)
\end{equation}
for all $g\in \G$.
\end{assumption}
The test statistic $T(\cdot)$ can be anything, for example a left-sided test statistic or a two-sided test statistic (e.g., the absolute value of a t-statistic). In order to have an asymptotically exact permutation test, it is not necessary for  Assumption \ref{assgroupinv} to be exactly satisfied \citp{romano1990behavior, canay2017randomization, kashlak2022asymptotic}.

In the examples in  \S\ref{secexptest} and \S\ref{secmaizeflip}, we computed a test statistic for each transformed version $gX$ of the data. In general a group invariance test considers $|\G|$ test statistics $T(gX)$, $g\in \G$ and bases the test on these test statistics. The null hypothesis $\Hy_0$ is rejected if the original statistic
$T(X)$ is extreme compared to the reference statistics $T(gX)$, $g\in \G$.
The p-value can be computed as 
\begin{equation} \label{eq:groupinvpv}
P= \frac{|\{g\in \G: T(gX)\geq T(X)\}|}{|\G|}.
\end{equation}

A basic group invariance test test rejects $\Hy_0$ when $T(X)>T^{(k)}(X)$, where 
$$ T^{(1)}(X)\leq...\leq T^{(|\G|)}(X)$$ are the sorted test statistics $T(gX)$, $g\in \G$, and $k=\lceil (1-\alpha)|\G|\rceil$, where $\lceil\cdot\rceil$ means rounding upwards to an integer. Note that the p-value \eqref{eq:groupinvpv} is at most $\alpha$ if and only if the group invariance test rejects.
 As is  stated in the following theorem, the test has level at most $\alpha$. 

\begin{theorem} \label{basic}
Suppose Assumption \ref{assgroupinv} is satisfied.
Then, under $\Hy_0$, $\mathbb{P}\big\{T(X)>T^{(k)}(X)\big\}\leq \alpha,$ i.e., the group invariance test controls the type I error rate.
\end{theorem}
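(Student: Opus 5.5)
The plan is the standard exchangeability/averaging argument: use Assumption \ref{assgroupinv} to replace $X$ by $gX$ in the rejection event, average over $g\in\G$, and then bound deterministically how many group elements can lead to rejection.

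\textbf{Step 1: invariance of the critical value.} For every $g\in\G$, the multiset of reference statistics satisfies $\{T(hgX):h\in\G\}=\{T(hX):h\in\G\}$, by property \eqref{GgisG}. Hence $T^{(k)}(gX)=T^{(k)}(X)$ pointwise, for every realization of $X$.

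\textbf{Step 2: distributional invariance.} The event $\{T(X)>T^{(k)}(X)\}$ is a measurable function of the vector $\big(T(a_1X),\dots,T(a_{|\G|}X)\big)$, since the identity lies in $\G$ and $T^{(k)}$ is an order statistic of this vector. Applying \eqref{pinv} under $\Hy_0$, this vector has the same distribution as $\big(T(a_1gX),\dots,T(a_{|\G|}gX)\big)$. The same function applied to the latter vector gives the event $\{T(gX)>T^{(k)}(gX)\}$, where the first coordinate is now $T(id\circ gX)=T(gX)$. Combined with Step 1, this yields
\[
\pr\{T(X)>T^{(k)}(X)\}=\pr\{T(gX)>T^{(k)}(X)\}\quad\text{for all }g\in\G.
\]
This is the step needing the most care. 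The function must be fixed, with the coordinate indexed by the identity playing the role of $T(X)$, and we must check that the order-statistic part is unchanged under the re-indexing. That is exactly what Step 1 guarantees.

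\textbf{Step 3: averaging and counting.} Averaging the display over $g\in\G$ gives
\[
\pr\{T(X)>T^{(k)}(X)\}=\frac{1}{|\G|}\,\mathbb{E}\Big|\{g\in\G: T(gX)>T^{(k)}(X)\}\Big|.
\]
Deterministically, at most $|\G|-k$ of the values $T(gX)$ can strictly exceed their own $k$-th order statistic. Since $k=\lceil(1-\alpha)|\G|\rceil\ge(1-\alpha)|\G|$, we have $|\G|-k\le\alpha|\G|$. Substituting this bound gives a probability of at most $\alpha$, which is the claim of Theorem \ref{basic}.
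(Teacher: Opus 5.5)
Your proof is correct and follows essentially the same route as the paper: you show $T^{(k)}$ is invariant via $\G g=\G$, replace $X$ by a transformed $gX$ using the null invariance, and average over the group. The differences are cosmetic. The paper draws a single $h$ uniformly from $\G$, whereas you fix $g$ and then average, which also avoids the paper's unstated need for $h$ to be independent of $X$. You also spell out the counting bound $|\G|-k\le\alpha|\G|$, which the paper leaves implicit.
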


\begin{proof}
By the group structure, $\G g=\G$ for all $g\in \G$. Hence $T^{(k)}(gX)=T^{(k)}(X)$ for all $g\in \G$.
Let $h$ have the uniform distribution on $\G$. Then under $\Hy_0$, the rejection probability is
$$
\begin{aligned}
&\mathbb{P}\big\{T(X)>T^{(k)}(X)\big\}=\\
&\mathbb{P}\big\{T(hX)>T^{(k)}(hX)\big\}=\\
&\mathbb{P}\big\{T(hX)>T^{(k)}(X)\big\}.
\end{aligned}
$$
The first equality follows from the null hypothesis and the second equality holds since $T^{(k)}(X)=T^{(k)}(hX)$. 
Since $h$ is uniform on $\G$, the above probability equals
$$\mathbb{E} \Big [(|\G|)^{-1} \cdot |\big \{g\in \G: T(gX)> T^{(k)}(X) \big \}| \Big ] \leq \alpha,$$ 
as was to be shown.

\end{proof}
The above proof and a different proof can be found in e.g. \citt{hemerik2018exact}. The different proof considers the \emph{orbit} of the data $X$, which is defined as $O_X=\{gX:g\in G\}$ and shows that under $\Hy_0$, conditional on the orbit $O_X$, the probability of rejecting $\Hy_0$ is at most $\alpha$.

Any proof of the validity of the group invariance test, strongly relies on the group structure of the set $\G$ of transformations. Indeed, the group invariance test is based on the property that under $\Hy_0$, for each $g\in \G$, the probability  $\mathbb{P}\big\{T(gX)>T^{(k)}(X)\big\}$ is the same. That follows from the fact that for every $g\in \G$, the joint distribution of $(gX,\G X)$ is the same. That holds because if $g, g'\in \G$, then under $\Hy_0$ we have
$$ (gX,\G X) =  (gX,\G g  X) \,{\buildrel d \over =}\,  (X,\G   X)   \,{\buildrel d \over =}\,  (g'X,\G g'  X) =(g'X,\G X),$$
where the property $\G=\G g$ holds due to the group structure.

If we use a set of transformations that is not a group, then the group invariance test can be very conservative or anti-conservative. An example is when  we use a subset (that is not a subgroup) of the group of permutation maps \citp{southworth2009properties}. \citt[][p.822]{hemerik2018exact} show however that we can validly use set of transformations that is not a group, if we modify the test by applying a certain additional random transformation. This idea is extended in \citt{ramdas2023permutation}.

In \S\ref{secexptest} and \S\ref{secmaizeflip} we saw two examples of groups of transformations, namely permutation maps and sign-flips. These are perhaps the most common examples. Another example are rotations \citp{solari2014rotation,koning2024more}. Tests based on rotations require more assumptions than tests based on sign flips, so they are used less often in practice. It  can be useful to e.g. permute multiple columns of a matrix (i.e. applying the same permutation map to every column). The transformations then also form a group.

In the example from \S\ref{secexptest}, the group $\G$ contained $n!=5!$ elements. Note that the size of the group quickly grows with the sample size. For example, if $n=10$, then $n!=3628800$ and if $n=20$, then $n!\approx 2.4329\cdot 10^{18}$. As a solution, researchers then do not use the full group, but use a random collection of transformations from $\G$ that are uniformly sampled. The sampling can be done with or without replacement. For example, if we sample with replacement, we can proceed as follows: let $M$ be large number (e.g. 1000) and sample $M-1$ independent, random (uniformly distributed) transformations from $\G$, then add the identity map to this collection of transformations.
Then apply the group invariance test as usual, but using these $M$ transformations rather than the full group $\G$.
If we sample the random transformations in this way, then (on average) the group invariance test still controls the type I error rate \citp{hemerik2018exact}. Tests based on random permutations or other transformations, are usually considered to be  part of the class of \emph{resampling methods}, just like bootstrap tests, which are discussed in \S\ref{secboot}.

\subsubsection{Permutation tests in more complex models} \label{secpermcompl}

Consider a linear model with responses $Y\in \reals^n$ and covariates $X^1,Z^1,Z^2,...,Z^p\in\reals^n$. The responses satisfy $Y= X \beta +Z\gamma +\epsilon$, where $Z=(Z^1,...,Z^p)$ and the residuals $\epsilon_i$ are independent with mean 0. Suppose we are interested in testing  $\Hy_0:\beta=0$.  If the classical Gaussian model holds, we can test $\Hy_0$ in an exact way. Suppose however that the residuals are not Gaussian or are not homoscedastic. Then a permutation-type test can be a good alternative, in the sense that it may have type I error rate closer to $\alpha$ and may have more robust power. A well-known permutation method in this situation is the Freedman-Lane method (\citealp{freedman1983nonstochastic}, and see \citealp{winkler2014permutation} for a review of related methods). If there are multiple response variable that we are interested in, then besides robustness, a potential advantage of Freedman-Lane is that it can be combined with powerful resampling-based multiple testing methods (see \S\S\ref{secresamplingFWER}, \ref{secRW}, \ref{secribo}). This means that if the residuals are perfectly Gaussian and homoscedastic, it can nevertheless be useful to apply  Freedman-Lane, since it can be combined with those powerful methods.

The Freedman-Lane method is defined as follows.
\begin{enumerate}
\item Compute some (pivotal) test statistic $T^1$ for testing $\Hy_0$, e.g. the (absolute value of the) Wald statisic $\hat{\beta}/se(\hat{\beta})$.
\item Regress $Y$ on $Z$, obtaining $\hat{Y}= Z(Z'Z)^{-1}Z'Y$ and residuals $e= Y-\hat{Y}$.
\item (Randomly) permute (or sign-flip) the residuals, obtaining new responses $Y^*=\hat{Y}+g(e)$, where $g(e)$ are the permuted or sign-flipped residuals. Compute a test statistic $T^2$ in the same way as before, but using the changed responses $Y^*$.
\item Repeat the previous step many times, obtaining test statistics $T^3,...,T^w$.
\item The (right-sided) p-value is $P= w^{-1}|\{1\leq j \leq w: T^j\geq T^1\}|$.
\end{enumerate}
The intuition behind Freedman-Lane is as follows. Note that if $\Hy_0$ is true  and the residuals $\epsilon_i$ are i.i.d., then $Y^*$ has roughly the same distribution as $Y$, so $T^2, T^3,...$ have roughly the same distribution as $T^1$, so that $P$ is roughly uniformly distributed on $\{1/w,2/w,...,w/w\}$. If $\Hy_0$ is false, then $T^1$ tends to be large, but after permuting the residuals, the conditional dependence between $Y^*$ and $X$ will be small, so $T^2, T^3,...$ do not tend to be large. (For more theory, see \citealp{anderson2001permutation}.)

Permutation-type tests for \emph{generalized} linear models are available as well, although the literature is small. See \citt{potter2005permutation} for a permutation test for logistic regression.  See \citt{hemerik2020robust} and \citt{de2025inference} for sign-flipping-based tests for generalized linear models. These are implemented in the R package \emph{flipscores} on CRAN. \citt{de2025permutation} illustrate how the tests from the latter two papers can be combined with the resampling-based multiple testing method \emph{maxT} (defined in \S\ref{secresamplingFWER}).

\subsubsection{Bootstrap tests} \label{secboot}
The term ``bootstrap method'' is often used in contexts where confidence intervals are constructed based on resampling with replacement from the data. By \emph{bootstrap testing}, however, we mean a type of \emph{test}. 
Sometimes such a bootstrap test is simply equivalent to making a bootstrap confidence interval and checking whether $\theta_0$ (e.g. 0) is within the confidence interval.
However, when we combine bootstrap tests with multiple testing method such as maxT and Romano-Wolf (discussed) then we really need a bootstrap test in the strict sense, which does not involve making a confidence interval.
Like bootstrapping for making confidence intervals, bootstrap tests are also based on sampling with replacement (except so-called ``wild bootstrap tests'').

Like permutation testing, 
bootstrap testing is manner of nonparametric or semiparametric testing. Bootstrap tests are related to permutation tests in the sense that very roughly speaking, both approaches repeatedly ``simulate samples from the null distribution''. This description is indeed very rough; for example, usually there is no such thing as \emph{the} null distribution, since the null hypothesis typically contains many distributions.

The class of \emph{bootstrap tests} is quite varied, and there is no simple overarching theory. Some bootstrap tests are so-called \emph{wild bootstrap tests} \citp{davidson2008wild,dikta2021bootstrap}. These are not bootstrap tests in the strictest possible sense, but are  related to the permutation test for linear models which was discussed in \S\ref{secpermcompl}. Below, we focus on ``non-wild'' bootstrap tests.

The differences between permutation tests (or, more generally, group invariance tests) and (non-wild) bootstrap tests are that, unlike permutation tests,
\begin{itemize}
\item bootstrap tests start with adjusting the data by removing the signal;
\item  bootstrap tests sample with replacement from this adjusted dataset; 
\item  bootstrap tests do not e.g. permute the data.
\end{itemize}
Those are differences between the two types of tests if we see them purely as \emph{procedures}. Another important difference lies in the types of hypotheses that can be tested: (non-wild) bootstrap tests are strictly speaking not used for testing a hypothesis of \emph{exchangeability}, but for testing a hypothesis about a \emph{parameter}. For example, if our data are $(X_1,...,X_n)$ and we want to test that $X_1,...,X_n$ are i.i.d. (i.e. exchangeable) then we would use a permutation test. On the other hand, if we assume that $X_1,...,X_{n/2}$ have mean $\mu_1$ and $X_{n/2+1},...,X_{n}$ have mean $\mu_2$ and want to test $\Hy_0:\mu_1=\mu_2$ then in some cases we might prefer a bootstrap test over a permutation test. Another important difference is that bootstrap tests are only asymptotically exact (i.e., the level converges to $\alpha$ if $n\rightarrow\infty$), while permutation tests are sometimes  exact for finite samples (as shown in \S\ref{secgroupinv}).

Denote the data by $X$ and suppose the null hypothesis is of the form $\Hy_0:\theta=\theta_0$. Generally speaking, a (non-wild) bootstrap test, based on $b$ bootstrap samples, proceeds as follows:
\begin{enumerate}
\item Compute a test statistic $T(X)$, extreme  values of which are evidence against $\Hy_0$.
\item Modify the data such that the distribution of the modified data $X^*$ ``falls under $\Hy_0$''.
\item For all $1\leq j \leq b$, do the following. Create a dataset $X^j$ by sampling with replacement from $X^*$, and compute $T(X^j)$.
\item Reject $\Hy_0$ if the original test statistic $T(X)$ is extreme compared to the reference values $T(X^1),...,T(X^b)$.
\end{enumerate}
In the above, ``extreme'' typically means one of the following things:
\begin{itemize} 
\item $T(X)$ exceeds the $(1-\alpha)100\%$-quantile of the reference distribution;
\item $T(X)$ lies below the $(\alpha/2)100\%$-quantile or above the $(1-\alpha/2)100\%$-quantile of the reference distribution.
\end{itemize}
As mentioned, the level of a bootstrap test is not guaranteed to be exactly $\alpha$, but in various cases the level is $\alpha$ asymptotically. Finally, note that we can also compute a p-value, by computing the number of reference statistics that are more extreme than the original statistic $T(X)$ and dividing by $b$. This p-value can be 0. We can avoid that by including the original statistic among the reference statistics. We now give two examples of bootstrap tests.
\\
\\
\textbf{One-sample bootstrap test.}
Consider an i.i.d. sample $X_1,...,X_n$, denote the mean of the observations by $\mu$, and consider $\Hy_0:\mu=\mu_0$. Permuting the data is not going to help us in this case. A test based on sign-flipping (see \S\ref{secmaizeflip}) can be a good choice, although is only exact if the variables have  symmetric distributions.
Another approach is to use the following bootstrap test.
\begin{enumerate}
\item Let $T(X)$ be the usual one-sample t-statistic corresponding to  $\Hy_0:\mu=\mu_0$.
\item Let $\hat{\mu}$ be the sample mean $n^{-1}\sum_{i=1}^n X_i$.
\item Construct the  modified dataset $X^*:=(X_1,...,X_n) - \hat{\mu} + \mu_0$.
\item For all $1\leq j \leq b$, do the following. Consider a bootstrap sample $X^j$ by sampling with replacement from $X^*$. Compute the one-sample t-statistic $T(X^j)$ corresponding to the hypothesis that the mean is $\mu_0$.
\item Reject $\Hy_0$ if the original test statistic $T(X)$ is extreme compared to the reference values $T(X^1),...,T(X^b)$.
\end{enumerate} \phantom{.}


A different but equivalent algorithm is the following.
\begin{enumerate}
\item Let $T(X)$ be the usual one-sample t-statistic corresponding to  $\Hy_0:\mu=\mu_0$.
\item Let $\hat{\mu}$ be the sample mean $n^{-1}\sum_{i=1}^n X_i$.
\item For all $1\leq j \leq b$, do the following. Consider a bootstrap sample $X^j$ by sampling with replacement from $X$. Compute the one-sample t-statistic $T(X^j)$ corresponding to the hypothesis that the mean is $\hat{\mu}$.
\item Reject $\Hy_0$ if the original test statistic $T(X)$ is extreme compared to the reference values $T(X^1),...,T(X^b)$.
\end{enumerate} \phantom{.}

To see that the two algorithms are equivalent, note that resampling from $X^*$ and computing a test statistic  corresponding to the hypothesis that the mean is $\mu_0$, is equivalent to resampling directly from $X$ and computing a test statistic  corresponding to the hypothesis that the mean is $\hat{\mu}$.

Even if $X_1,...,X_n$ are nicely i.i.d., this bootstrap test is only asymptotically exact. If $X_1,...,X_n$ are not i.d.d. but have different variances or shapes, then this bootstrap test can in some cases still be asymptotically exact.
\\
\\
\textbf{Two-sample bootstrap test.}
Consider two i.i.d. samples $Y_1,...,Y_{n_1}$ and $Z_1,...,Z_{n_2}$, with means  $\mu_1$ and $\mu_2$ respectively. Write $X=(Y_1,...,Y_{n_1}, Z_1,...,Z_{n_2})$. Consider the null hypothesis $\Hy_0: \mu_1=\mu_2$.
It is well known that under certain assumptions, the two-sample t-test will be exact. If e.g. the normality assumption does not hold, we could consider a permutation test. For such a test to be exact, we would need that all $n_1+n_2$ variables are identically distributed under $\Hy_0$. If the two groups have different variances, then sometimes a permutation test based on  Welch statistics \citp{janssen1997studentized} is more reliable.
In general, if the two groups do not have the same distribution under $\Hy_0$, the following bootstrap test may also be a good choice. 
\begin{enumerate}
\item Let $T(X)$ be the two-sample Welch  t-statistic.
\item Compute the two sample means $\hat{\mu}_1$ and $\hat{\mu}_2$.
\item Let $Y^*=Y -\hat{\mu}_1$ and $Z^* = Z- \hat{\mu}_2$.
\item For all $1\leq j \leq b$, do the following. Take a bootstrap sample $Y^j$ (of length $n_1$) by sampling with replacement from $Y^*$, and  take a bootstrap sample $Z^j$ (of length $n_2$) by sampling with replacement from $Z^*$. Write $X^j=(Y^j, Z^j)$. Compute the the two-sample Welch  t-statistic $T(X^j)$.
\item Reject $\Hy_0$ if the original test statistic $T(X)$ is extreme compared to the reference values $T(X^1),...,T(X^b)$.
\end{enumerate} \phantom{.}
The above version of the two-sample bootstrap test samples the bootstrap samples $Y^j$ and $Z^j$ independently, and uses the Welch t-statistic. Such an approach
is recommended in particular when the two samples cannot be assumed to have identical distributions under $\Hy_0$. For example,  sometimes the  variables in the second sample  have different variances than the variables in the first sample.

\subsection{Exercises}
\begin{enumerate}
\item 
(a) Consider a valid p-value $P$ for a hypothesis $\Hy_0$. As usual, we reject $\Hy_0$ when $P\leq \alpha$. This generally ensures that the type I error probability stays below $\alpha$. Does this still generally hold is we choose $\alpha$ based on the data?

(b) Related to that, can you mention an advantage of the fact that in several scientific fields, researchers all use  $\alpha=0.05$ by default?

\item Using the fact that $\G$ is a group, prove that property \eqref{GgisG} indeed holds.
\item At the end of \S\ref{secgroupinv}, it is explained that if we use (i.i.d. uniform) random transformations in a group invariance test, we should always add the identity map. Explain that if we do not do that, then the p-value of the group invariance test may potentially be 0 under $\Hy_0$. Explain why that would mean that the p-value is not valid.
\item Consider the one-sample bootstrap test from \S\ref{secboot}. Suppose that the observed values $x_1,...,x_n$ are all distinct. Suppose that the we use $b$ bootstrap samples $X^1,...,X^b$.  We compute the p-value  by taking the number of reference statistics that are larger than or equal to the original statistic $T(X)$ and dividing by $b$. What is the smallest possible p-value that we may get? And if we take the first bootstrap sample $X^1$ to be the original data $X$?

\item $\bigstar$ Consider the one-sample bootstrap test from \S\ref{secboot}. Suppose that the observed values $x_1,...,x_n$ are all distinct. Recall that a \emph{bootstrap sample} is a collection $\{x_1',...,x_n'\}$ such that   $x_i'\in\{x_1,...,x_n\}$ for every $1\leq x_i' \leq n$. Compute (in terms of $n$) the number of different possible bootstrap samples.
To get the answer, make use of  the following fact from combinatorics: the number of sequences $(b_1,...,b_n)$ such that the $b_i$ are nonnegative integers and $b_1+...+b_n=n$, equals  $\binom{2n-1}{n}.$ 

\item In recent years, \emph{e-values} have been a topic of interest in the literature on hypothesis testing. An e-value for a null hypothesis $\Hy_0$ is defined as a nonnegative random variable $e$ that satisfies $\mathbb{E}(e)\leq 1$ under $\Hy_0$. 
In the context of e-values, one rejects $\Hy_0$ if $e\geq \alpha^{-1}$. This guarantees that $\pr(\text{reject})\leq\alpha$ under $\Hy_0$. Prove that.

Further, prove that for every e-value $e$ that takes values in $(0,\infty)$,  the variable  $P:=\min\{e^{-1},1\}$ is a valid p-value.

(e-values have certain attractive properties: they allow for online testing with optional stopping, and allow  $\alpha$ to be chosen after seeing the data while still providing error control in a particular sense.)



\end{enumerate}

\begin{sols}
\subsection{Solutions}
\begin{enumerate}

\item 

(a) No.  Suppose for example that we choose $\alpha$ based on the p-value, namely $\alpha:=P$. Then we always reject $\Hy_0$. Suppose $\Hy_0$ is true. Then clearly, conditional on the $\alpha$ we have selected, the type I error probability is 1, which is larger than $\alpha$ (if $\alpha<1$). (\citealp{hemerik2026choosing} further discusses consequences of choosing $\alpha$ based on the data.)

(b) If everyone in a field uses $\alpha=0.05$, an advantage is that it is clear that researchers in this field do not choose $\alpha$ based on the data.

\item Let $g\in\G$. We will show that $\G\subseteq \G g$ and $\G g \subseteq \G$.   

To show that $\G\subseteq \G g$, we show that if $h\in \G$ then $h\in \G g$.
Thus, we must show that $h$ is of the form $ag$, with $a\in \G$. This is indeed true, since we can take $a=hg^{-1}$ (which is contained in $\G$ since $h$ and $g^{-1}$ are in $\G$ and $\G$ is a group) and then $h=ag$.

Next, we must show that $\G g \subseteq \G$. This follows from the fact if $a\in \G$, then $ag\in \G$, by definition of a group.

\item
The p-value for the group invariance test is \eqref{eq:groupinvpv}. 
If we use random transformations from $\G$, the formula is the same, except that we replace $\G$ by a random, smaller collection of transformations from $\G$.
Note that if we use random transformations but add the identity map, then it is guaranteed that the numerator in the formula of the p-value is at least 1, since $T(id(X))\geq T(X)$. Otherwise the numerator may be 0, since all sampled random transformations $g$ might satisfy $T(gX)< T(X)$.

If a p-value has a strictly positive probability of being 0 under $\Hy_0$, then it does not satisfy $\pr_{\Hy_0}(P\leq c)\leq c$ for all $c\in[0,1]$, so it is not valid.
(A test based on such a p-value does not control the type I error rate for all $\alpha\in(0,1)$.)

\item The smallest possible p-value is 0, since it may happen that $T(X)$ is more extreme than the reference statistics. The answer to the second question is $1/b$, since we always have $T(X^1)\geq T(X)$ in the second case.

\item
For each $x_i'$ there are $n$ possibilities, so if we would pay attention to the order of $x_1',...,x_n'$, then there would be $n^n$ possibilities. However,  we only care about the set $\{x_1',...,x_n'\}$  and not about the order of $x_1',...,x_n'$. This might suggest that the answer is $n^n/n!$. However, then we would ignore the fact that there are usually ties among $x_1',...,x_n'.$ For example, in the extreme case that $x_1'=...=x_n'$, there are not $n!$ ordered samples corresponding to that sample, but only one.

What we really need to compute, is the number of \emph{multisets} of size $n$ from a collection $\A$ of $n$ elements. A multiset is a set that can contain elements more than once.
Such a multiset can be represented by a vector $(b_1,...,b_n)$, where $b_i$ indicates the number of times that the multiset contains the $n$-th element of $\A$. Note that such a vector satisfies $b_1+...+b_n=n$. Hence, we can use the given fact, which means that the answer is $\binom{2n-1}{n}.$ 

(The given fact follows from a so-called stars-and-bars reasoning:
imagine we draw $2n-1$ bars in a row (e.g. $|||||||$ in case $2n-1=7$.). Now turn $n$ of the bars into stars. There are $\binom{2n-1}{n}$ possible ways to do this.  We then have $n$ stars among $n-1$ bars, and each such pattern corresponds 1-to-1 to a way of putting $n$ (identical) things into $n$ bins; indeed, we can see the $n-1$ bars as the borders between the bins. Clearly, the number of possible ways of putting $n$ marbles into $n$ bins equals the number of sequences $(b_1,...,b_n)$ such that the $b_i$ are nonnegative integers and $b_1+...+b_n=n$.)

\item 
We first prove the first claim.
Suppose $\Hy_0$ holds.
By Markov's inequality, 
$\mathbb{E}(e)\geq \alpha^{-1}\pr(e\geq \alpha^{-1})$, which means that $\pr(\text{reject})\leq \alpha \mathbb{E}(e)\leq \alpha$.

We now prove the second claim. Consider  a valid e-value $e$ and   $P:=1\wedge e^{-1}$.
Let $c\in(0,1]$ and suppose $\Hy_0$ is true. We then have $\pr(P\leq c) = \pr(e\geq c^{-1}) $. By Markov's inequality, $\mathbb{E}(e)\geq c^{-1}\pr(e\geq c^{-1})$, and it follows that $\pr(P\leq c)\leq \mathbb{E}(e) c\leq c$. (Clearly, $\pr(P\leq c)\leq c$ also holds for $c=0$.) Thus, $P$ is a valid p-value.

\end{enumerate}

\end{sols}

\newpage

\section{Multiple testing: basic concepts and methods} \label{secmtbasics}
When we test a single hypothesis, we want to keep the probability of a false positive (type I error) small.
When we test many hypotheses, we likewise need to be careful not to incur a lot of false positives. This can be done by using a multiple testing method. This section discusses basics from multiple testing theory. Many ideas in this section will be built upon in later sections.

\subsection{Motivation for multiple testing methods} \label{motiv}
Suppose we wish to test $m>1$ hypotheses $\Hy_1,...,\Hy_m$ and we  compute corresponding p-values $P_1,...,P_m$. As a simple example, assume for now that all hypotheses are true and all p-values are independent of each other and uniformly distributed on $[0,1]$. Naively we would simply compare every p-value with $\alpha$ and reject all hypotheses with p-value below $\alpha$. However, the probability of one or more false findings is then not below $\alpha$ anymore. Indeed, we then have
$$\pr(\text{one or more false positives})= 1-\pr(\text{no false positives})$$
$$1-\prod_{1\leq i \leq m}\pr(\Hy_i \text{ not rejected}) = 1-(1-\alpha)^m,$$
which is larger than $\alpha$. Indeed, it is close to 1 if $m$ is large. 
Here we assumed that the p-values are independent of each other, but if they are dependent, then the probability of false positives will usually also be too high. If we would ignore this issue, then testing many hypotheses would mean that we have a high chance of rejecting one or more hypotheses, even if all hypotheses are true. This would flood the scientific literature, companies and institutions with false discoveries. To avoid this, we need multiple testing methods. Some such methods are also known as ``multiple testing corrections''.

\subsection{Global tests} \label{secglobaltests}
We start with \emph{global testing}, which can be seen as a simple (often too simple) approach to multiple testing, which is nevertheless insightful and serves as an important fundament for later theory. Indeed, global tests will be an essential concept in e.g. \S\ref{secCT} and \S\ref{secCTPfdp}.

A global test is a test of the so-called \emph{global null hypothesis}, which is the hypothesis that all our $m$ hypotheses $\Hy_1,...,\Hy_m$ are true. Thus, the global hypothesis, which we will denote by $\Hy_{\{1,...,m\}}$,  is the \emph{intersection} of our $m$ hypotheses:
$$\Hy_{\{1,...,m\}}= \bigcap_{1\leq i \leq m}\Hy_i.$$
To understand the above notation, remember that a hypothesis is a \emph{set} (of distributions), so the global null hypothesis is the intersection of sets. The global null hypothesis is the set of all distributions that are contained in each of the sets $\Hy_1,...,\Hy_m$. Thus, the global null hypothesis is smaller (or at least not larger) than each of the individual hypotheses, i.e., for every $1\leq i \leq m$ we have $\Hy_{\{1,...,m\}}\subseteq \Hy_i$. A global test ensures that the probability of rejecting $\Hy_{\{1,...,m\}}$ is at most $\alpha$ if $\Hy_{\{1,...,m\}}$ is true.
One could debate whether a global test deserves the title ``multiple testing method'' or not. Indeed, a global test allows saying something about a family of hypotheses, but strictly speaking it tests a single hypothesis.

In a sense, the global null hypothesis is always easier to reject than any of the individual hypotheses $\Hy_1,...,\Hy_m$. Indeed, if any of the hypotheses  $\Hy_1,...,\Hy_m$ is false, then that logically implies that $\Hy_{\{1,...,m\}}$ is also false. This means that in some applications, we will be able to reject $\Hy_{\{1,...,m\}}$, even if we are not able to reject any individual hypotheses. 

\subsubsection{Global test by Bonferroni} \label{secbonglobal}
The name \emph{Bonferroni} is well known because of the \emph{Bonferroni multiple testing correction} discussed later. However, the idea underlying Bonferroni's method can also be used to define a global test. Bonferroni's global test is defined as follows. Compute p-values $P_1,...,P_m$, corresponding to the hypotheses $\Hy_1,...,\Hy_m$. We will need that each p-value $P_i$ is valid, in the sense that  if $\Hy_i$ is true, then $P_i$ is uniformly distributed on $[0,1]$ --- or stochastically larger than uniform. We define $\N$ to be the set of the indices of the true hypotheses:

$$\N = \{1\leq i \leq m: \Hy_i \text{ is true}\}.$$
The p-values corresponding to $\N$ are sometimes called the \emph{null p-values}.

\begin{assumption} \label{assvalidp}
The $m$ p-values are valid, i.e., for all $i\in \N$ we have for all $c\in [0,1]$
$$\pr(P_i\leq c)\leq c.$$
\end{assumption}

We make this assumption \emph{throughout this text}, unless stated otherwise. Thus, by ``p-value'' we will mean ``valid p-value''.
Note that Assumption \ref{assvalidp} does not contain any assumption about the dependence structure of the p-values. Some methods will require such assumptions, but not Bonferroni.

The Bonferroni global test rejects the global hypothesis $\Hy_{\{1,...,m\}}$ if and only if at least one of the p-values is smaller than or equal to $\alpha/m$, i.e., if and only if
$$\min\{P_i: 1\leq i \leq m\}\leq \alpha/ m.$$

\begin{proposition} \label{propBonferroniglobal}
The Bonferroni global test has type I error rate at most $\alpha$. In other words, if $\Hy_{\{1,...,m\}}$ is true, then  
$$\pr\Big(\min\{P_i: 1\leq i \leq m\}\leq \alpha m\Big)\leq \alpha.$$
\end{proposition}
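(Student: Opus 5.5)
The plan is a direct union bound argument, using only Assumption \ref{assvalidp} and no assumption on the dependence between the p-values. Note first that the display in the statement should read $\alpha/m$ rather than $\alpha m$, since the Bonferroni global test rejects when $\min\{P_i: 1\leq i \leq m\}\leq \alpha/m$. With threshold $\alpha m$ the claim is false in general: for $m\geq 2$ and $\alpha \geq 1/m$ the bound would be trivial, and for independent uniforms the probability is about $1-(1-\alpha m)^m$, which exceeds $\alpha$. I will therefore prove the bound for the event $\min_i P_i\leq \alpha/m$, which is the rejection event defined just before the proposition.

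The first step is to use that $\Hy_{\{1,...,m\}}$ is true. Since $\Hy_{\{1,...,m\}}=\bigcap_{i}\Hy_i$, the true distribution then lies in every $\Hy_i$, so $\N=\{1,...,m\}$. Assumption \ref{assvalidp} therefore applies to every index $i$, giving $\pr(P_i\leq c)\leq c$ for all $c\in[0,1]$ and all $1\leq i\leq m$.

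The second step is to rewrite the rejection event as a union: $\{\min_i P_i\leq \alpha/m\}=\bigcup_{i=1}^m\{P_i\leq \alpha/m\}$. By subadditivity of probability (Boole's inequality), which needs no independence,
$$\pr\Big(\min_{1\leq i\leq m} P_i\leq \alpha/m\Big)\leq \sum_{i=1}^m \pr(P_i\leq \alpha/m)\leq \sum_{i=1}^m \frac{\alpha}{m}=\alpha,$$
where the last inequality applies validity with $c=\alpha/m\in[0,1]$.

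There is no real obstacle here. The only points needing care are noticing that under the global null every index is in $\N$, so validity holds for all $m$ p-values, and observing that the union bound holds under arbitrary dependence. That is exactly why Bonferroni's global test requires no dependence assumptions.
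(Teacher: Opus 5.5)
Your proof is correct and is essentially the paper's argument: under the global null every $P_i$ is valid, the rejection event is $\bigcup_{i=1}^m\{P_i\leq\alpha/m\}$, and Boole's inequality gives the bound $m\cdot\alpha/m=\alpha$ without any dependence assumption. You are also right that the displayed threshold $\alpha m$ is a typo for $\alpha/m$, and the paper's own proof works with $\alpha/m$.
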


\begin{proof}
Suppose $\Hy_{\{1,...,m\}}$ is true. This means that all $m$ individual hypotheses are true. Hence 
$$\pr\Big(\min\{P_i: 1\leq i \leq m\}\leq \alpha /m\Big) =$$
$$  \pr\Big(P_i\leq \alpha/ m \text{ for some } 1\leq i \leq m\Big)=$$
$$  \pr\Big(\bigcup_{1\leq i \leq m}    \{P_i\leq \alpha /m \}\Big).$$
By Boole's inequality, the probability of a union of events is at most the sum of the probabilities of the events. Hence, the above is at most
$$\sum_{i=1}^m  \pr(P_i\leq \alpha /m) \leq m\cdot \alpha /m =\alpha.$$

\end{proof}

Later we will discuss Bonferroni's multiple testing procedure. There is a simple relationship between that procedure and Bonferroni's global test: Bonferroni's global test rejects $\Hy_{\{1,...,m\}}$ if and only if Bonferroni's multiple testing method rejects at least one of the hypotheses $\Hy_1,...,\Hy_m$.

\subsubsection{\v{S}id\'{a}k's global test}

\v{S}id\'{a}k's global test assumes that  the null p-values are independent of each other. More precisely, it is sufficient that under $\Hy_{\{1,...,m\}}$, all p-values $P_{1},..., P_{m}$ are independent of each other.

Let $c\in[0,1]$. 
In Section \ref{motiv} we discussed that \emph{if} all p-values are independent and standard uniform, 
then 
$$\pr(\exists 1\leq i \leq m: P_i\leq c)=$$
$$ 1-\pr(\forall 1\leq i \leq m: P_i> c)=$$
$$1-(1-c)^m.$$
It follows that  that if we choose $c$ such that $1-(1-c)^m=\alpha$, then we have an exact global test if we reject $\Hy_{\{1,...,m\}}$ if and only if $\min\{P_i: 1\leq i \leq m\} \leq c.$ It turns out that this value $c$ is $1-(1-\alpha)^{1/m}$.

\begin{theorem} \label{thmsidakglobal}
Suppose the null p-values 
are independent of each other, or at least that under $\Hy_{\{1,...,m\}}$, all p-values $P_{1},..., P_{m}$ are independent of each other. Consider \v{S}id\'{a}k's global test, i.e., reject  $\Hy_{\{1,...,m\}}$ if and only if 
$$\min\{P_i: 1\leq i \leq m\} \leq 1-(1-\alpha)^{1/m}.$$ Then the rejection probability is at most $\alpha$ under $\Hy_{\{1,...,m\}}$. The rejection probability is exactly $\alpha$ if we additionally assume that the null p-values are not only valid, but exactly  standard uniform.
\end{theorem}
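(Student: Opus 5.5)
Write $c_\alpha := 1-(1-\alpha)^{1/m}$. The plan is to pass to the complement event, factor it by independence, and then bound each factor using the validity of the p-values (Assumption \ref{assvalidp}).

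Suppose $\Hy_{\{1,...,m\}}$ is true, so every index lies in $\N$ and $P_1,...,P_m$ are independent. Then
$$\pr\Big(\min\{P_i: 1\leq i\leq m\}\leq c_\alpha\Big) = 1-\pr\big(\forall 1\leq i\leq m: P_i> c_\alpha\big) = 1-\prod_{i=1}^m \pr(P_i> c_\alpha),$$
where the last equality uses independence of the events $\{P_i>c_\alpha\}$. Note that $c_\alpha\in[0,1]$ because $\alpha\in(0,1)$.

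For each factor, validity gives $\pr(P_i\leq c_\alpha)\leq c_\alpha$, and hence $\pr(P_i>c_\alpha)\geq 1-c_\alpha=(1-\alpha)^{1/m}$. All factors are nonnegative, so the product is at least $\big((1-\alpha)^{1/m}\big)^m = 1-\alpha$. The rejection probability is therefore at most $1-(1-\alpha)=\alpha$.

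For exactness, assume additionally that each null $P_i$ is exactly standard uniform. Then every inequality above is an equality: $\pr(P_i>c_\alpha)=1-c_\alpha$ exactly, so the product equals $1-\alpha$ and the rejection probability equals $\alpha$.

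There is no real obstacle here. The only points needing care are checking that $c_\alpha\in[0,1]$, so that validity applies, and noting that the monotonicity of the product relies on all factors being nonnegative. The solution $c_\alpha$ of $1-(1-c)^m=\alpha$ is obtained directly by rearranging.
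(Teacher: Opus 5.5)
Your proof is correct and follows essentially the same route as the paper: pass to the complement event $\{\forall i: P_i > c\}$, factor it by independence, and use $1-(1-c)^m=\alpha$ for $c=1-(1-\alpha)^{1/m}$. The only difference is that the paper proves exactness for uniform p-values and then says it is ``clear'' that stochastically larger p-values can only lower the rejection probability, whereas your factor-wise bound $\pr(P_i>c)\geq 1-c$ spells that step out.
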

\begin{proof}
Note that  if we take $c=1-(1-\alpha)^{1/m}$, then $$1-(1-c)^m    =$$
$$ 1-(1-   (1-(1-\alpha)^{1/m})   )^m  = $$
 $$ 1-((1-\alpha)^{1/m})^m =\alpha.$$

We saw that for this  $c$, if all p-values are independent and standard uniform, we have 
$$\pr(\text{reject } \Hy_{\{1,...,m\}})=$$
$$\pr(\exists 1\leq i \leq m: P_i\leq c)= \alpha,$$
so the test is exact. Clearly, when some of the p-values are stochastically larger than standard uniform, the rejection probability is at most $\alpha$.
\end{proof}

The advantage of \v{S}id\'{a}k's global test compared to Bonferroni is that it is more powerful, since $1-(1-\alpha)^{1/m} > \alpha/m$. However, a major caveat is that it is not valid for all possible dependence structure of the p-values. In some cases it is reasonable to assume that the p-values are independent, e.g. when the $m$ p-values are computed based on $m$ independent datasets. In many situations however, there is some dependence between the p-values and then \v{S}id\'{a}k's global test may be too liberal. Roughly speaking, this can happen when there are negative correlations between the p-values.

When there are dependencies between the p-values, \v{S}id\'{a}k's global test \emph{may} still be valid. However, this is often difficult to check, for example because the dependence structure of the p-values if often unknown.

\subsubsection{Global test using Fisher combinations}
Another well-known global test which can be more powerful than Bonferroni, is Fisher's combination test. The test is exact when the p-values are i.i.d. and standard uniform.
 An important caveat is that this test is often anti-conservative when there are positive dependencies between the p-values.

 The test rejects for large values of the test statistic
$$T = -2\sum_{i=1}^m \ln(P_i).$$ Suppose the p-values are i.i.d. and standard uniform.  Then $T$ has a  $\chi^2_{2m}$ distribution (chi-squared distribution with 2 degrees of freedom). 

\begin{theorem}
Suppose $P_1,...,P_m$ are independent and exactly uniform on $[0,1]$.
Then $$ -2\sum_{i=1}^m \ln(P_i) \sim \chi^2_{2m}.$$
\end{theorem}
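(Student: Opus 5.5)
The plan is to first find the distribution of a single term $-2\ln(P_i)$, and then use independence to pass to the sum.

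\textbf{Step 1: one term.} Fix $i$ and let $Y_i=-2\ln(P_i)$. Since $P_i$ is uniform on $(0,1]$ almost surely, $Y_i$ takes values in $[0,\infty)$. For $y\geq 0$ we compute the tail directly:
$$\pr(Y_i> y)=\pr(\ln P_i< -y/2)=\pr(P_i< e^{-y/2})=e^{-y/2}.$$
Hence $Y_i$ is exponentially distributed with mean $2$. Its density is $\tfrac12 e^{-y/2}$ on $[0,\infty)$. This is exactly the $\chi^2_2$ density, i.e.\ the Gamma distribution with shape $1$ and scale $2$. Comparing densities, using $\Gamma(1)=1$, makes this identification explicit.

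\textbf{Step 2: the sum.} Because $P_1,\ldots,P_m$ are independent, the variables $Y_1,\ldots,Y_m$ are independent as well, being measurable functions of independent variables. They are also identically distributed, each $\chi^2_2$.

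To identify the law of the sum I would use moment generating functions. For $t<1/2$,
$$\mathbb{E}e^{tY_i}=\int_0^\infty \tfrac12 e^{-(1/2-t)y}\,dy=(1-2t)^{-1}.$$
By independence, the mgf of $T=\sum_i Y_i$ is $(1-2t)^{-m}$, which is the mgf of the $\chi^2_{2m}$ (Gamma$(m,2)$) distribution. Since these mgfs are finite on an open neighbourhood of $0$, they determine the distribution uniquely, and so $T\sim\chi^2_{2m}$.

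An alternative route is the reproductive property of chi-squared variables: a sum of independent $\chi^2_{k_j}$ variables is $\chi^2_{\sum k_j}$. I could also argue by induction on $m$ with the convolution of Gamma densities.

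\textbf{Main obstacle.} There is no real difficulty here. The only step needing care is the identification in Step 1, namely that the exponential distribution with mean $2$ coincides with $\chi^2_2$ (for instance by comparing densities or mgfs). Beyond that, the argument relies on the standard uniqueness theorem for moment generating functions, which I would cite rather than prove.
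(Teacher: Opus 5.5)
Your proof is correct and follows essentially the same route as the paper: show each $-2\ln(P_i)$ is $\chi^2_2$ (the paper goes via $-\ln(P_i)\sim\mathrm{Exp}(1)$ and then scales by $2$), then use independence and additivity of independent chi-squared variables. Your moment generating function computation simply justifies the additivity step, which the paper states without proof.
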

\begin{proof}
The negative logarithm of a standard uniform variable follows an $Exp(1)$ distribution. Indeed, for $c\geq 0$,
\begin{itemize}
\item the CDF of the $Exp(1)$ distribution evaluated at $c$ is $1-\exp(-c)$ and 
\item the CDF of $-\log(P_i)$ evaluated at $c$ is $\pr(-\log(P_i)\leq c)  =\pr(P_i\geq \exp(-c))=1-\exp(-c).$
\end{itemize}
Further, it turns out that multiplying a $Exp(1)$ variable  by 2, gives a $\chi^2_{2}$ distribution. Summing $m$ independent $\chi^2_{2}$ variables gives an $\chi^2_{2m}$ distribution.
\end{proof}

Roughly speaking, when there are positive dependencies between the p-values, the global test based on Fisher combinations tends to be anti-conservative.

\subsubsection{Global test using Simes' inequality} \label{secglobalsimes}

We have seen that Bonferroni's global test is always valid as long as the p-values are valid (Assumption \ref{assvalidp}), while the other global tests that we have discussed may be invalid when the p-values are dependent. The latter is also true for \emph{Simes' global test}, although it is in fact quite hard to find a dependence structure that invalidates Simes' global test. Simes'  global test rejects $\Hy_{\{1,...,m\}}$ when
there is an $1\leq i \leq m$ for which
$$P_{(i)}\leq \frac{i\alpha}{m},$$
where $P_{(1)}\leq ...\leq P_{(m)}$ indicate the sorted p-values.  In other words, the global test rejects when the event
\begin{equation}  \label{eventSimes}
 \bigcup_{1 \leq i \leq m} \{P_{(i)}\leq \frac{i\alpha}{m} \} 
\end{equation} 
happens.
For example, if the second smallest p-value is at most $(2\alpha)/m$, then this global test always rejects. See Figure \ref{FigSimes} for an illustration.

\begin{figure}[ht!] 
\centering
\begin{tikzpicture} 
\begin{axis}[
    width=14cm,
    height=8cm,
    xlabel={p-value index},
    ylabel={p-value},
    xtick={1,2,...,15},
    ytick={0,0.1,...,0.4},
    ymin=0, ymax=0.4,
    grid=none,
    legend pos=north west,
    title={Illustration of Simes' global test for $m=15$ hypotheses},
    tick label style={font=\small},
    label style={font=\small},
    title style={font=\small},
]

\addplot[
    color=blue,
    mark=*
] coordinates {
    (1, 0.001)
    (2, 0.0023)
    (3, 0.0038)
    (4, 0.04)
    (5, 0.08)
    (6, 0.10)
    (7, 0.18)
    (8, 0.25)
    (9, 0.38)
    (10, 0.42)
    (11, 0.66)
    (12, 0.75)
    (13, 0.76)
    (14, 0.81)
    (15, 0.93)
};

\addplot[
    color=red,
    mark=square*,
    dashed,
    thick
] coordinates {
    (1, 0.0033)
    (2, 0.0067)
    (3, 0.0100)
    (4, 0.0133)
    (5, 0.0167)
    (6, 0.0200)
    (7, 0.0233)
    (8, 0.0267)
    (9, 0.0300)
    (10, 0.0333)
    (11, 0.0367)
    (12, 0.0400)
    (13, 0.0433)
    (14, 0.0467)
    (15, 0.0500)
};

\legend{Sorted p-values, Simes critical values $(i/m)\cdot\alpha$}
\end{axis}
\end{tikzpicture}
\caption{Illustration of the global test using Simes' inequality ($m=15$, $\alpha=0.05$). Simes' critical values are shown in red and the sorted p-values are shown in blue. Note that the Simes' global test rejects the global null hypothesis, since at least one sorted p-value lies below the corresponding critical value.}  \label{FigSimes}
\end{figure}

Let  $Q_{1},..., Q_{N}$ be the null p-values, i.e., the  p-values corresponding to the true hypotheses (in the original order) and let $Q_{(1)}\leq...\leq Q_{(N)}$ be the corresponding sorted values.
Note that Simes's global test  has type I error rate at most $\alpha$ when the null p-values satisfy \emph{Simes' inequality}, i.e., when 
\begin{equation}  \label{SimesProbEq}
\pr \big( \bigcup_{1 \leq i \leq N} \{Q_{(i)}\leq \frac{i\alpha}{N} \}\big)\leq \alpha.
\end{equation}

It turns out that when $Q_{(1)},...,Q_{(N)}$ are standard uniform and \emph{independent} of each other, then the probability in \eqref{SimesProbEq} is \emph{exactly} $\alpha$ \citp{sarkar1998some}. This means that Simes' global test then has a type I error rate of exactly $\alpha$. Note that the same is true for \v{S}id\'{a}k's global test and the global test based on Fisher combinations.

\subsubsection{Comparison of the global tests}
Here we provide a brief comparison of the global tests by \v{S}id\'{a}k, Fisher and Simes.
We first compare \v{S}id\'{a}k's global test and Simes' global test. Both tests reject with probability exactly $\alpha$ if the p-values are independent and standard uniform. It is clear that these tests do not always reject at the same time. Hence, it is possible that \v{S}id\'{a}k rejects and Simes not, and vice versa.
Each of these two global tests rejects if one of the sorted p-values is smaller than a corresponding critical value.  Simes compares each sorted p-value $P_{(i)}$ with the critical value $i\alpha/m$, which depends on $i$.  \v{S}id\'{a}k simply compares each sorted p-value $P_{(i)}$ with the critical value $1-(1-\alpha)^{1/m}$, which does not depend on $i$.
Note that the first critical value of Simes is $\alpha/m$, which is smaller than \v{S}id\'{a}k's critical value. However, for $i\geq 2$, Simes' critical value is larger. Roughly speaking, Simes' global test has better power than \v{S}id\'{a}k's when the signals are spread out across multiple hypotheses.

As a final rule of thumb, when many of the hypotheses are false and these signals are roughly equally strong, then the global test based on Fisher combinations can be more powerful than the global test based on Simes. However, Fisher's global test tends to be anti-conservative when there are positive dependencies between the p-values. Simes' global test is usually valid in practice.

\subsubsection{Example of a global test for ANOVA} \label{secglobalanova}
Consider the balanced one-way ANOVA model with $k$ groups and $n$ observations per group. (The approach below only works when each group has the same number of observations.)
Thus, for each $1\leq i \leq k$ and $1\leq j \leq n$, the $j$-th observation in the $i$-th group satisfies 
$$Y_{ij} \sim \theta_i +\epsilon_{ij},$$
where the $\epsilon_{ij}$ have variance $\sigma^2$. Then we can consider the global hypothesis $\Hy_0:\theta_1=...=\theta_k$. Note that this hypothesis is the intersection of all the pairwise hypotheses of the form $\Hy_{ii'}: \theta_i=\theta_{i'}$.
The usual estimator of  $\theta_i-\theta_{i'}$ is $\bar{Y}_{i} - \bar{Y}_{i'}$, with variance $2\sigma^2/n$.

To test the global hypothesis $\Hy_0$, we could use the F-test.
An alternative test statistic is the  maximum of the pairwise $|t|$-statistics
$$|T_{ii'}|= |\bar{Y}_{i} - \bar{Y}_{i'}|/(\hat{\sigma}\sqrt{2/n}),$$
where $\hat{\sigma}$ is the sample standard deviation, i.e., 
$  \hat{\sigma}^2= \sum_{i=1}^k \sum_{j=1}^{n} (Y_{ij}-\bar{Y}_{i})^2/(nk-k).$
Thus, we consider
$$  \sqrt{2}\max_{1\leq i < i'\leq k} |T_{ii'}| = \max_{1\leq i < i'\leq k}\frac{|\bar{Y}_i-\bar{Y}_{i'}|}{\hat{\sigma}/\sqrt{n}},$$
which  equals
\begin{equation} \label{eqttTW}
\frac{\max_{1\leq i \leq k} \bar{Y}_i- \min_{1\leq i \leq k}\bar{Y}_{i}}{\hat{\sigma}/\sqrt{n}}.
\end{equation}

Under $\Hy_0$, this statistic is distributed as the \emph{range} of $k$ independent standard normal random variables divided by an independent $\sqrt{\chi^2_{\nu}/{\nu}}$ random variable, where $\nu=k(n-1)$ \citp[][p.31]{hochberg1987multiple}. This variable is denoted by $Q_{k,\nu}$ and is called the Studentized range variable with parameter $k$ and $\nu$ degrees of freedom \citp[for the CDF, see][p.376]{hochberg1987multiple}.

We reject the global null hypothesis when 
$$   \sqrt{2}\max_{1\leq i < i'\leq k} |T_{ii'}| \geq Q_{k,\nu}^{(1-\alpha)},$$
where $Q_{k,\nu}^{(1-\alpha)}$ is the $(1-\alpha)$-quantile of $Q_{k,\nu}$. This test has better power than the F-test in some situations with sparse signal, and can be computationally convenient in a multiple testing context, as will be discussed in \S\ref{secANOVA-CT}.


\subsection{Familywise error rate (FWER) control} \label{secdefFWER}

Earlier we discussed the probability of incurring one or more false positives, when testing multiple hypotheses. This probability is called the \emph{familywise error rate} (FWER). Thus, if $V$ is the number of false positives (incorrect rejections), then
$$FWER=\pr(V>0).$$
Note that $V$ is the number of true hypotheses that are rejected, so
$$V=|\N\cap \R|,$$
where $\N\subseteq\{1,...,m\}$ is the set of indices of true hypotheses and $\R \subseteq\{1,...,m\}$ is the set of indices of rejected hypotheses.
If $FWER\leq \alpha$, we say that the FWER is controlled at level $\alpha$.
Note that if we use a multiple testing method that controls the FWER, then we know that with probability at least $1-\alpha$, all rejected hypotheses are true discoveries. (This does not mean that conditional on rejecting something, this probability is at least $1-\alpha$. For example, suppose that all hypotheses are true. Then conditional on $|\R|>0$, the probability that $V>0$ is 1.) In \S\S\ref{secbonf}, \ref{secHolm}, \ref{secFWERanova} and several later sections, we discuss examples of FWER controlling methods.

Sometimes a distinction is made between \emph{weak FWER control} and \emph{strong FWER control}. The former means that the FWER is controlled when all hypotheses are true, but not necessarily in other cases. Thus, such methods provide no guarantees, except that likely nothing is rejected when all hypotheses are true. \emph{Strong FWER control} means that the FWER is controlled, regardless of how many hypotheses are false. If we simply say ``FWER control'', then this is understood to mean strong FWER control.

When we test  a single hypothesis, the \emph{power} of a test is simply defined as the probability of rejecting the hypothesis. When we test multiple hypotheses, we need a new notion of power. Let $\N^c=\{1,...,m\}\setminus\N$. By the \emph{power} of a multiple testing method, we will usually mean
$$\mathbb{E}\frac{|\N^c\cap \R|}{|\N^c|},$$
which is the expected fraction of all false hypotheses that is rejected. The power might also be defined as the probability of rejecting at least one false hypothesis, but that is less common in the literature.
We say that procedure A is \emph{uniformly} more powerful than procedure B if A always rejects all the hypotheses that B rejects, and possibly more.

\subsection{Bonferroni} \label{secbonf}
\subsubsection{FWER control with Bonferroni}
We saw that if all hypotheses are true and we only reject the hypotheses $\Hy_i$ with $P_i\leq \alpha/m$, then the probability of any false positives is at most $\alpha$. It can be easily seen that this is still true if some of the hypotheses are false. This leads us to the \emph{Bonferroni method}, which is defined as the procedure that rejects all hypotheses with $\Hy_i$ with $$P_i\leq \alpha/m.$$

\begin{theorem} \label{propBonf}
For Bonferroni's method, $FWER\leq \alpha$. In fact,  $FWER\leq \alpha\pi_0$, where $\pi_0=N/m$.
\end{theorem}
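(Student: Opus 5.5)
The plan is to mimic the proof of Proposition \ref{propBonferroniglobal}, restricting attention to the true hypotheses only. Write $N=|\N|$. If $N=0$, then $V=0$ always, so $FWER=0\leq\alpha\pi_0$ and there is nothing to prove. Otherwise, I would first express the event of a false positive in terms of the null p-values. Since Bonferroni rejects $\Hy_i$ exactly when $P_i\leq\alpha/m$, we have $\R=\{1\leq i\leq m: P_i\leq \alpha/m\}$, and therefore
$$\{V>0\}=\{|\N\cap\R|>0\}=\bigcup_{i\in\N}\{P_i\leq \alpha/m\}.$$

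Next, I would apply Boole's inequality to this union, exactly as in the proof of Proposition \ref{propBonferroniglobal}. This gives
$$FWER=\pr\Big(\bigcup_{i\in\N}\{P_i\leq \alpha/m\}\Big)\leq \sum_{i\in\N}\pr(P_i\leq \alpha/m).$$
By Assumption \ref{assvalidp}, taking $c=\alpha/m\in[0,1]$, each term is at most $\alpha/m$. Hence the sum is at most $N\alpha/m=\alpha\pi_0$. Since $\pi_0\leq 1$, this immediately yields $FWER\leq\alpha$.

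No dependence assumption is needed, because Boole's inequality holds for arbitrary events. The only point requiring care is the first step: observing that only indices in $\N$ can contribute false positives, so that the union runs over $\N$ rather than over all of $\{1,\dots,m\}$. That restriction is precisely what produces the sharper factor $\pi_0$. Beyond this, I expect no real obstacle.
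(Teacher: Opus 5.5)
Your proposal is correct and follows essentially the same route as the paper's proof: you dispose of the case $N=0$, write the FWER as $\pr\big(\bigcup_{i\in\N}\{P_i\leq\alpha/m\}\big)$, and apply Boole's inequality together with Assumption \ref{assvalidp} to get the bound $N\alpha/m=\alpha\pi_0\leq\alpha$. Your explicit remark that restricting the union to $\N$ is what produces the factor $\pi_0$ is a useful clarification of a step the paper leaves implicit.
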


\begin{proof}
If all hypothese are false, the FWER is simply 0.  Suppose there are more than 0 true hypotheses. 
Note that $$FWER = \pr\big(\bigcup_{i\in \N}\{P_i\leq \alpha/m\}\big).$$
As in the proof of Proposition \ref{propBonferroniglobal}, it follows from Boole's inequality that this is at most
$$N\cdot \alpha/m = \alpha\pi_0.$$
\end{proof}

 If $\pi_0$ is near 1, and the null p-values are standard uniform and independent or negatively dependent, then the power of Bonferroni's method is not so far from optimal. However, it follows from Proposition \ref{propBonf} that the FWER of Bonferroni is much smaller than $\alpha$ if $\pi_0$ is much smaller than 1. This means that Bonferroni is then rather suboptima. Thus, if we would know $\pi_0$, we could use that knowledge to make the method more powerful (how?); however, in practice we often do not know $\pi_0$. 
 
 When there are many strong positive dependencies between the p-values, then the FWER of Bonferroni is also much smaller than $\alpha$ (even if all hypotheses are true). The reason is that under strong positive dependence, the distribution of $\min\{P_i:\in \N\}$ is stochastically larger than when the p-values are independent or negatively dependent. In that case, we would still get valid FWER control if we used a rejection threshold that is much larger than $\alpha/m$. Thus, in a sense Bonferroni then also has rather suboptimal power. However, in practice we often do not know the dependence structure of the p-values. 
 
 We now consider two simple examples and discuss them through the lens of Bonferroni's method.
 
  \begin{example}
Consider $\theta\in \reals$ and suppose we wish to test $\Hy_0:\theta=\theta_0$ against the complementary alternative $\Hy_{\text{a}}:\theta\neq\theta_0$. The usual approach is then to consider the one-sided null hypotheses $\Hy_{\text{L}}: \theta\leq \theta_0$ and $\Hy_{\text{R}}: \theta\geq \theta_0$, and corresponding p-values $P_{\text{L}}$ and $P_{\text{R}}$. Then, we  reject $\Hy_0$ if and only if
\begin{equation} \label{eq:2sidedt}
2\cdot\min\{P_{\text{L}}, P_{\text{R}}\} \leq \alpha.
\end{equation}
 A way to see that this test is valid, is to consider Bonferroni's method. We can apply Bonferroni to the p-values $P_{\text{L}}$ and $P_{\text{R}}$. Bonferroni rejects each hypothesis with p-value below $\alpha/2$. Thus Bonferroni rejects a  hypothesis if and only if \eqref{eq:2sidedt} holds. If $\Hy_0$ is true, then $\Hy_{\text{L}}$ and $\Hy_{\text{R}}$ are true and with probability at least $1-\alpha$, Bonferroni rejects nothing. In that case, the test we just defined does not reject either.
 \end{example}

 \begin{example} \label{exampTOST}
 In \S\ref{secequiv}, we considered equivalence testing using the Two One-Sided Tests (TOST) principle. There, to test the null hypothesis $\Hy_0: |\mu^N-\mu^S|\geq \Delta$, we used that $\Hy_0 = \Hy_0^+\cup \Hy_0^-$, which means that $\Hy_0$ is false if and only if $\Hy_0^+$ and  $\Hy_0^-$ are both false.
We can apply Bonferroni to test $\Hy_0^+$ and  $\Hy_0^-$. Note that they cannot be both false, so that we know that $|\N|\leq 1$, i.e., $\pi_0\leq 0.5$. Hence, by Proposition \ref{propBonf}, if we simply reject the hypotheses with p-value below $\alpha$, then the FWER is at most $2\alpha\pi_0\leq \alpha$. Thus, we do not really need a multiple testing correction here. The TOST approach makes use of that fact.
 \end{example}
 
  In case we know that the null p-values  are independent of each other, in Bonferroni's method we can replace $\alpha/m$ by $c$ from \v{S}id\'{a}k's global test. We then get a procedure that controls the FWER and is slightly more powerful than Bonferroni.
 
 Bonferroni's method is quite popular. This is because it is simple and valid regardless of the dependence structure of the p-values.  
 However, Bonferroni's method is uniformly improved by the slightly more complicated Holm  procedure (also known as Bonferroni-Holm), see \S\ref{secHolm}.
 
  When there are positive dependencies between the p-values, it may be more powerful to use a resampling-based multiple testing method such as the maxT method, which will be discussed in \S\ref{secresamplingFWER}.

\subsubsection{Simultaneous confidence intervals}
Suppose we consider $m$ parameters $\theta_1,...,\theta_1\in \mathbb{R}$ and we have a method that provides a $(1-\alpha)100\%$-confidence interval (CI) for each of these parameters, for any chosen $\alpha\in(0,1)$. Note that each CI contains its true parameter with probability (at least) $1-\alpha$, but the probability that they all contain their true parameters is usually smaller than $1-\alpha$, which is often not satisfactory. Instead, we can consider \emph{simultaneous CIs}, i.e., intervals which are such that with probability (at least) $1-\alpha$, all intervals contain their parameters. A highly popular way of constructing such CIs is related to Bonferroni's method: we simply change $\alpha$ to $\alpha/m$, i.e., we make $m$ $(1-\alpha/m)$-CIs. The probability that all CIs contain their parameters is then
$$\pr(\text{all CIs contain their parameter})=$$
$$1- \pr(\text{some CIs don't contain their parameter}) \geq 1- m\cdot \alpha/m=1-\alpha,$$ where we used Boole's inequality again.  If the CIs are independent of each other, we even known that 
$$\pr(\text{all CIs contain their parameter}) \geq (1-\alpha/m)^m>1-\alpha.$$ This allows us to make slightly smaller CIs (use $c$ from \v{S}id\'{a}k's global test instead of $\alpha/m$) and still guarantee that they are are simultaneously valid. However, in many cases we cannot assume that the CIs are independent.

Suppose the hypotheses are of the form $\Hy_i: \theta_i=\theta_i^0$, $1\leq i \leq m$, and we compute $m$ $(1-\alpha/m)$-CIs by inverting $m$ t-tests. Then we can control the FWER by rejecting the hypotheses $\Hy_i$ corresponding to the CI that do not contain $\theta_i^0$. This is the same as applying Bonferroni to the p-values corresponding to the t-tests. In case we use the slightly narrower CIs using \v{S}id\'{a}k's $c$, we get a similar relationship with {S}id\'{a}k's global test. In these examples, we see that there is a simple correspondence between a multiple testing method and a method for constructing simultaneous confidence intervals. However, for most other multiple testing methods, there is no such correspondence.

\subsection{Step-up and step-down methods} \label{secupdown}
In \S\ref{secHolm}, we will consider 
Holm's procedure \citp{holm1979simple}, also known as Bonferroni-Holm.
This is a FWER controlling procedure that is closely related to Bonferroni. Like Bonferroni, Holm controls the FWER when the p-values are valid (Assumption \ref{assvalidp}). Holm rejects all hypotheses that Bonferroni rejects, but may reject additional hypotheses on top of that. This means that Holm is a \emph{uniform improvement} of Bonferroni. We say that procedure A uniformly improves procedure B if A (is still valid and) rejects all hypotheses that B rejects and potentially more. Holm is a \emph{sequential rejection procedure}, because it is a stepwise procedure, that rejects more and more hypotheses until it stops.  

More precisely, Holm belongs to the class of \emph{step-down} procedures. Step-down procedures are often based on p-values, but some step-down procedures are based on test statistics and do not use p-values. For example, the sequential maxT method from \S\ref{secresamplingFWER} is a step-down method which is usually based on test statistics. For now, we discuss procedures based on p-values.
 Step-down methods based on p-values first sort the  p-values from the smallest to the largest. Throughout, we denote the sorted p-values by $p_{(1)}\leq ...\leq p_{(m)}$. 
Starting from the smallest p-value, step-down methods check for each p-value $p_{(i)}$ whether it is  smaller than some critical value $c_i$.
They  stop as soon as a p-value $p_{(i)}$ is reached that satisfies $p_{(i)}>c_i$; Then, all, hypotheses with p-values strictly smaller than $p_{(i)}$ are rejected, see Algorithm \ref{alg:stepdown}.

There are also \emph{step-up} procedures, for example the Benjamini-Hochberg method, which is discussed later. Step-up methods find the largest $i$ for which $p_{(i)}\leq c_i$ and reject all hypotheses with p-values that are at most $p_{(i)}$. An efficient way to find this index $i$ is to start with the largest p-value and work backwards, see Algorithm \ref{alg:stepup}.

\begin{example}
Suppose $m=4$, the sorted p-values $p_{(1)},...,p_{(4)}$ are
$$0.007, \quad 0.024,\quad 0.030,\quad 0.185$$
and the critical values $c_1,..,c_4$ are
$$0.010, \quad 0.020,\quad 0.030,\quad 0.040$$
A step-down method would then only reject the hypothesis corresponding to $p_{(1)}$, since $p_{(2)}$ exceeds it critical value.
A step-up method would reject the hypotheses corresponding to the three smallest p-values, since $p_{(3)}=0.30\leq c_3$.
\end{example}

The terminology \emph{step-down} and \emph{step-up} is potentially confusing. Indeed, a step-down method for example, actually  steps up: it starts with the smallest p-value and then steps to the next one. The terminology often makes sense however when we reformulate the methods in terms of test statistics: a step-down method often starts with the largest test statistic and then steps down to the second-largest, etcetera.

\begin{algorithm}[H]
\caption{Step-down procedure based on p-values and fixed critical values.} \label{alg:stepdown}
\begin{algorithmic}[1]
\REQUIRE P-values $p_1, \dots, p_m$; critical values $c_1 \le \dots \le c_m$

\STATE Order p-values such that 
$p_{(1)} \le \dots \le p_{(m)}$

\STATE $r = 0$

\FOR{$i = 1$ to $m$}
    \IF{$p_{(i)} \le c_i$}
        \STATE $r = i$
    \ELSE
        \STATE \textbf{break}
    \ENDIF
\ENDFOR
\IF{$r=0$}
	\STATE reject nothing and stop
\ELSE 
	 \STATE reject the hypotheses with p-values at most $p_{(r)}$, and stop
\ENDIF
\end{algorithmic}
\end{algorithm}

\begin{algorithm}[H] 
\caption{Step-up  procedure based on p-values and fixed critical values.} \label{alg:stepup}
\begin{algorithmic}[1]
\REQUIRE P-values $p_1, \dots, p_m$; critical values $c_1, \dots, c_m$
\STATE Order p-values such that 
$p_{(1)} \le  \dots \le p_{(m)}$

\STATE $r=0$

\FOR{$i = m$ to $1\quad $} 
    \IF{$p_{(i)} \le c_j$}
        \STATE $r=i$
        \STATE \textbf{break}
    \ENDIF
\ENDFOR
\IF{$r=0$}
	\STATE reject nothing and stop
\ELSE 
	 \STATE reject the hypotheses with p-values at most $p_{(r)}$, and stop
\ENDIF
\end{algorithmic}
\end{algorithm}

\subsection{FWER control with Holm's method} \label{secHolm}
The first step of the step-down method of Holm is identical to Bonferroni, i.e., all p-values are compared with $\alpha/m$. If no hypotheses are rejected in the first step, the method stops. If $r\geq 1$ hypotheses are rejected in the first step, then the method moves on to the second step, where all remaining p-values are compared with the threshold $\alpha/(m-r)$. We will refer to this as \emph{step 2} in the proof below. If this leads to no additional rejections then stop. Otherwise, move to step 3, where all remaining p-values are compared with $\alpha/(m-r)$, where $r$ is the total number of hypotheses that have been rejected in the previous steps. The method continues like this until a step is reached where no additional hypotheses are rejected, or when all hypotheses have been rejected.

A different way to formulate Holm's method is the following. 
\begin{enumerate}
\item Consider the sorted p-values $p_{(1)}\leq ...\leq p_{(m)}$ and find the smallest $1\leq k \leq m$ for which
$$p_{(k)}>\frac{\alpha}{m-k+1},$$
if there is such a $k$, otherwise take $k=m+1$. 
\item Then reject the hypotheses corresponding to the $p_{(i)}$ with $i<k$.
\end{enumerate}
Note that this  method is simply the step-down method as defined in \S\ref{secupdown}, with critical values $$c_j=\frac{\alpha}{m-j+1}.$$

\begin{theorem} \label{thmHolm}
Holm's method controls the FWER, i.e., $\pr(V>0)\leq\alpha.$
\end{theorem}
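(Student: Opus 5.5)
The plan is the standard argument for Holm's method: on a single event that has probability at least $1-\alpha$, the procedure cannot reach any true hypothesis before it stops. If $N=|\N|=0$ there is nothing to prove, so assume $N\geq 1$.

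Define the event
$$E=\Big\{\min_{i\in\N} P_i > \alpha/N\Big\}.$$
By Boole's inequality and Assumption \ref{assvalidp}, exactly as in the proofs of Proposition \ref{propBonferroniglobal} and Theorem \ref{propBonf},
$$\pr(E^c)=\pr\Big(\bigcup_{i\in\N}\{P_i\leq \alpha/N\}\Big)\leq N\cdot\alpha/N=\alpha.$$
It then suffices to show that on $E$ Holm rejects no true hypothesis, since this gives $\pr(V>0)\leq \pr(E^c)\leq\alpha$.

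This deterministic step is the core of the proof. On $E$, let $j$ be the position, in the sorted order $p_{(1)}\leq\dots\leq p_{(m)}$, of the smallest null p-value. So $p_{(j)}=\min_{i\in\N}P_i$, where ties are resolved so that $j$ is the first position holding a null p-value. All hypotheses at positions $1,\dots,j-1$ are false. Hence $j-1\leq m-N$, which gives $m-j+1\geq N$. The Holm critical value at step $j$ therefore satisfies
$$c_j=\frac{\alpha}{m-j+1}\leq \frac{\alpha}{N}<p_{(j)}.$$
The step-down procedure thus stops at or before index $j$: the smallest $k$ with $p_{(k)}>c_k$ satisfies $k\leq j$. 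Only hypotheses with index $i<k\leq j$ are rejected, and these are all false.

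The main obstacle is treating ties carefully. Rejection by Algorithm \ref{alg:stepdown} is stated as ``reject the hypotheses with p-values at most $p_{(r)}$''. With ties, one should check that no null p-value equals $p_{(r)}$ for $r<j$. This holds on $E$ because $p_{(r)}\leq c_r\leq c_{j}<p_{(j)}$, using that the critical values $c_r$ are increasing in $r$. So the null p-values are strictly larger than every rejected p-value.
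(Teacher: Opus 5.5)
Your proof is correct and follows the paper's argument. It uses the same event $\{\min_{i\in\N}P_i>\alpha/N\}$, the same Boole bound $\pr(E^c)\le\alpha$, and the same deterministic fact that Holm's threshold is at most $\alpha/N$ until a true hypothesis is reached. The only difference is presentational: you work with the first null position $j$ in the sorted order, which gives $c_j\le\alpha/N<p_{(j)}$ directly and also handles ties, whereas the paper argues round by round through Holm's steps.
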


\begin{proof}
If all hypotheses are false, the FWER is simply 0. Hence, suppose that there are one or more true hypotheses. Write $N=|\N|$. Consider the event
$$\E= \{\forall j\in \N: P_j> \alpha/N\}.$$
Note that $$1-\pr(\E) =\pr(\E^c) = \pr\{\exists j\in \N: P_j\leq \alpha/N\}. $$
By Boole's inequality, this is at most
$$ \sum_{j\in\N}  \pr\{ P_j\leq \alpha/N \}\leq N \cdot \alpha/N=\alpha.$$
Thus, $\pr(\E)\geq 1-\alpha$.

Finally, we show that if $\E$ happens, there are no false positives.
Indeed, note that if $\E$ happens, then no true hypotheses are rejected in Holm's first step.
Hence, in Step 2 (defined above), the p-value threshold is at least $\alpha/N$. Hence, since  $\E$  holds, there are no false positives in Step 2 either. 
In general, if $\E$ holds and no true hypotheses were rejected in Step $i$, then no true hypotheses are rejected in Step $i+1$ either. Thus, under $\E$, there are no false positives. We conclude that with probability at least $1-\alpha$, there are no false positives. In other words, the method controls the FWER. 

In \S\ref{secCT} we will prove in a different way that Holm controls the FWER, namely by showing that Holm coincides with a closed testing method.
\end{proof}

Other examples of step-down methods are the maxT method and the Romano-Wolf method for FDX control, which will be discussed later. The method  in \S\ref{secFWERanova} is also a step-down method (if we use that term in a somewhat general sense).    Holm is an example of a  \emph{closed testing method}, as will also be explained later. Clearly, Holm is uniformly more powerful than Bonferroni. The gain in power is the largest when $\pi_0$ is far below 1 and the signals are strong. Indeed, in that case many hypotheses tend to be rejected in the first step of Holm, which leads to a substantial increase in the critical value, compared to the initial critical value $\alpha/m.$

We mentioned that if there are many strong positive dependencies between the p-values, then Bonferroni is rather conservative. The same is true about Holm. When there are such dependencies, the FWER of Holm is (much) smaller than $\alpha$. When there are no dependencies, then Holm has fairly good power --- among the class of methods that take a vector of $m$ p-values as input and use no further information. Of course, in many cases, we do not know a priori what the dependence structure of the p-values is. If we can make no assumptions on the dependence structure, then  Holm is  often a good choice --- among the class of methods that take a vector of $m$ p-values as input and are valid under general dependence and use no further information. One may wonder whether Holm is \emph{admissible}. By ``admissible'' we mean that the procedure cannot be uniformly improved without invalidating the method.\footnote{Admissibility is only defined \emph{given the assumptions}: if we make additional assumptions, then we can potentially uniformly improve a method, even if we could not under the original assumptions.}
There is no known proof that Holm is admissible  but it has been conjectured that it is. (It can be proved that Holm admissible in a somewhat different sense \citp{goeman2021only}, but we will not consider that concept here.)

\subsection{FWER control for ANOVA}\label{secFWERanova}
Here we consider an important but non-straightforward FWER controlling method for one-way ANOVA models, for example the 
balanced one-way ANOVA model from \S\ref{secglobalanova}. The procedure is often referred to as the Tukey-Welsch procedure (see \citealp[][p.69]{hochberg1987multiple}, \citealp{fay2022statistical}). (We do not prove here that this procedure controls the FWER.) We  could also use Holm (applied to the p-values from the pairwise comparisons) for example, but it will have low power in this situation, due to dependencies between the pairwise comparisons.

For each subset $\mathcal{A}$ of $\{1,...,k\}$, define $\Hy_{\A}$ as the hypothesis that all $\theta_i$ with $i\in \A$ are equal. (In later sections that are not about ANOVA, the notation $\Hy_{\A}$ will have a different meaning.) The goal here is to test $\Hy_{\A}$ for every $\mathcal{A}\subseteq \{1,...,m\}$ with $\A\geq 2$.

The procedure proceeds in a particular type of ``step-down'' manner. We define $\alpha_p=\alpha$ for $p\in \{k,k-1\}$, and $\alpha_p = 1-(1-\alpha)^{p/k}$ for $2\leq p\leq k-2$. Note that as $p$ decreases, $(1-\alpha)^{p/k}$ increases, so $\alpha_p$ decreases. Below we give a specific version of the procedure for the one-way balanced ANOVA model. For certain other ANOVA models and test statistics (e.g. F-statistics) the method below also provides FWER control (with the same definition of $\alpha_p$, but the critical values should be adjusted to the test statistics used). We now give the steps of the Tukey-Welsch procedure.

\begin{enumerate}
\item First test $\Hy_{\{1,...,k\}}$ at level $\alpha$ using the test from \S\ref{secglobalanova}. If it is not rejected, stop. Otherwise, continue.
\item 
Test all hypotheses $\Hy_{\A}$ with $\A\subset\{1,...,k\}$ and $|\A|=m-1$ using the test statistic \eqref{eqttTW} from \S\ref{secglobalanova} applied to the data corresponding to $|\A|$, 
$$\frac{\max_{i\in \A} \bar{Y}_i- \min_{i\in \A}\bar{Y}_{i'}}{\hat{\sigma}/\sqrt{n}}.$$
where $\hat{\sigma}$ still denotes the sample SD based on the \emph{full} dataset, i.e., 
$$\hat{\sigma}^2=\sum_{i=1}^k \sum_{j=1}^{n} (Y_{ij}-\bar{Y}_{i})^2/(nk-k),$$
 and using critical value $Q_{|\A|,\nu}^{(1-\alpha_p)}$, with $\nu=k(n-1)$ and $p=|\A|$. If they are all not rejected, stop. Otherwise continue.
\item For all $\A$ that were rejected in the previous step,  test all hypotheses $\Hy_{\B}$ with $\B\subset\A$ and $|\B|=m-2$ using the test from \S\ref{secglobalanova} (applied to the data corresponding to $|\B|$, but using $\hat{\sigma}$ based on the full dataset, and using critical value $Q_{|\B|,\nu}^{(1-\alpha_p)}$, with $\nu=k(n-1)$ and $p=|\B|$). If nothing is rejected in this step, stop. Otherwise continue.
\item Consider all sets $\B$ that were rejected in the previous step and consider their subsets of size $m-3$. Keep on going like this, until no further subsets remain to be tested.
\end{enumerate}

 The method is implemented in the function \emph{tukeyWelsch()} of the R package \emph{asht} on CRAN. Some other methods related to Tukey-Welsch are discussed in \S\ref{secANOVA-CT}.

\subsection{k-FWER control} \label{seckFWER}
When we test many hypotheses, often we are willing to have a somewhat higher risk of false positives if this leads to much better power. (Recall that by power, we usually mean the expected fraction of the false hypotheses that is rejected.) One way to increase the power, is to control the \emph{k-familywise error rate} (k-FWER), where $1\leq k\leq m$. The k-FWER is defined as follows:
$$k\text{-FWER}=\pr(\text{reject at least } k \text{ true hypothess}).$$
Formulated differently, 
$$k\text{-FWER}=\pr(V\geq k).$$
Note that for $k=1$, this is the usual FWER, Thus, the k-FWER is a generalization of the FWER.
\emph{Controlling the k-FWER} means ensuring that
$ k\text{-FWER}\leq \alpha.$
A simple k-FWER controlling procedure is the following generalization of Bonferroni:  reject all hypotheses $\Hy_j$ with p-values $P_j$ that satisfy
\begin{equation}\label{eq:thresholdkFWER}
P_j\leq k\alpha/m.
\end{equation}

\begin{theorem} \label{thmkFWERss}
The procedure that rejects all hypotheses $\Hy_j$ with p-values satisfying inequality \eqref{eq:thresholdkFWER} controls the k-FWER, i.e., $\pr(V\geq k)\leq\alpha$.
\end{theorem}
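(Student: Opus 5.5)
The plan is to use Markov's inequality applied to the number of false positives, which plays the role that Boole's inequality played in the proofs of Proposition \ref{propBonferroniglobal} and Theorem \ref{propBonf}. If $\N$ is empty, then $V=0$ and there is nothing to prove. So assume $N=|\N|\geq 1$.

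First, write the number of false positives of the procedure as a sum of indicators:
$$V=\sum_{j\in\N}\mathbbm{1}\{P_j\leq k\alpha/m\}.$$
This holds because a true hypothesis $\Hy_j$ is rejected exactly when $P_j$ satisfies \eqref{eq:thresholdkFWER}.

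Next, take expectations and use linearity. No assumption on the dependence structure of the p-values is needed for this step. Assumption \ref{assvalidp} then gives
$$\mathbb{E}(V)=\sum_{j\in\N}\pr(P_j\leq k\alpha/m)\leq N\cdot k\alpha/m\leq k\alpha.$$
This requires $k\alpha/m\leq 1$, which holds because $k\leq m$ and $\alpha<1$.

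Finally, $V$ is a nonnegative random variable, so Markov's inequality (as used in the e-value exercise of \S\ref{secsingle}) gives
$$\pr(V\geq k)\leq \mathbb{E}(V)/k\leq \alpha.$$
In fact the argument yields the sharper bound $\pr(V\geq k)\leq\alpha\pi_0$, the analogue of Theorem \ref{propBonf}, and I would note this in passing.

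The only real step is recognising that Boole's inequality must be replaced by a first-moment (Markov) bound. Boole's inequality only controls the probability of at least one false rejection, whereas here we need the probability of at least $k$ false rejections. Once $V$ is written as a sum of indicators, every remaining step is routine.
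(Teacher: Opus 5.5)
Your proof is correct and follows the paper's argument: write $V$ as a sum of indicators over the null indices, bound $\mathbb{E}(V)\leq Nk\alpha/m$, and apply Markov's inequality to get $\pr(V\geq k)\leq \alpha N/m\leq\alpha$. Your sharper bound $\alpha\pi_0$ is the same quantity $\alpha N/m$ that appears in the paper's last step. Your check that $k\alpha/m\in[0,1]$, so that Assumption \ref{assvalidp} applies, is a detail the paper leaves implicit.
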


\begin{proof}
Note that $$\mathbb{E}(V) =\mathbb{E}\sum_{j\in \N}\mathbbm{1}_{\{P_j\leq k\alpha/m\}} \leq N k\alpha/m$$
by Boole's inequality.

By Markov's inequality
$\mathbb{E}(V)\geq k\pr(V\geq k) $, so that 
$\pr(V\geq k) \leq (N k\alpha/m)/k = \alpha N/m\leq \alpha$, as was to be shown.
\end{proof}
In the above proof, Markov's inequality is used. This inequality may be  quite conservative; however, it can be shown that there are settings where (Assumption \ref{assvalidp} is satisfied and) the k-FWER is exactly $\alpha$ with this method \citp[][Thm 2.1(ii)]{lehmann2005generalizations}.

A different k-FWER method is defined as follows:
\begin{enumerate}
\item Reject all hypotheses that Bonferroni (or Bonferroni-Holm) rejects.
\item Consider the p-values that were not rejected in step one, and additionally reject the hypotheses corresponding to the smallest $k-1$ of these p-values (or reject all $m$ hypotheses if there were fewer than $k-1$ left).
\end{enumerate}
This method also controls the k-FWER, but it potentially rejects hypotheses with large p-values, so this method is not recommended.

Clearly, for $k=1$, the method from Theorem \ref{thmkFWERss} reduces to Bonferroni. Since Bonferroni is uniformly improved by Holm, one may wonder whether  the k-FWER method above can likewise be improved. This is indeed the case. For every $1\leq i \leq m$, consider the following critical value:

\begin{equation} \label{cvkFWERstepdown}
c_i=
 \begin{cases}
\frac{k\alpha}{m}& i\leq k, \\
\frac{k\alpha}{m+k-i} & i>k,
\end{cases}
\end{equation}
These critical values are a generalization of the Holm-critical values. Indeed, for $k=1$, we get  Holm's critical values.

\begin{theorem} \label{thmkFWERstepdown}
The step-down method with the critical values in \eqref{cvkFWERstepdown} controls the k-FWER.
\end{theorem}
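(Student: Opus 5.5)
I would mimic the proof of Theorem \ref{thmHolm}: first identify a "good" event of probability at least $1-\alpha$, defined only in terms of the null p-values, and then show deterministically that on this event the step-down procedure rejects at most $k-1$ true hypotheses. If $N=|\N|<k$ then $V\leq N<k$ always, so assume $N\geq k$. Let $Q_{(1)}\leq\dots\leq Q_{(N)}$ be the sorted null p-values and define
$$\E=\{Q_{(k)}>k\alpha/N\}.$$
The complement $\E^c$ says that at least $k$ null p-values are at most $k\alpha/N$. Put $W=\sum_{j\in\N}\mathbbm{1}_{\{P_j\leq k\alpha/N\}}$. By Assumption \ref{assvalidp}, $\mathbb{E}(W)\leq N\cdot k\alpha/N=k\alpha$. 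Markov's inequality, as in the proof of Theorem \ref{thmkFWERss}, then gives $\pr(\E^c)=\pr(W\geq k)\leq \alpha$.

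Next I would show that on $\E$ we have $V\leq k-1$. Suppose instead that $V\geq k$. The step-down procedure rejects exactly the hypotheses with sorted index below some stopping point, and every rejected index $i$ satisfies $p_{(i)}\leq c_i$. Among the rejected hypotheses, look at the one that is the $k$-th true hypothesis in the sorted order, and let $i$ be its overall sorted position. So $p_{(i)}=Q_{(k)}$ and $p_{(i)}\leq c_i$.

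The key counting step is to bound $i$. Among the first $i$ sorted p-values, exactly $k$ are null and the remaining $i-k$ are false hypotheses. There are at most $m-N$ false hypotheses, so $i-k\leq m-N$, that is, $i\leq m-N+k$. Since the critical values in \eqref{cvkFWERstepdown} are nondecreasing in $i$, we get $c_i\leq c_{m-N+k}$. When $N<m$ we have $m-N+k>k$, so
$$c_{m-N+k}=\frac{k\alpha}{m+k-(m-N+k)}=\frac{k\alpha}{N}.$$
When $N=m$ we have $i=k$ and $c_k=k\alpha/m=k\alpha/N$. Either way $Q_{(k)}=p_{(i)}\leq k\alpha/N$, which contradicts $\E$. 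Hence $\pr(V\geq k)\leq\pr(\E^c)\leq\alpha$.

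The main obstacle is this deterministic step. One has to correctly identify the position $i$ of the $k$-th rejected null, use monotonicity of the $c_i$, and handle the boundary case $i\leq k$ separately from $i>k$ (equivalently, $N=m$ versus $N<m$). Ties among p-values need a little care, but ordering them arbitrarily keeps the argument intact.
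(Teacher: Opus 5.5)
Your proof is correct and follows the paper's route. It uses the same good event (at most $k-1$ null p-values below $k\alpha/N$), the same Markov bound, and the same count of false hypotheses sorted before a rejected null to get $c_i\le k\alpha/N$. The only difference is presentation: you argue by contradiction at the sorted position of the $k$-th rejected null, using monotonicity of the $c_i$. The paper instead walks through positions $k, k+1,\dots$ inductively ("we can continue with this reasoning"). Your version is a bit tighter and also handles the cases $N<k$ and $N=m$ explicitly.
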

\begin{proof}
If all hypotheses are false, the k-FWER is simply 0. Hence, suppose that there are one or more true hypotheses.
Consider the event
$$\E = \Big\{  |\{i\in \N: P_i\leq k\alpha/N\}|\leq k-1   \Big\}.$$

Note that by Markov's inequality,
\begin{align*}
\pr(\E^c)  = &  \pr\big(|\{i\in \N: P_i\leq k\alpha/N\}| \geq k\big) \\
\leq & k^{-1} \mathbb{E}\big(|\{i\in \N: P_i\leq k\alpha/N\}|\big)\\
\leq & k^{-1}N k\alpha/N =\alpha.
\end{align*}

Throughout the remainder, suppose $\E$ happens.
Note that the method compares the smallest $k$ p-values with $k\alpha/m$. 
If $p_{(k)}>k\alpha/m$, there are fewer than $k$ rejections in total.
If $p_{(k)} \leq k\alpha/m$,  then there are at most $k-1$ null p-values among $p_{(1)},...,p_{(k)}$ (since $\E$ holds). Hence, $N\leq m-1$. 
If $p_{(k+1)}>c_{k+1}$, the method does not go further.
Otherwise, note that $c_{k+1}\leq k\alpha/N$, so that there are at most $k-1$ null p-values among $p_{(1)},...,p_{(k+1)}$. Hence, $N\leq m-2$.
If $p_{(k+2)}>c_{k+2}$, the method does not go further.
Otherwise, note that $c_{k+2}\leq k\alpha/N$, so that there are at most $k-1$ null p-values among $p_{(1)},...,p_{(k+2)}$. 
We can continue with this reasoning, which means  that the number of false positives cannot be larger than $k-1$ under $\E$. Since $\pr(\E)\geq 1-\alpha$, this finishes the proof.

\end{proof}

The downside of k-FWER methods compared to FWER control, is that we no longer have $(1-\alpha)100\%$ confidence that there are no false positives. For example, suppose $k=4$ and we reject 5 hypotheses. Then we have no guarantee that it is likely that most of these findings are true discoveries. 

Note that if we do not require FWER control but k-FWER control with $k\geq 2$, then we could in principle always simply reject $k-1$ arbitrarily chosen hypotheses (even if all p-values are very large) and still control the k-FWER.
Thus we can uniformly improve the method defined by \eqref{eq:thresholdkFWER} by rejecting some additional hypotheses if that method rejects fewer than $k-1$ hypotheses.
However, if we reject $k-1$ (or fewer) hypotheses with a k-FWER method, then that is meaningless.

 At the beginning of this section (\S \ref{seckFWER}), we discussed that k-FWER control can be useful in case we want to have higher power. However, instead of  k-FWER control, we can then also use a different approach, such as false discovery exceedance (FDX) control (see \S\ref{secFDX}) or false discovery rate control (see \S \ref{seccriteriaFDP}, \S\ref{secFDR}, \S\ref{secknockoffs}). Such approaches are often more appealing than k-FWER control.  Some FDX controlling methods  rely on iteratively using k-FWER mehods (see \S\ref{secRW}).  

\subsection{Adjusted p-values} \label{secadj}
Consider hypotheses $\Hy_1,...,\Hy_m$ and some multiple testing procedure (MTP) which, given $\alpha\in(0,1)$, decides for each  hypothesis  whether it is rejected or retained. 
For every  $1\leq i \leq m$, the \emph{adjusted p-value} $p^{\text{adj}}_i$ is defined to be the smallest value $\alpha\in[0,1]$ for which the MTP still rejects $\Hy_i$. More precisely, 
$$p^{\text{adj}}_i  = \min\{\alpha: \Hy_i \text{ is rejected by the MTP}\}\wedge 1 ,$$
where $a\wedge b$ denotes the minimum of $a$ and $b$.
For example, note that for Bonferroni, $p^{\text{adj}}_i = \min\{p_i\cdot m,1\}$.
Clearly, $p_i\cdot m$ can be larger than 1. This illustrates why authors often take  the minimum with 1, although this does not really matter, since we do not reject anyway, if $\min\{\alpha: \Hy_i \text{ is rejected by the MTP}\}$ is larger than or equal to 1.

To obtain the adjusted p-values for step-down or step-up methods based on p-values, a general algorithm can be used. Consider a step-up or step-down method with critical values $c_1,...,c_m$. The critical values are typically of the form $\alpha/a_i$, where $a_i$ is an adjustment factor which does not depend on $\alpha$.
Thus, $a_i$ is the ratio between $\alpha$ and the $i$-th critical value.
The following algorithm can be used for obtaining the adjusted p-values \citp{goeman2014multiple}:

\begin{enumerate}
\item Obtain the sorted p-values $p_{(1)},...,p_{(m)}.$
\item Multiply each sorted p-values $p_{(i)}$ by the adjustment factor $a_i$.
\item  If this multiplication violates the ordering, adjust the values. For step-down methods, take $p^{\text{adj}}_{(i)} = \max_{1\leq j \leq i}a_jp_{(j)}$. For step-up methods, take $p^{\text{adj}}_{(i)} = \min_{i\leq j \leq m}a_jp_{(j)}$.
\item Set $p^{\text{adj}}_{(i)}=\min(p^{\text{adj}}_{(i)},1) $ for all $i$.
\end{enumerate}

For example, for Holm, which is a step-down method,  the critical values are $c_i=\alpha/(m+1-i)$.
Thus, to get the adjusted p-values for Holm, we first compute the values $p_{(i)} \alpha c_i^{-1} = p_{(i)}(m+1-i) $ and then possibly adjust them upwards to enforce monotonicity. 

The R function \emph{p.adjust()} outputs adjusted p-values for some FWER and FDR controlling methods, including Holm, Hommel  and Benjamini-Hochberg (the latter two will be discussed later). 
If one has computed adjusted p-values, then for any value of $\alpha$, one can immediately see which hypotheses are rejected by the MTP, namely the hypotheses with adjusted p-values below $\alpha$. Since the function \emph{p.adjust()} outputs adjusted p-values, it is not necessary to provide $\alpha$ as an argument and nevertheless, based on the adjusted p-values, one can easily  see which hypotheses are rejected for any value of $\alpha$.

There are also some important caveats when using adjusted p-values. The first one is that just like with regular p-values, it is required to specify $\alpha$  independently from the data, if we want FWER control (or e.g.  FDR control, which is discussed later). Choosing $\alpha$ based on the data or based on the adjusted p-values, \emph{may} invalidate a MTP. Also note that, as is clear from the algorithm above, the adjusted p-value of a hypothesis may depend on the other p-values. Thus, adjusted p-values have no independent meaning.




\subsection{Application to property sales data} \label{secappsales}
To illustrate the methods, we apply them  to a dataset on  residential property sales in Ames, Iowa
from 2006 to 2010 \citp{cock2011ames}. We use the cleaned data which are available through the R package \verb|AmesHousing| \citp{ameshousing}. The dataset contains 2930 rows, which correspond to sold individual properties. The data  contain the sale price and various variables that might affect it, such as the lot area, the number of bathrooms above grade, the build year and the roof style.

We removed variables that were near-constant and standardized all quantitative variables, including the sale price. We fitted a standard least squares linear regression model, where sale price was regressed on all other variables, including the dummy-encoded categorical variables. There were 31 numerical predictors and 248 dummy variables, so in total   279 predictors. For each predictor, we considered the hypothesis $\Hy_i$ that this predictor has coefficient 0. We computed the 279 two-sided p-values using the usual t-statistics. The smallest p-values are shown in Figure \ref{fig:50pvalues}.

\begin{figure}[ht!]
    \centering
    \includegraphics[width=0.85\textwidth]{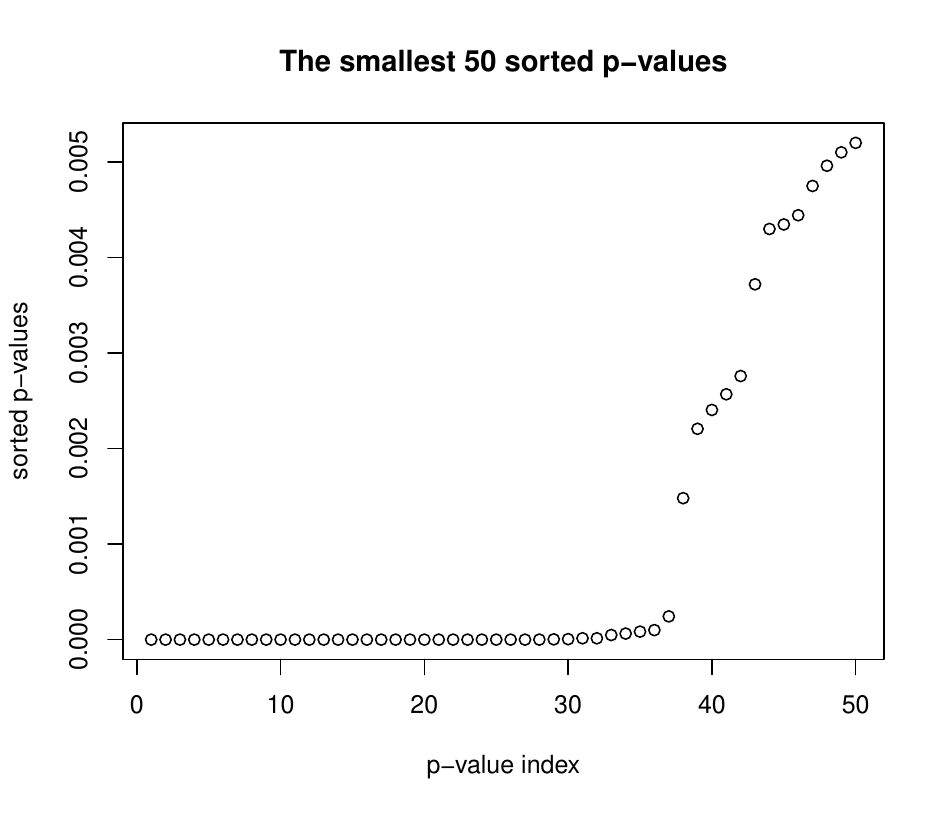}
    \caption{The sorted values of the smallest 50 p-values among all 279 p-values for the regression model predicting sale prices.}
    \label{fig:50pvalues}
\end{figure}

We take $\alpha=0.05$. Recall that Bonferroni rejects all hypotheses with p-values below $\alpha/m=\alpha/279\approx 0.0001792$. In this case, Bonferroni, rejects 36 hypotheses. Indeed, the 36-th smallest p-value, $p_{(36)}$, is 
$\approx 0.0000997$ and $p_{(37)}\approx 0.0002432$; clearly $0.0000997\leq \alpha/m$ and $0.0002432>\alpha/m$.

We now consider Holm. In the first step Holm rejects the same 36 hypotheses as Bonferroni, by definition. In the next step, Holm increases the rejection threshold to $\alpha/(m-36)=0.05/243\approx 0.0002057613$. This is still smaller than $p_{(37)}$, hence Holm rejects no further hypotheses in this case.

Next, we apply the k-FWER methods from \S\ref{seckFWER}. We take $k=5$ and apply the single-step method from Theorem
\ref{thmkFWERss}. This leads to 37 rejected hypotheses. Note that in this particular example, the result that  Bonferroni provides is more attractive, since it rejects one hypothesis less but it provides much stronger guarantees. 
The step-down k-FWER method (from Theorem \ref{thmkFWERstepdown}) also rejects 37 hypotheses.

We now focus on the p-values for the 31 numerical predictors, see Figure \ref{fig:31pvalues}.
With $\alpha=0.05$ as before, Bonferroni  and Holm both reject 10 hypotheses. The single-step 5-FWER method rejects 12 hypotheses and the sequential 5-FWER method (from Theorem \ref{thmkFWERstepdown}) rejects 14 hypotheses. Note that the 5-FWER methods again provide no additional information compared to Holm, in this case.

\begin{figure}[ht!]
    \centering
    \includegraphics[width=0.85\textwidth]{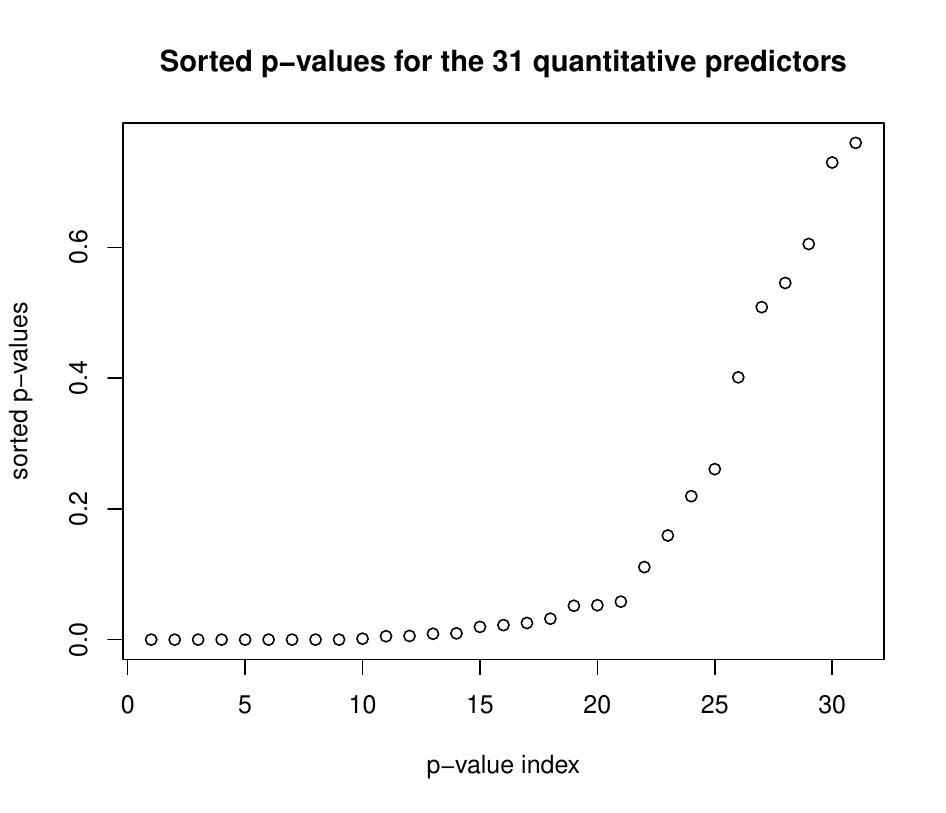}
    \caption{The sorted p-values corresponding to the numerical predictors of the regression model predicting sale prices.}
    \label{fig:31pvalues}
\end{figure}

\subsection{Exercises}
\begin{enumerate}

\item Consider 5 hypotheses and corresponding  p-values  $0.025$, $0.045$, $0.010$, $0.012$ and $0.245$. Take $\alpha=0.05$. How many hypotheses are rejected by Holm's method? And by the k-FWER method from Theorem \ref{thmkFWERstepdown}, with $k=2$?

\item
Consider Bonferroni's FWER controlling method applied to $m\geq 2$ valid p-values, which are \emph{independent} of each other. Prove that the FWER is then strictly smaller than $\alpha$, regardless of how many hypotheses are true.

\item\label{ex:bonsidak} Let $\alpha=0.05$ and (using a calculator) compute the ratio between the critical value  of Bonferroni's global test ($\alpha$/m) and of  \v{S}id\'{a}k's global test ($1-(1-\alpha)^{1/m}$) for $m\in\{2,3,4,5\}$. What (simple) pattern do you notice?

\item $\bigstar$ In exercise \ref{ex:bonsidak}, we considered the ratio between the critical values of Bonferroni and \v{S}id\'{a}k. Prove that as $m\rightarrow\infty$, this ratio converges to  $-\alpha/\log(1-\alpha)$.

\item Consider $m\geq 2$ hypotheses $\Hy_1,...,\Hy_m$ and $\alpha\in(0,1)$. Suppose all hypotheses are true and all p-values have the uniform distribution on $(0,1]$. Show that under a certain dependence structure of the p-values, the FWER of Bonferroni is exactly $\alpha$.

\item Assume  the p-values $P_1,...,P_m$ are valid (Assumption \ref{assvalidp}). Prove that Bonferroni's FWER-controlling method ensures that  the expected value of the number of false positives is at most $\alpha$, i.e.,  that $\mathbb{E}(V)\leq \alpha$. Show that this does not necessarily hold for Holm's method.

\item In \S\ref{secadj} a procedure is provided for computing adjusted p-values for step-down (and step-up) methods based on p-values and  fixed critical constants. For the case of step-down methods, prove that this algorithm indeed provides the adjusted p-values $p_{(i)}=\min\{\alpha\in[0,1]: \Hy_i \text{ is rejected by the MTP}\}\wedge 1.$

\end{enumerate}

\begin{sols}

\subsection{Solutions}
\begin{enumerate}

\item
The sorted p-values are $0.010$, $0.012$,  $0.025$, $0.045$ and $0.245$ and Holm's critical values are $c_1=\alpha/5=0.01$, $c_2=\alpha/4=0.0125$, $c_3=\alpha/3=0.0166...$, $c_4=\alpha/2=0.025$, $c_5=0.05$. Clearly $P_{(i)}\leq c_i$ for  $1\leq i \leq 2$, but not for $i=3$. Hence, 2 hypotheses are rejected.

The 2-FWER method, which is also a step-down method,  always rejects at least as many hypotheses as Holm.
For the 2-FWER method, the critical values are 

\begin{equation}
c_i=
 \begin{cases}
\frac{2\alpha}{m}& i\leq 2, \\
\frac{2\alpha}{m+2-i} & i>2,
\end{cases}
\end{equation}
Thus, $c_3=2\alpha/4=0.025$, $c_4=2\alpha/3=0.0333...$ and $c_5=\alpha$. Note that $P_{(3)}=0.025\leq c_3.$
Note that $P_{(4)}=0.045>c_4$. Thus,  the 2-FWER method rejects  3 hypotheses.

\item The FWER is $$\pr\big(\bigcup_{i\in \N}\{P_i\leq \alpha/m\}\big).$$
Note that this is \emph{strictly} smaller than 
$\sum_{i\in \N} \pr(P_i\leq \alpha/m),$ since the events $\{ P_i\leq \alpha/m \}$, $i \in \N$ are overlapping, due to the assumed independence. This sum equals $\alpha(N/m)\leq \alpha$. Thus, $FWER<\alpha$.

\item
Approximately 1, 0.987340, 0.983143,  0.981050 and  0.979795. The relative difference becomes larger as $m$ increases. (Apparently, as $m$ increases, Bonferroni ``pays'' more for allowing for negative dependencies. It can be shown that the ratio decreases to $-\alpha/\log(1-\alpha)$ as $m\rightarrow \infty$. This is approximately 0.974786.)

\item 
Note that the ratio is 
$$\frac{\alpha/m}{1-(1-\alpha)^{1/m}}=$$
$$\frac{\alpha/m}{1-\exp((1/m)\log(1-\alpha))}=$$

$$
\frac{\alpha/m}{
1  - \big(    \frac{(1/m\log(1-\alpha))^0}{0!}  + \frac{(1/m\log(1-\alpha))^1}{1!} +  \frac{(1/m\log(1-\alpha))^2}{2!}...\big)  }=$$
$$  \frac{\alpha}{
m  - \big(    m  +   \frac{\log(1-\alpha)^1}{1!} +  \frac{1/m(\log(1-\alpha))^2}{2!}...\big)  } =$$

$$  \frac{\alpha}{
  -      \frac{\log(1-\alpha)^1}{1!} -  \frac{1/m(\log(1-\alpha))^2}{2!}...  }=$$
  $$  \frac{\alpha}{
  -      \frac{\log(1-\alpha)^1}{1!} +o(m)  }.$$
  As $m\rightarrow\infty$, this converges to 
  $\alpha/(-\log(1-\alpha))$.

\item
We provide two possible solutions. (For the second approach, no complete proof is provided here.)

\emph{First solution:}
Let $i_1,...,i_m$ be a random permutation of $1,...,m$ (uniformly distributed on all possible permutations) and for every $1\leq j \leq m$ let $P_{i_j}$ be uniformly distributed on $((j-1)/m, j/m]$. Clearly, conditionally on $i_1,...,i_m$ these p-values do not have the standard uniform distribution, but marginally (i.e. unconditionally) they do. To see this, consider any $1\leq i \leq m$ and note that for every $1\leq j \leq m$, with probability $1/m$, the conditional distribution of $P_i$ is uniform on $((j-1)/m, j/m]$; hence marginally, $P_i$ is uniform on $(0,1]$.

We now show that for this dependence structure, the FWER of Bonferroni is $\alpha$.
It suffices to show that conditionally on $i_1,...,i_m$, the probability that Bonferroni rejects something is $\alpha$.
To see this, note that conditionally on $i_1,...,i_m$,  Bonferroni will reject at most one hypothesis, which is $\Hy_{i_1}$, and this happens when $P_{i_1}\leq \alpha/m$, which holds with probability $\alpha$.

\emph{Second solution:}
Let $P_1$ be uniformly distributed on $[0,1]$ and for every $2\leq i \leq m$ let $P_i=P_1+(i-1)/m - \lfloor P_1+(i-1)/m \rfloor$ (where $\lfloor P_1+(i-1)/m \rfloor$ is the ``integer part'' of  $P_1+(i-1)/m$).
All the p-values are then standard uniform, and neighbouring p-values  are a distance $1/m$ apart.

It can be checked that for this dependence structure of the p-values, the minimum of the p-values is uniform on $[0,1/m]$, so it is below $\alpha/m$ with probability  $\alpha$, so  the FWER of Bonferroni is then also exactly $\alpha$.

\item 
With Bonferroni's method, it holds that 
$$\mathbb{E}(V) = \mathbb{E}(\sum_{i\in \N}\mathbbm{1}_{\{P_i\leq \alpha m\}})= \sum_{i\in \N} \mathbb{E}\mathbbm{1}_{\{P_i\leq \alpha m\}}=\sum_{i\in \N} \pr\{P_i\leq \alpha m\} \leq N\alpha m\leq \alpha.$$

Note that if all hypotheses are true and all p-values are standard uniform, we get an equality: $\mathbb{E}(V) = \alpha$. In that setting (i.e. $m=N$ and all p-values standard uniform) Holm always rejects the hypotheses that Bonferroni rejects, and sometimes more (since Holm uses larger critical values $c_i$ than Bonferroni). Hence,  it then holds that $\mathbb{E}(V) > \alpha$ for Holm.

\item
It suffices to show that if $\Hy_{(i)}$ is rejected, then $p^{\text{adj}}_{(i)}\leq \alpha$, and vice versa. Indeed, that implies that $p^{\text{adj}}_{(i)}= \alpha$ is the borderline case where  $\Hy_{(i)} $ is barely rejected --- in other words, $p^{\text{adj}}_{(i)}$ is the smallest $\alpha$ for which the procedure rejects $\Hy_{(i)} $.

Note that $\Hy_{(i)}$ is rejected if and only if $p_{(j)}\leq c_j$ for all $j\leq i$, which is equivalent to 
$\max_{1\leq j \leq i}c_j^{-1}p_{(j)}\leq 1$, which is equivalent to $\max_{1\leq j \leq i}a_i p_{(j)}\leq \alpha $, which is equivalent to $p^{\text{adj}}_{(i)}\leq \alpha.$ This finishes the proof.

\end{enumerate}

\end{sols}

\newpage

\section{Closed testing for FWER control} \label{secCT}
In \S\ref{secmtbasics} we discussed some simple methods that control the FWER. One may wonder how we can construct a FWER controlling method that is as powerful as possible. However, one should realize that given some dataset, it is often not known which method will be most powerful. Moreover, not all methods require the same assumptions.
Thus, it often depends on the situation which FWER method can best be used.
Hence, it is useful to know several methods, each of which can be a good choice in certain specific situations.

\emph{Closed testing} is a general recipe for constructing multiple testing procedures that control the FWER or provide confidence on false discovery proportions. Moreover, it turns out that all admissible FWER controlling methods are closed testing procedures (CTPs); recall that ``admissible'' means that the procedure cannot be uniformly improved without violating FWER control. This follows from the fact that a multiple testing procedure that provides FWER control,  is a CTP or can be uniformly improved by one (see additional exercises).
In \S\ref{secCTPfdp}, CTPs also play an important role, although there we do not focus on the FWER, but on false discovery proportions.

The ingredients required for creating  a CTP are so-called \emph{local tests}. Once we have defined the local tests, the CTP that controls the FWER follows from them.
Hence, we will first discuss local tests and then closed testing.

\subsection{Local tests}
Let $\subs$ be the set of all nonempty subsets of $\{1,...,m\}$. Note that $|\subs| = 2^m-1$.
When we consider hypotheses $\Hy_1,...,\Hy_m$, we can also consider the corresponding \emph{intersection} hypotheses, which are all hypotheses $\Hy_{\I}$ with $\I\in \subs$, where we define
$$\Hy_{\I} = \bigcap_{i\in \I}\Hy_i.$$
Thus, $\Hy_{\I}$ is true if and only if  all hypotheses $\Hy_i$ with $i\in \I$ are true.  The set of all intersection hypotheses is sometimes called the \emph{closure} of the set of hypotheses $\Hy_1,...,\Hy_m$. This is because this set is closed under intersections in the sense that if $\Hy_{\I}$ and $\Hy_{\J}$ (with $\I$, $\J\in \subs$) are intersection hypotheses, then their intersection is also an  intersection hypothesis within this family.
The hypotheses $\Hy_1,...,\Hy_m$ that we ``start from'' are called the \emph{elementary} hypotheses. (Sometimes a condition is placed on the elementary hypotheses, e.g. that no elementary hypothesis is implied by another one.) Note that if $\I,\J\in \subs$, then perhaps confusingly,
$$\I\subseteq\J \Rightarrow  \Hy_{\I} \supseteq \Hy_{\J}.$$
For example, if $\I=\{1,2\}$ and $\J=\{1,2,3\}$, then $\Hy_{\J}$ corresponds to the darkest part of the Venn diagram in Figure \ref{fig:Venn}, while $\Hy_{\I}$ corresponds to a larger area.

\begin{figure}[h]
\centering
\begin{tikzpicture}

\fill[blue, opacity=0.35] (-1,0) circle (1.8);
\fill[blue, opacity=0.35] ( 1,0) circle (1.8);
\fill[blue, opacity=0.35] ( 0,1.6) circle (1.8);

\draw[thick] (-1,0) circle (1.8);
\draw[thick] ( 1,0) circle (1.8);
\draw[thick] ( 0,1.6) circle (1.8);

\node at (-2.2,0) {$\mathcal{H}_1$};
\node at ( 2.2,0) {$\mathcal{H}_2$};
\node at ( 0,3) {$\mathcal{H}_3$};
\node at ( 0,0.3) {$\mathcal{H}_{\{1,2,3\}}$};

\end{tikzpicture}
\caption{A hypothesis is a set of distributions. This Venn diagram represents three hypotheses $\Hy_1$, $\Hy_2$, $\Hy_3$ and the corresponding intersection hypotheses. 
The intersection of all three hypotheses, $\Hy_{\{1,2,3\}}$, corresponds to the darkest part of the diagram.} \label{fig:Venn}
\end{figure}

We have already seen some examples of intersection hypotheses. For instance, if  we consider hypotheses $\Hy_1:\theta\leq \theta_0$ and $\Hy_2:\theta\geq \theta_0$, then the corresponding intersection hypothesis is $\Hy_{\{1,2\}}:\theta= \theta_0$. As another example, we considered the global null hypothesis corresponding to a collection of hypotheses $\Hy_1,...,\Hy_m$. This is the hypothesis that all $m$ hypotheses are true, i.e., it is the hypothesis $\Hy_{\{1,...,m\}}$.

A \emph{local test} is simply a hypotheses test  for an intersection hypothesis. We will assume throughout that each local test has level at most $\alpha$.
Of course the global hypothesis $\Hy_{\{1,...,m\}}$ is also an intersection hypothesis, so we will call a global test a ``local test'' as well. Suppose we reject the global hypothesis. Then we know (with $(1-\alpha)100\%$ confidence) that at least one of the $m$ hypotheses is false. Likewise, suppose we reject the intersection hypothesis $\Hy_{\{2,5,6\}}$ for example. Then at least one of the hypotheses $\Hy_2$, $\Hy_5$, and $\Hy_6$ is false (unless we made a false discovery).

In \S\ref{secglobaltests} we considered various global tests. Note that we can use these global tests as local tests. Indeed, a local test \emph{is} a global test applied to a subcollection of hypotheses.

For example, suppose that for every hypothesis we have a valid p-value $P_i$. Consider some $\I\in \subs$. 
By \S\ref{secbonglobal} a valid local test for $\Hy_{\I}$ is then the test that rejects $\Hy_{\I}$ if and only if $\min\{P_i:i\in \I\}\leq \alpha/|\I|$. It will be useful to view a test as an indicator function $\mathbbm{1}(\cdot)$, i.e., a function  that takes the values 0 and 1, where 1 indicates rejection and 0 indicates a non-rejection. Thus, we can write these global tests as $\phi_{\I}$, where $\phi_{\I}$ is the indicator function
\begin{equation} \label{ltBon}
\phi_{\I}=\mathbbm{1}\Big(\min\{P_i:i\in \I\}\leq \alpha/|\I|\Big).
\end{equation}

As another example, suppose that for all nonempty $\I\subseteq \N$, the $P_i$ with $i\in \I$ satisfy Simes' inequality (see \S\ref{secglobalsimes}), i.e., 
\begin{equation} \label{eq:Si}
\pr\Big\{\forall i \in \I: P_{(i)}^{\I}> \frac{i\alpha}{|\I|}\Big\}\geq1-\alpha,
\end{equation}
where we define $P_{(1)}^{\I},..., P_{(|\I|)}^{\I}$ to be  the sorted values of the $P_i$ with $i\in \I$.
 Then a valid local test for 
$\Hy_{\I}$ is 
\begin{equation} \label{lotestSimes}
\phi_{\I}'=\mathbbm{1}\Big(\exists i \in \I: P_{(i)}^{\I}  \leq \frac{i\alpha}{|\I|} \Big).
\end{equation}
Clearly, the local test $\phi_{\I}'$ is uniformly more powerful than $\phi_{\I}$ (but requires the extra assumption that \eqref{eq:Si} holds if $\I\subseteq \{1,...,m\}$). Note that if we consider an elementary hypothesis $\I=\{i\}$, then $\phi_{\I}$ and $\phi_{\I}'$ are both  simply $\mathbbm{1}(P_i\leq \alpha)$.

We are now ready to define the closed testing procedure based on the local tests. It turns out that the elementary hypotheses rejected by  the closed testing procedure based on the local tests \eqref{ltBon}, are the same as those rejected by the Holm method. The closed testing procedure based on the local tests $\phi_{\I}'$ potentially rejects more.

\subsection{The general closed testing method for FWER control} \label{secCTPgeneral}
The closed testing procedure  tests  all  intersection hypotheses $\Hy_{\I}$, $\I\in \subs$. Thus, to every intersection hypothesis, the procedure assigns a 1 (``reject'') or a 0 (``do not reject''). This means that the procedure in principle  considers $2^m-1$ tests. However, the procedure is often also powerful  for simply testing $\Hy_1,...,\Hy_m$, in case we are not interested in the intersection hypotheses. 

As explained, the ingredients of the closed testing procedure are the local tests. Thus, suppose that for every intersection hypothesis $\Hy_{\I}$ we have defined some local test  $\phi_{\I}$. Then the closed testing procedure is defined as follows.

\begin{definition} \label{defctp}
The closed testing procedure (CTP) based on the local tests $\phi_{\I}$, $\I\in \subs$ is the 
procedure that assigns a  decision $\psi_{\I}$ to every $\I\in \subs$ in the following way\footnote{An equivalent formulation is  that the CTP rejects $\Hy_{\I}$ if  and only if for all $I\subseteq \J\in \subs$,  $\phi_{\J}=1.$}:
$$
\psi_{\I} = \min\{\phi_{\J}: \I\subseteq \J\in \subs\}.$$
\end{definition}

Thus, the CTP rejects $\Hy_{\I}$ if and only if  $\phi_{\J}=1$ for al $I\subseteq \J\in \subs$ --- i.e., if all local tests $\phi_{\J}$ with $I\subseteq \J\in \subs$ reject their corresponding intersection hypotheses. 
 The $\psi_{\I}$ are sometimes called the \emph{effective local tests}, to distinguish them from the local tests. The CTP is  the collection of all  effective local tests.  Figure \ref{fig:ctp} and its caption provide further explanation.
Closed testing ensures that the set of rejected intersection hypotheses has  a logical structure: If  $\I\subset \J\in\subs$ and $\I$ is rejected by the CTP, then $\J$ is also rejected. This makes sense since if $\Hy_{\I}$ is false, then one of the $\Hy_i$ with $i\in \I$ is false, which implies that $\Hy_{\J}$ is false.


\begin{figure}[h!] 
\centering

\begin{tikzpicture}[
    node distance=1.8cm and 1.8cm,
    every node/.style={draw, minimum size=1cm, inner sep=4pt},
    crossed/.style={draw=black, line width=0.8pt, text=black},
    accepted/.style={draw=black, text=black},
    crossed out/.style={crossed, append after command={
        \pgfextra{
            \draw[red, line width=1pt] (\tikzlastnode.north west) -- (\tikzlastnode.south east);
            \draw[red, line width=1pt] (\tikzlastnode.north east) -- (\tikzlastnode.south west);
        }
    }},
    arrow/.style={->, black, thick}
]

\node[crossed out] (H123) {$H_1 \cap H_2 \cap H_3$};

\node[crossed out, below left=of H123] (H12) {$H_1 \cap H_2$};
\node[crossed out, below=of H123] (H13) {$H_1 \cap H_3$};
\node[accepted, below right=of H123] (H23) {$H_2 \cap H_3$};

\node[crossed out, below=of H12] (H1) {$H_1$};
\node[crossed out, below=of H13] (H2) {$H_2$};
\node[accepted, below=of H23] (H3) {$H_3$};

\draw[arrow] (H123) -- (H12);
\draw[arrow] (H123) -- (H13);
\draw[arrow] (H123) -- (H23);

\draw[arrow] (H12) -- (H1);
\draw[arrow] (H12) -- (H2);

\draw[arrow] (H13) -- (H1);
\draw[arrow] (H13) -- (H3);

\draw[arrow] (H23) -- (H2);
\draw[arrow] (H23) -- (H3);

\end{tikzpicture} 
\caption{This figure shows all intersection hypotheses in case there are three elementary hypotheses. Arrows indicate the logical relationships one level downwards. For example, if $\Hy_2\cap \Hy_3$ is true, then $\Hy_2$ and $\Hy_3$ must be true. The CTP only rejects a hypothesis if all hypotheses that imply it are rejected by their local tests. For example, suppose that the hypotheses indicated by crosses are rejected by their local tests. Which hypotheses are then rejected by the CTP?}     \label{fig:ctp}
\end{figure}

There are $2^m-1$ local tests, and evaluating them all is computationally undoable if $m$ is large. However, note that if an intersection hypothesis is not rejected, we know that all hypotheses implied by it (i.e. all hypotheses that can be reached by following the arrows in the closed testing tree) are also not rejected by the CTP. This can speed the computations up, but such a procedure can still be computationally feasible. Fortunately, in some cases, exact computational shortcuts are possible. There will be discussed later.

The reason why closed testing is called this way, is that the family of hypotheses $\Hy_{\I}$ with $\I\subseteq\{1,...,m\}$ is closed under intersections. The CTP tests the hypotheses in the closure  of the original hypothesis family $\Hy_1,...,\Hy_m$.
Even if we are  only interested in $\Hy_1,...,\Hy_m$ and not in the intersection hypotheses, CTPs are often powerful. We now check that the CTP indeed controls the FWER. In case we are interested in all intersection hypotheses, the FWER is $\pr(V^{\text{tot}}>0)$, where $V^{\text{tot}}$ is  the number of false positives among all rejected intersection hypotheses: 
$$V^{\text{tot}}= |\{\I\subseteq\N:  \I\neq\emptyset\text{ and }  \psi_{\I}=1\}|.$$
In case we are interested in the number of false positives among the rejected elementary hypotheses $\Hy_1,...,\Hy_m$, we consider
$$V^{\text{elem}}= |i\in \N:   \psi_{\{i\}}=1\}|.$$
\begin{theorem} \label{thmctp}
Consider hypotheses $\Hy_1,..,\Hy_m$ and suppose that for each $\I\in \subs$ we have a valid local test $\phi_{\I}$,  i.e., $\pr(\phi_{\I}=1)\leq \alpha $ if $\I\in \N$.
Then the corresponding CTP (see Definiton \ref{defctp}) controls the FWER, in the sense that $\pr(V^{\text{tot}}>0)\leq \alpha$. 
\end{theorem}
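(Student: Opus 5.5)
The plan is the classical argument: all false rejections are controlled by a single local test, the one for the intersection hypothesis $\Hy_{\N}$ built from exactly the true hypotheses. I would treat two cases.

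\emph{Case $\N=\emptyset$.} No nonempty $\I\subseteq\N$ exists, so $V^{\text{tot}}=0$ deterministically and the claim is trivial.

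\emph{Case $\N\neq\emptyset$.} Then $\N\in\subs$, and $\Hy_{\N}=\bigcap_{i\in\N}\Hy_i$ is true, since every $\Hy_i$ with $i\in\N$ is true. By the validity assumption on the local tests (reading the condition ``$\I\in\N$'' in the statement as $\I\subseteq\N$), the event
$$\E=\{\phi_{\N}=0\}$$
has probability at least $1-\alpha$.

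The core step is to show that on $\E$ we have $V^{\text{tot}}=0$. Take any nonempty $\I\subseteq\N$. Then $\N$ is one of the supersets $\J$ with $\I\subseteq\J\in\subs$ appearing in Definition \ref{defctp}. Hence
$$\psi_{\I}=\min\{\phi_{\J}:\I\subseteq\J\in\subs\}\leq\phi_{\N}=0.$$
So no intersection hypothesis indexed by a subset of $\N$ is rejected by the CTP, and $V^{\text{tot}}=0$. It follows that
$$\pr(V^{\text{tot}}>0)\leq\pr(\E^c)=\pr(\phi_{\N}=1)\leq\alpha.$$

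I would also note that $V^{\text{elem}}\leq V^{\text{tot}}$, because each rejected $\{i\}$ with $i\in\N$ is a rejected $\I\subseteq\N$. Elementary FWER control therefore follows as a corollary.

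There is no real obstacle here. The only point needing care is to identify which intersection hypotheses count as ``true'' and to check that $\Hy_{\N}$ lies above all of them in the closure. No dependence assumptions and no union bound are needed, because only one local test, $\phi_{\N}$, is ever used probabilistically.
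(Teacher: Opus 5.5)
Your proposal is correct and follows the paper's proof essentially step for step. Both arguments work on the event $\{\phi_{\N}=0\}$, which has probability at least $1-\alpha$. On that event, every $\psi_{\I}$ with $\I\subseteq\N$ is at most $\phi_{\N}=0$, because $\N$ is one of the supersets in the minimum. Your reading of the condition ``$\I\in\N$'' as $\I\subseteq\N$ is the intended one.
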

\begin{proof}
If all hypotheses are false, there is nothing to prove, so suppose $\N$ is not empty.
Let $\E = \{\phi_{\N}=0\}$ be the event that $\Hy_{\N}$ is not rejected by its local test. Note that $\pr(\E)\geq 1-\alpha$ since all local tests are valid.
Suppose $\E$ holds.
Consider any true hypothesis $\Hy_{\I}$, $\I\in \subs$. Since $\I\subseteq\N$, it follows by definition of the CTP that $\Hy_{\I}$ is not rejected by the CTP. 

We conclude that with probability at least $1-\alpha$, no true intersection hypotheses are rejected. Hence, $\pr(V^{\text{tot}}>0)\leq \alpha$.
\end{proof}

Clearly $V^{\text{elem}}\leq V^{\text{tot}}$. Even if the CTP does not reject any elementary hypotheses, it might still reject some other intersection hypotheses. 
However, it turns out that in several important cases, no intersection hypotheses can be rejected by the CTP in case no elementary hypotheses are rejected by the CTP. A CTP is called \emph{consonant} if it is always the case that for every $\Hy_{\I}$ that is rejected by the CTP, there is an $i\in \I$ such that $\Hy_{i}$ is rejected by the CTP. For such procedures, all interesting information is contained in the rejections (by the CTP) of the elementary hypotheses. Indeed, if we know that a CTP is consonant and we know which elementary hypotheses it rejects, then the intersection hypotheses that it rejects  are precisely those that are implied by the elementary hypotheses that are rejected by the CTP. For consonant procedures $V^{\text{elem}}=0\Leftrightarrow V^{\text{tot}}=0$.
If a CTP is nonconsonant, it may reject some intersection hypotheses without rejecting any elementary hypotheses. This idea will be important in \S\ref{secCTPfdp}. 

\begin{example}
Consider a consonant CTP with three elementary hypotheses. Suppose the elementary hypotheses that the CTP rejects are $\Hy_1$ and $\Hy_3$. Then we know that the intersection hypotheses that are locally rejected are the $\Hy_{\I}$ with $1\in \I$ or $3\in \I$. Those rejections are in a sense not interesting. For example, if $\Hy_1$ is false, then it immediately follows by elementary logic that e.g. $\Hy_{\{1,2\}},$ $\Hy_{\{1,3\}}$ and $\Hy_{\{1,2,3\}}$   must be false too.
\end{example}

An example of a consonant CTP is the CTP with local tests based on Bonferroni (this will be proved in \S\ref{secCTPfdp}). As already mentioned, the elementary hypotheses rejected by this CTP are the hypotheses that Holm rejects. 
 An example of a nonconsonant (or ``dissonant'') CTP is the CTP with local tests based on Simes' inequality (see exercises). We will return to the topic of nonconsonance in \S\ref{secCTPfdp}.

It follows from the proof of Theorem \ref{thmctp} that for the CTP to ensure that $\pr(V^{\text{tot}}>0)\leq \alpha$, it suffices that the local test $\phi_{\N}$ has type I error rate at most $\alpha$. Thus, in this sense, we do not require all local tests to be valid, but only $\phi_{\N}$. However, we do not know $\N$ of course; often it could be any subset of $\{1,...,m\}$, hence in a sense we need every local test to be valid.

\subsection{Example of a CTP: Holm} \label{secCTPHolm}
Recall that a local test is simply a global test applied to an intersection hypothesis $\Hy_{\I}$, $\I\in \subs$. Let us first consider the test from \S\ref{secbonglobal} by Bonferroni. This is the test
\begin{equation} \label{localtestHolm}
\phi_{\I} = \mathbbm{1}\big(\min\{P_i: i\in \I\}\leq \alpha /|\I|\big).
\end{equation}   
If we consider the corresponding CTP, then it turns our that it rejects exactly the same elementary hypotheses as Holm's method. Thus, to check which elementary hypotheses are rejected by the CTP, we do not need to perform $2^m-1$ local tests, but can simply use Holm's method, which is very fast. Thus, Holm's method is a computational \emph{shortcut} for the full CTP. (As has been mentioned and will be proved in \S\ref{secCTPfdp}, this CTP is consonant, so in a sense only the rejected elementary hypotheses are interesting. Holm tells us which these are.)

\begin{theorem} \label{thmHolmisCTP}
The  CTP with local tests  \eqref{localtestHolm} rejects exactly the same elementary hypotheses as Holm's method.
\end{theorem}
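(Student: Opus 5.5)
We prove both inclusions at once by showing that, for every $1\leq r\leq m$, the hypothesis with the $r$-th smallest p-value $p_{(r)}$ is rejected by the CTP if and only if $p_{(j)}\leq \alpha/(m-j+1)$ for all $j\leq r$. By the second formulation of Holm's method in \S\ref{secHolm}, the latter condition is exactly Holm's rejection rule. Ties can be broken by an arbitrary fixed ordering, since the local tests \eqref{localtestHolm} depend only on the values of the p-values. Write $\Hy_{(j)}$ for the hypothesis with p-value $p_{(j)}$.

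\emph{Step 1 (CTP rejection implies Holm rejection).} Suppose the CTP rejects $\Hy_{(r)}$. Fix $j\leq r$ and let $\J=\{(j),(j+1),\dots,(m)\}$, which has $|\J|=m-j+1$ and contains $(r)$. By Definition \ref{defctp}, $\phi_{\J}=1$. Since $\min\{P_i:i\in\J\}=p_{(j)}$, this gives $p_{(j)}\leq \alpha/(m-j+1)$. As this holds for every $j\leq r$, Holm rejects $\Hy_{(r)}$.

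\emph{Step 2 (Holm rejection implies CTP rejection).} This is the main step. Suppose $p_{(j)}\leq\alpha/(m-j+1)$ for all $j\leq r$, and let $\J\in\subs$ be any set containing $(r)$. We must show $\phi_{\J}=1$. Let $j$ be the smallest rank occurring in $\J$. Then $j\leq r$, and $\min\{P_i:i\in\J\}=p_{(j)}$. Moreover $\J$ is contained in the set of indices of ranks $j,\dots,m$, so $|\J|\leq m-j+1$. Hence
\[
p_{(j)}\leq \frac{\alpha}{m-j+1}\leq \frac{\alpha}{|\J|},
\]
so $\phi_{\J}=1$. Since $\J$ was arbitrary, $\psi_{\{(r)\}}=1$.

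The only real obstacle is the counting argument in Step 2: one must choose, for an arbitrary superset $\J$, the element of smallest rank, and observe that $\J$ can contain at most $m-j+1$ indices. Handling ties via a fixed ordering is a minor bookkeeping issue and does not change the argument.
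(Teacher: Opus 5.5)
Your proof is correct and rests on the same two observations as the paper's: the tail set $\{(j),\dots,(m)\}$ for the CTP-to-Holm direction, and, for the converse, the bound $|\J|\leq m-j+1$ where $j$ is the smallest rank in $\J$. The difference is structural: the paper runs an induction on the number of rejections and uses the inductive hypothesis to dispose of supersets containing a smaller index. Your pointwise characterization handles all supersets at once, which is slightly cleaner and is the same device the paper later uses in \eqref{eq:trick}.
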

\begin{proof}
Without loss of generality, we can rename the hypotheses such that the  p-values are ordered: $p_{1}\leq ... \leq p_{m}$. (Thus, $\Hy_1$ is the hypothesis with the smallest p-value, etc.) Indeed, we can do this since clearly,  both methods simply reject a set of hypotheses corresponding to the smallest so-many p-values, and how many hypotheses they reject does not depend on the original order.

Note that  the methods reject at least one hypothesis if and only if Bonferroni's global test rejects $\Hy_{\{1,...,m\}}$.
Consequently,  the  CTP rejects at least $\Hy_{1}$ if and only if   Holm rejects at least $\Hy_{1}$.
\\
\\
\emph{Part 1: Showing that  Holm rejects at least as many as the CTP.}
Suppose that the CTP rejects $2\leq k \leq m$ hypotheses. We show that Holm does as well.
We give a proof by induction.
Let $1\leq j< k$ and suppose we have shown that Holm rejects $\Hy_1,...,\Hy_j$. We show that Holm also rejects $\Hy_{j+1}$.
Since Holm rejects  $\Hy_1,...,\Hy_j$, we know that $\forall 1\leq i\leq j:  p_i\leq \alpha/(m+1-i)$. We must show that also 
$p_{j+1}\leq \alpha/(m+1-(j+1))$. We know that the CTP rejects $\Hy_{j+1}$, i.e., 
$$ \forall \{{j+1}\} \subseteq \J \in \subs: \phi_{\J}=1  ,$$
i.e., 
$$ \forall\{{j+1}\} \subseteq \J \in \subs: \min\{p_i: i\in \J\}\leq \alpha/ |\J|.$$
Hence, in particular, for $\J=\{j+1,...,m\}$,
$$\min\{p_i: i\in \J\}\leq \alpha/ |\J|,$$
which means that
$$p_{j+1}\leq \alpha/ (m-j),$$ which means that indeed Holm rejects $\Hy_{j+1}$ as well.
\\
\\
\emph{Part 2: Showing that  the CTP rejects at least as many as Holm.}
Suppose that Holm rejects $2\leq k \leq m$ hypotheses. We show that the CTP does as well. We again use induction.
Let $1\leq j< k$ and suppose we have shown that the CTP  rejects $\Hy_1,...,\Hy_j$. We show that the CTP also rejects $\Hy_{j+1}$.
We know that 
$\forall 1\leq i\leq j+1:  p_i\leq \alpha/(m+1-i)$.
Further, we know that the CTP rejects $\Hy_j$, i.e.,
 $$\forall\{{j}\} \subseteq \J \in \subs: \min\{p_i: i\in \J\}\leq \alpha/ |\J|,$$
and we must show that 
$$ \forall\{{j+1}\} \subseteq \J \in \subs: \min\{p_i: i\in \J\}\leq \alpha/ |\J|.$$
Since the CTP rejects $\Hy_1,...,\Hy_j$, we know that once we include an index $l<j+1$ in $\J$, then $\psi_{\{l\}}=1$ and hence  $\phi_{\J}=1$ and hence  $\min\{p_i: i\in \J\}\leq \alpha/ |\J|$.
Hence, we only need to check that for all $\J\subseteq\{j+1,...,m\}$ that contain $j+1$, we have $\min\{p_i: i\in \J\}\leq \alpha/ |\J|$. But this clearly holds since for such $\J$ we have  $\min\{p_i: i\in \J\}=p_{j+1}\leq \alpha/(m+1-(j+1))\leq \alpha/|\J|$, since $\J\leq m-j$ for such $\J$.
\end{proof}

\subsection{Example of a CTP: using Simes-type local tests}\label{secHommel}
We will now consider Simes local tests, defined in \eqref{lotestSimes} already:
$$\phi_{\I} = \mathbbm{1}\Big( \bigcup_{i\in \I} \{P^{\I}_{(i)}\leq \frac{i\alpha}{|\I|} \}  \Big),$$
where $P^{\I}_{(1)}\leq ...\leq P^{\I}_{(|\I|)}$ are  the sorted p-values $P_i$ with $i\in \I$.

As explained, the CTP is  valid if the  local test $\phi_{\N}$ rejects with probability at most $\alpha$. Thus, we must assume that inequality \eqref{SimesProbEq} holds. As already discussed, this is a very mild assumption on the dependence structure of the null  p-values.

\citt{hommel1988stagewise} first discussed closed testing based on Simes local tests. 
When statisticians say ``Hommel's method'', note that they usually do not mean the procedure that tests all intersection hypotheses, but just the method for testing the  $m$ elementary hypotheses. In general, ``closed testing'' does not necessarily mean that we are interested in all $2^m-1$ intersection hypotheses: often we just use a closed testing method for testing the elementary hypotheses. (However, it turns out that the CTP based on Simes local tests is nonconsonant, so it can be useful to not merely look at the elementary rejections.)

The CTP based on Simes-type local tests is much more complex than Holm's method in the sense that it is less straightforward to come up with a computationally efficient algorithm (i.e. a shortcut) for evaluating which of the elementary hypotheses are rejected. 
If we naively performed the CTP, this would become computationally undoable for moderate values of $m$.
Hommel's original algorithm has a complexity that is quadratic in $m$, which is already much better than a completely naive implementation.  \citt{meijer2019hommel} were the first to come up with an algorithm with a complexity that is linear in $m$ (after sorting the p-values, which takes $O(m\log(m))$ time). 

\citt{hommel1988stagewise} showed that $\Hy_i$ is rejected by the CTP if and only if
$$h(\alpha)p_i\leq \alpha,$$
where 
$$ h(\alpha) = \max\big\{i\in\{0,...,m\}:  i p_{(m-i+j)}> j\alpha \text{ for all } j\in\{1,...,i\}\big\} .$$  
\citt{meijer2019hommel} compute  $h(\alpha)$ in linear time (after sorting the p-values).

The algorithm of  \citt{meijer2019hommel}  is implemented in the R package \emph{hommel} on CRAN \citp{hommel}. Its function of the same name, \emph{hommel()},  can be used as follows:
\verb|hommel(p, simes = TRUE)|, where \verb|p| contains the p-values. \verb|simes = TRUE| gives the method described above and \verb|simes = FALSE| gives a more conservative version of the method that does not require any assumptions on the dependence structure. The output of the function are the Hommel-adjusted p-values. Thus, all hypotheses with adjusted p-values at most $\alpha$ can be rejected.

\subsection{Example: one-way ANOVA model} \label{secANOVA-CT}
In \S\ref{secglobalanova} we considered a global test for the one-way balanced ANOVA model. In the section we explore closed testing for  this model. Some of these tests also have versions for non-balanced ANOVA models.

\subsubsection{Closed testing  method}
The test from \S\ref{secglobalanova} rejects when the statistic \eqref{eqttTW},
$$ \sqrt{2}\max_{1\leq i < i'\leq k} |T_{ii'}| =\frac{\max_{1\leq i \leq k} \bar{Y}_i- \min_{1\leq i \leq k}\bar{Y}_{i'}}{\hat{\sigma}/\sqrt{n}},$$
 is large. 
As discussed, under the global null hypothesis, this test statistic is a Studentized range variable with parameter $k$ and $\nu$ degrees of freedom, where $\nu=k(n-1)$.
We rejected the global null hypothesis when 
$$   \sqrt{2}\max_{1\leq i < i'\leq k} |T_{ii'}| \geq Q_{k,\nu}^{(1-\alpha)}.$$

To see the problem through the lens of closed testing, we must first define elementary hypotheses. These are the hypotheses
$\Hy_{\A}$ where $\A\subseteq\{1,...,k\}$ and $|\A|=2$. \emph{Here we interpret the notation} $\Hy_{\A}$ \emph{not as earlier in this section, but as in}  \S\ref{secFWERanova}: \emph{it  is the hypothesis that the} $\theta_i$ \emph{with} $i\in \A$ \emph{are equal to each other.} Clearly, the number of elementary hypotheses is $\binom{k}{2}$. We will call the number of elementary hypotheses $m$, like before. 

We now consider the intersection hypotheses, i.e., all possible intersection of the $m$ elementary hypotheses. At first sight, there seem to be $2^m-1$ intersection hypotheses. However, we are then counting some intersection hypotheses multiple times. For example, consider the elementary hypotheses $\Hy_{\{1,2\}}$,  $\Hy_{\{2,3\}}$ and  $\Hy_{\{1,3\}}$. The intersections  $\Hy_{\{1,2\}}\cap \Hy_{\{2,3\}}$, $\Hy_{\{2,3\}}\cap\Hy_{\{1,3\}}$ and $\Hy_{\{1,2\}}\cap\Hy_{\{1,3\}}$ are in fact the same: they are all equal to $\Hy_{\{1,2,3\}}$. Further, note that  all intersection hypotheses can be written as intersections of the form $$\Hy_{\A_1}\cap\Hy_{\A_2}\cap...$$ with  $\A_1, \A_2,...$ disjoint. 

\subsubsection{Shortcuts}
As always, naive closed testing is computationally infeasible when $m$ is large, so it is useful to look for shortcuts. 
\begin{itemize}
\item
\textbf{Tukey-Welsch.} One important shortcut is the Tukey-Welsch procedure from \S\ref{secFWERanova}. The advantage of Tukey-Welsch is that most of the intersection hypotheses do not need to be individually considered, since we only need to consider some of the hypotheses $\Hy_{\A}$.
Tukey-Welsch is considered to have good power, although its power is not equal to that of the full CTP (the full CTP is a uniform improvement of Tukey-Welsch).

\item
\textbf{Newman-Keuls and Begun-Gabriel.}
In case studentized range test statistics are used, a small modification of the Tukey-Welsch procedure gives a method that is even faster and is actually uniformly more powerful than the CTP. This method is referred to as the Newman-Keuls procedure \citp[][p.66]{hochberg1987multiple}. This method is not known to control the FWER.
However, since it is uniformly more powerful than the CTP, and since the CTP is uniformly more powerful than Tukey-Welsch,  we can use Newman-Keuls as an upper bound for the CTP's decision and Tukey-Welsch as a lower bound for the CTP's decision.
 This can be helpful since we know that all hypotheses that are rejected by Tukey-Welsch will also be rejected by the CTP and all hypotheses that are not-rejected by Newman-Keuls will not be rejected by the CTP. Hence, after running Tukey-Welsch and Newman-Keuls, there is  only a limited set of hypotheses left that is contentious, i.e., for which it should be further checked whether they can be rejected or not. (This is exploited by a method by Begun and Gabriel, see  \citealp[][p.122]{hochberg1987multiple}.)

The Newman-Keuls method is defined as folows. We let $\alpha_p:=\alpha$ for every $2\leq p \leq k$.
We then re-name the  parameters such that $\hat{\theta}_{1}\leq ... \leq \hat{\theta}_{k}$.
Then we consider Tukey-Welsch, but with this simpler definition of $\alpha_p$. Note that if this method rejects
$\Hy_{\A}$ with $\min(\A)=i$ and $\max(\A)=j$, then it follows that all hypotheses implied by $\Hy_{\A}$ and containing $i$ and $j$ are rejected, if studentized range statistics are used.  Thus, when we reject $\Hy_{\A}$, we can automatically reject all subsets of size $|\A|-1$ as well, except $|\A|\setminus \{i\}$ and $|\A|\setminus \{j\}$. Thus, if we reject $\Hy_{\A}$, then we can proceed by checking  $\Hy_{|\A|\setminus \{i\}}$ and $\Hy_{|\A|\setminus \{j\}}$, etcetera. In this way, Newman-Keuls saves time compared to Tukey-Welsch.

\item
\textbf{Adaptation of Tukey-Welsch.}
Finally, we can consider the following procedure, which is somewhat more conservative than Tukey-Welsch but faster \citp[][p.115]{hochberg1987multiple}. This adaptation can be used when studentized range test statistics are used.
First note that Newman-Keuls exploits the fact that $Q_{p,\nu}^{(1-\alpha_p)}$ decreases as $p$ decreases, when $\alpha_p:=\alpha$ for every $2\leq p \leq k$. This is not guaranteed is $\alpha_p$ is defined as in Tukey-Welsch, i.e., we may then sometimes have $Q_{p,\nu}^{(1-\alpha_{p-1})} > Q_{p,\nu}^{(1-\alpha_p)}$. To resolve this, we can enforce montonicity by making each  $Q_{p,\nu}^{(1-\alpha_{p+1})}$ at least as large as the previous critical value  $Q_{p,\nu}^{(1-\alpha_{p})}$. We then obtain a method that is somewhat more conservative than Tukey-Welsch, but for which we can use an algorithm analogous to that of Newman-Keuls.

\item
\textbf{Other methods for ANOVA models.}
There are also other methods for testing in ANOVA models.
A well-known method is Fisher's LSD (``Least Significant Difference'') procedure, but it only provides weak FWER control (as defined in \S\ref{secdefFWER}). It proceeds by first running a global F-test. If the global null is rejected, the method then tests all pairwise hypotheses without any further multiple testing correction.

Faster but less powerful alternatives for Tukey-Welsch are Tukey's HSD (``Honestly Significant Difference'') method and  Tukey-Kramer \citp{fay2022statistical}. These do provide strong FWER control.

\end{itemize}

\subsection{Exercises}
\begin{enumerate}

\item Remove and add some crosses in the ``closed testing tree'' from Figure \ref{fig:ctp} (the crosses represent the hypotheses rejected by the their local tests) and decide which hypotheses are rejected by the closed testing procedure. Additionally, draw the closed testing tree in case there are 4 elementary hypotheses.

\item  A multiple testing procedure --- which decides for each of the hypotheses in some family whether it is rejected or not --- is called \emph{coherent} if for each pair of hypotheses $\Hy^a$ and $\Hy^b$ in the family the following always holds: if $\Hy^a\subseteq \Hy^b$\footnote{Recall that a hypothesis is a set of distributions, so $\Hy^a\subseteq \Hy^b$ means that if the true distribution is in $\Hy^a$, it is also in $\Hy^b$. In other words, if $\Hy^a$ is true, then $\Hy^b$ is true, i.e., $\Hy^a$ implies $\Hy^b$.} and $\Hy^b$ is rejected, then $\Hy^a$ is rejected.

Consider hypotheses $\Hy_1,...,\Hy_k$ (these need not be elementary hypotheses) and consider some incoherent MTP for these $k$ hypotheses (i.e. a procedure that ``accepts'' or rejects each hypothesis). Assume the procedure controls the FWER at level $\alpha$. Now consider a new MTP, which rejects each hypothesis $\Hy_i$ ($1\leq i \leq k$) that implies a hypothesis that is rejected by the original procedure. Thus, the new procedure rejects the same hypotheses as the old one, and possibly more. Prove that the new procedure is coherent   and that it also controls the FWER at level $\alpha$  (the proof is short).

\item  

Consider some family $\Hy_1,...,\Hy_k$ of hypotheses (not necessarily elementary hypotheses).  Recall that a multiple testing procedure for this family  is called \emph{coherent} if for each pair of hypotheses $\Hy_i$ and $\Hy_i$ in the family the following holds: if 
$\Hy_i\subseteq \Hy_j$  
 and $\Hy_j$ is rejected, then $\Hy_i$ is rejected.

 Consider a coherent MTP for the family  $\Hy_1,...,\Hy_k$ and suppose the procedure has FWER at most $\alpha$. 
We will now prove that this MTP is the result of applying closed testing based on a collection of local tests with size at most $\alpha$. Thus, every coherent procedure can be seen as a closed testing procedure.

Consider the closure (under intersections) of the  family $\Hy_1,...,\Hy_k$ and define a local test for each hypothesis $\Hy$ in this closure as follows. If $\Hy$ is in the original family, reject $\Hy$ locally if and only  the original MTP rejects it. Otherwise, reject  $\Hy$ locally if and only if at least one of the locally rejected hypotheses (by the new tests) among $\Hy_1,...,\Hy_k$ is implied by $\Hy$.

(a) Now consider the CTP based on these local tests.  Prove that this CTP has FWER (over all intersection hypotheses) at most $\alpha$.

(b)
Prove that among $\Hy_1,...,\Hy_k$, the hypotheses rejected by this CTP are the same as those rejected by the original MTP.

\item 
Let $\alpha\in(0,1)$. Consider $m\geq 3$ elementary hypotheses with corresponding p-values $P_1,...,P_m$. Use a counterexample to show that the corresponding CTP based on Simes local tests is not necessarily consonant. (\emph{Hint:} Consider e.g. $p_1=...=p_{m-1}=\alpha\frac{m-1}{m}$ and $p_m>\alpha$.)

\item
Prove that if $m=2$, the CTP based on Simes local tests is consonant.

\item $\bigstar$ Explain that in the specific context of \S\ref{secANOVA-CT}, the number of distinct intersections of elementary hypotheses equals the number of partitions of the set $\{1,...,k\}$, minus 1. (A partition is a collection of subsets of $\{1,...,k\}$, such that each element of  $\{1,...,k\}$ is contained in exactly  one of these subsets.)

\end{enumerate}

\begin{sols}

\subsection{Solutions}
\begin{enumerate}

\item In case there are 4 hypotheses, the closed testing tree consists of 4 layers. The top layer consists of 1 hypothesis, $\Hy_1\cap....\cap\Hy_4$. The second layer consists of the four hypotheses $\Hy_1\cap \Hy_2\cap\Hy_3$, $\Hy_1\cap \Hy_2\cap\Hy_4$, $\Hy_1\cap \Hy_3\cap\Hy_4$, $\Hy_2\cap \Hy_3\cap\Hy_4$. The third layer consists of $\binom{4}{2}=6$ hypotheses and the fourth layer consists of the four elementary hypotheses $\Hy_1,....,\Hy_4$. From each node, you should draw an arrow to the nodes directly below it corresponding to the implied hypotheses. For example, from $\Hy_1\cap \Hy_2\cap\Hy_3$ there should be arrows to $\Hy_1\cap \Hy_2$, $\Hy_1\cap\Hy_3$ and $\Hy_2\cap\Hy_3$.

\item Consider one of the hypotheses. By construction, all hypotheses that imply it are rejected by the new procedure. Hence, the new procedure is coherent.
Let $\E$ be the event that the old procedure makes no false discoveries. Thus, all hypotheses rejected by the original procedure are false. The new procedure rejects those hypotheses, plus hypotheses which imply them, which must als be false (If $\Hy_i$ implies $\Hy_j$ and  $\Hy_j$ is false, then $\Hy_i$ must be false.)

\item (a)

With probability at least $1-\alpha$, the original MTP makes no false discoveries. Suppose it makes no false discoveries.   Thus, for the new procedure, all local rejections of hypotheses among $\Hy_1,...,\Hy_k$ are correct. The new procedure then also rejects every hypotheses $\Hy$ that implies one of  the rejected hypotheses among $\Hy_1,...,\Hy_k$. Clearly, if $\Hy_i$ is correctly rejected, then it is also correct to reject all hypotheses that imply $\Hy_i$. Thus, with probability at least $1-\alpha$, none of the local tests of the new procedure make a false discovery. Consequently, the new CTP has FWER at most $\alpha$.

(b) Consider $1\leq j \leq k$ and suppose $\Hy_j$ is rejected by the original procedure. We will show that it is rejected by the defined closed testing procedure. Note that the local test rejects $\Hy_j$. Moreover, by coherence, all hypotheses among $\Hy_1,...,\Hy_k$ that imply $\Hy_j$, are rejected by the original procedure. Consider an intersection hypothesis $\Hy_{\I}$ with $j\in \I$.  $\Hy_j$ is rejected locally and is implied by $\Hy_{\I}$, so by definition of the local tests, $\Hy_{\I}$ is rejected locally. Thus, all $\Hy_{\I}$ with $j\in \I$ are rejected locally, so $\Hy_j$ is rejected by the defined CTP.

Conversely, suppose $\Hy_j$ is rejected by the defined closed testing procedure. Then $\Hy_j$ is rejected locally, and by construction this means the original procedure rejects it. This finishes the proof.

\item Suppose  $p_1=...=p_{m-1}=\alpha\frac{m-1}{m}$ and $p_m>\alpha$. 
Note that $\Hy_{\{1,...,m\}}$ is rejected by its local test, since $p_{m-1}\leq\alpha\frac{m-1}{m}$.
Since $\Hy_{\{1,...,m\}}$ is at the top of the closed testing tree, it directly follows that it is rejected by the CTP.

Now consider any $\I\subseteq \{1,...,m\}$ with $|\I|=m-1$ and $m\in \I$. Let $p_{(1)}^{\I}\leq...\leq p_{(m-1)}^{\I} $ be the sorted p-values corresponding to $\I$.
$\Hy_{\I}$ is rejected by its local test if and only if there is an $1\leq i \leq m-1$ such that $p_{(i)}^{\I}\leq c^\I_i$, where $c^\I_i =\frac{\alpha i}{|\I|}$.
Note that $\Hy_{\I}$ is not rejected by its local test, since $p^{\I}_{|\I|-1}= \alpha\frac{ m-1}{m}$, which exceeds the critical value $c^{\I}_{|\I|-1}=\alpha\frac{ |\I|-1}{|\I|}=\alpha\frac{ m-2}{m-1}$, and likewise the other p-values corresponding to $\I$ exceed their critical values.

We have seen that none of the hypotheses $\Hy_{\I}$ with $\I\subseteq \{1,...,m\}$ and $|\I|=m-1$ and $m\in \I$ are  rejected by the their local tests. For every $1\leq i \leq m$ there is such an $\I$ which contains $i$. It follows that none of the elementary hypotheses are rejected by the CTP. However, $\Hy_{\{1,...,m\}}$ is rejected by the CTP. Hence, the procedure is nonconsonant.

\item
Apart from the elementary hypotheses $\Hy_1$ and $\Hy_2$, say, there is only one intersection hypothesis, $\Hy_{12}$. Let $p_1, p_2$ be the p-values corresponding to $\Hy_1$ and $\Hy_2$.
Suppose $\Hy_{12}$ is rejected by the CTP. 
That means that either $p_{(1)}\leq \alpha\frac{1}{2}$ or $p_{(2)}\leq \alpha$ (or both). Thus, at least one of the p-values $p_1, p_2$ is at most $\alpha$. 
For $i\in\{1,2\}$, the local test for $\Hy_i$ simply rejects when $p_i\leq \alpha$. Hence, at least one of the elementary hypotheses is rejected locally, and hence also by the CTP.
Thus, if $\Hy_{12}$ is rejected by the CTP, then at least one of the elementary hypotheses is rejected. We conclude that the method is consonant when $m=2$.

\item 
Let $\mathcal{P}$ be the set of all possible partitions of $\{1,...,k\}$ exluding the partition that consists of $m$ singletons.
It suffices to show that there is a one-to-one mapping (bijection) from $\mathcal{P}$ to the set of all intersection hypotheses, as follows. We define this mapping as follows: we map every partition in $\mathcal{P}$ to the intersection hypothesis $\Hy_{\A_1}\cap \Hy_{\A_2}\cap ....$, where $\A_1,\A_2,...$ are the elements of the partition that contain at least 2 elements. To see that this is a one-to-one mapping, consider a specific intersection hypothesis and note that it can be uniquely written as $\Hy_{\A_1}\cap \Hy_{\A_2}\cap ....$, where $\A_1,\A_2,...$ are disjoint and have size at least 2. Clearly, there is only one element in $\mathcal{P}$ that maps to this intersection hypothesis --- namely, the partition that contains $\A_1,\A_2,...$ and that contains all elements of $\{1,...,k\}$ that are not in $\A_1\cup\A_2,...$ as singletons.

\end{enumerate}

\end{sols}

\newpage

\section{Resampling-based FWER control} \label{secresamplingFWER}
In \S\ref{secgroupinv} and \S\ref{secboot} in particular, we discussed resampling-based nonparametric and semiparametric testing (e.g. permutation and bootstrap testing) in case there is a single hypothesis. There, we discussed an advantage of such tests: they often require fewer assumptions than fully parametric tests. Among multiple testing methods, there are also some important resampling-based methods. Besides the mentioned robustness, they have another advantage: they take into account the dependence structure of the test statistics, without requiring strong assumptions on this dependence structure. This allows these methods to often have 
much better power than methods not based on resampling. An example of a method that does not account for the dependence structure, is Holm. That method is very conservative when there are strong positive dependencies between the p-values. The resampling-based maxT method discussed below, is then potentially a much more powerful alternative.

\subsection{Definition of the maxT method} \label{secdefmaxT}
For FWER control, one resampling-based method is by far the most popular: this is the \emph{maxT} method by \citt{westfall1993resampling}. A related method is the \emph{minP} method, which we do not cover here \citp{dudoit2008multiple,goeman2014multiple}.  MaxT can be based on permutation testing or bootstrap testing. In case of bootstrap testing, the method is sometimes referred to as the Romano-Wolf method \citp{romano2005stepwise}. Here we will refer to the approach as \emph{maxT}. In a general sense, the maxT method need not be based on resampling, and e.g. Holm could then be seen as a special case of maxT. However, maxT tends to be most powerful when combined with resampling, and here we will focus on resampling-based versions of maxT.

Both in \S\ref{secgroupinv} and \S\ref{secboot}, we considered ``resampled test statistics'': 
\begin{itemize}
\item In \S\ref{secgroupinv} we considered the statistics $(T(gX):g\in\G)$, where $\G$ is a group of transformations (e.g. permutation maps), or we use a collection of random transformations from $\G$.
\item In \S\ref{secboot} we considered the test statistics $(T(X^j): 1\leq j \leq b)$, where $X^1,...,X^b$ are bootstrap samples.
\end{itemize}

Let $w$ denote the number of transformations used (in case we use   group invariance testing, e.g. permutation testing) or the number of bootsrap samples (in case bootstrap testing is used). 
The maxT method creates resampled test statistics  as in \S\ref{secgroupinv} and \S\ref{secboot}, except that it does this for every hypothesis, and it applies \emph{simultaneous permutation} or \emph{simultaneous bootstrapping} (further explained below). Since the method resamples test statistics for every hypothesis, it does not merely end up with a vector of length $w$, but with an $m$-by-$w$ matrix, for which we use the following notation:
\begin{equation} \label{eq:matrixT}
\bm{T} = (T_i^j)_{1\leq i \leq m, 1\leq j \leq w}=
\begin{bmatrix}
T_1^1 & T_1^2 & \ldots & T_1^w \\
T_2^1 & T_2^2  &  \ldots & T_2^w \\
\vdots & \vdots & \ddots & \vdots \\
T_m^1 & T_m^2  & \ldots & T_m^w
\end{bmatrix}
,
\end{equation}
where we will assume that the first column corresponds to the original data. Thus, if we use random permutations,  we use the identity map plus $w-1$ random permutations.
When constructing this matrix, the key point is the simultaneous resampling. In case permutations are used this means that for every $1\leq j \leq w$, $T_1^j,...,T_m^j$ are obtained using the same permutation map.  In case bootstrapping is used, it means that for every $1\leq j \leq w$, $T_1^j,...,T_m^j$ are obtained using the same bootstrap sample of the data. This means that in some particular sense, these transformations preserve the dependence structure of the test statistics. More precisely, let the $N$-by $w$ matrix $\bm{T}_{\N}$ contain the rows of $\bm{T}$ with indices in $\N$. A key idea behind the method is that all rows of $\bm{T}_{\N}$ have the same joint distribution  (for finite $n$ or asymptoticaly).

There is a single-step maxT procedure and a more powerful sequential, ``step-down'' version. The single-step maxT method is simply  the first step of the sequential procedure --- just like Bonferroni is the first step of Holm.
The (sequential maxT procedure is in fact equivalent to a CTP, as discussed later.
The  sequential maxT procedure takes the matrix $\bm{T}$ as input and is defined as follows:


\begin{enumerate}
\item For each $1\leq j \leq w$, compute the maximum $M^j:=\max\{T_i^j: 1\leq i \leq m\}$ and let $q$ be the $(1-\alpha)$-quantile of these maxima, i.e., $$q= \min\{x\in \reals: \frac{|\{1\leq j \leq w: M^j\leq x \}|}{w}\geq 1-\alpha  \}.$$
Reject all hypotheses $\Hy_i$ with test statistics $T_i>q$. (Here $T_i$ means $T_i^1$.)
\item 
If nothing was rejected in the previous step, stop. Otherwise, delete all rows from $\bm{T}$ corresponding to the rejected hypotheses and repeat the above. This  leads to a potentially smaller threshold, $q_2$ say, which may lead to additional rejections.
\item 
Continue like this until a step is reached where no additional hypotheses are rejected.
\end{enumerate}

MaxT tends to be more powerful than Bonferroni or Holm when there are positive dependencies between the tested variables. The reason is that the stronger the positive dependence is between the test statistics in a column of $\bm{T}$, the smaller the maximum of these test statistics tends to be, and hence the smaller the rejection threshold ($q$) tends to be.

As we will discuss further in  \S\ref{secmaxTCTP}, the sequential maxT method is related to Holm's method. In the same way as Holm is the sequential version of Bonferroni, maxT is the sequential version of single-step maxT. Essentially the only difference between Bonferroni and single-step maxT, is that Bonferroni uses a ``worst-case'' rejection threshold $\alpha/m$ that uses no knowledge on the dependence structure, while maxT uses a potentially less restrictive threshold. Another difference is that maxT uses test statistics rather than p-values, but that is not a fundamental difference; for example, we could use $-P_1,...,-P_m$ as test statistics within maxT.

\begin{remark} \label{maxT1by1}
 Just like Holm can be formulated as a procedure that rejects hypotheses one by one (starting with the smallest p-value), sequential maxT can also be formulated as a procedure that rejects hypotheses one by one (starting with the largest test statistic). Indeed, we can perform maxT by starting with the largest test statistic and checking whether it exceeds the quantile $q$. Then, if that is the case, we move on to the second largest statistic and check whether it exceeds a quantile that has been computed in the same way, but after removing one row from $\bm{T}$, etcetera.
\end{remark}

\subsection{Example data analysis} \label{secexmaxT}

To illustrate the maxT method, we revisit the dataset from \S\ref{secexcortest} about car models. An analysis of a larger dataset with maxT is in \S\ref{secribo}.
For illustration purposes, we consider the same limited set of car models as in \S\ref{secexcortest} ($n=5$), but now we consider all the variables, see Figure \ref{fig:cardata_allcols}. We define the \emph{mpg} variable to be $X$ and we define the other 10 variables to be $Y^1,...,Y^{10}$. Suppose we want to test the hypotheses $\Hy_i: X \indep Y^i$, $1\leq i \leq m=10$. We discussed in \S\ref{secexptest} how we can perform a permutation test to test such a hypothesis.
Now we proceed in a similar way, except that we simultaneously permute the columns corresponding to the $Y^i$. There are $5!=120$ ways in which this can be done.  Permuting the columns corresponding to the $Y^i$ will be equivalent to  simply permuting the first column as done in Figure \ref{fig:permutedX}.

\begin{figure}[htbp]
    \centering
    \includegraphics[width=0.9\textwidth]{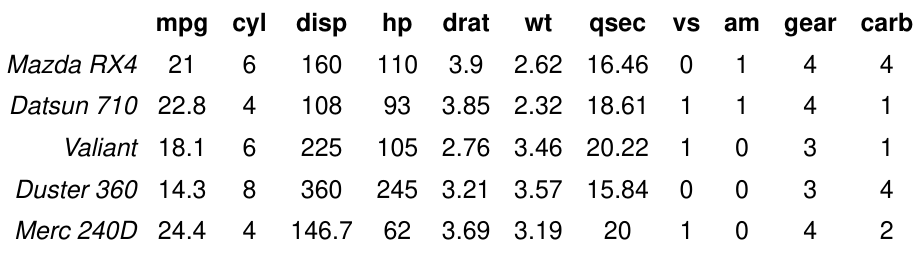}
    \caption{Example data on car models.}
    \label{fig:cardata_allcols}

\vspace{0.5cm} 


    \includegraphics[width=0.9\textwidth]{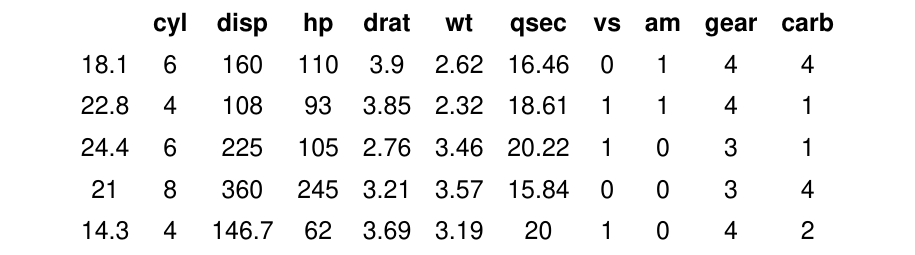}
    \caption{    
    Simultaneously permuting the columns corresponding to all variables except \emph{mpg},  will give the same results as simply permuting the column of \emph{mpg}. In this figure we only permuted the column of \emph{mpg} and kept the other columns as in Figure \ref{fig:cardata_allcols}.}
    \label{fig:permutedX}
\end{figure}

As test statistics we will use the absolute values of the correlations, i.e., $T_i = |\rho(X,Y^i)|$, $1\leq i \leq m$.
 For each of the 120 permuted versions of the data we compute the test statistics, and we collect the test statistics in the matrix $\bm{T} = (T_i^j)_{i,j}$ as in \eqref{eq:matrixT}. For each column of $\bm{T}$ we compute the maximum of the statistics. A histogram of the 120 resulting maxima is shown in Figure \ref{fig:max-stats}. We take $\alpha=0.1$ for illustration. The $(1-\alpha)$-quantile of the maximum-statistics $M^j$ was $q\approx 0.931$. Two of the test statistics exceeded $q$. By coincidence, these were the first two test statistics: $T_1\approx 0.953$ and $T_2\approx0.941$. This finishes the first step of the maxT method. For the second step, we remove the columns corresponding to $Y^1$ and $Y^2$ and essentially apply the same procedure again. The $(1-\alpha)$-quantile of the maximum-statistics is now $0.929$. None of the remaining test statistics $T_3,...,T_{10}$ exceed this value, so we get no further rejections and stop the procedure. Thus the sequential maxT was able to reject 2 hypotheses in this example.

\begin{figure}[htbp]
    \centering
    \includegraphics[width=0.9\textwidth]{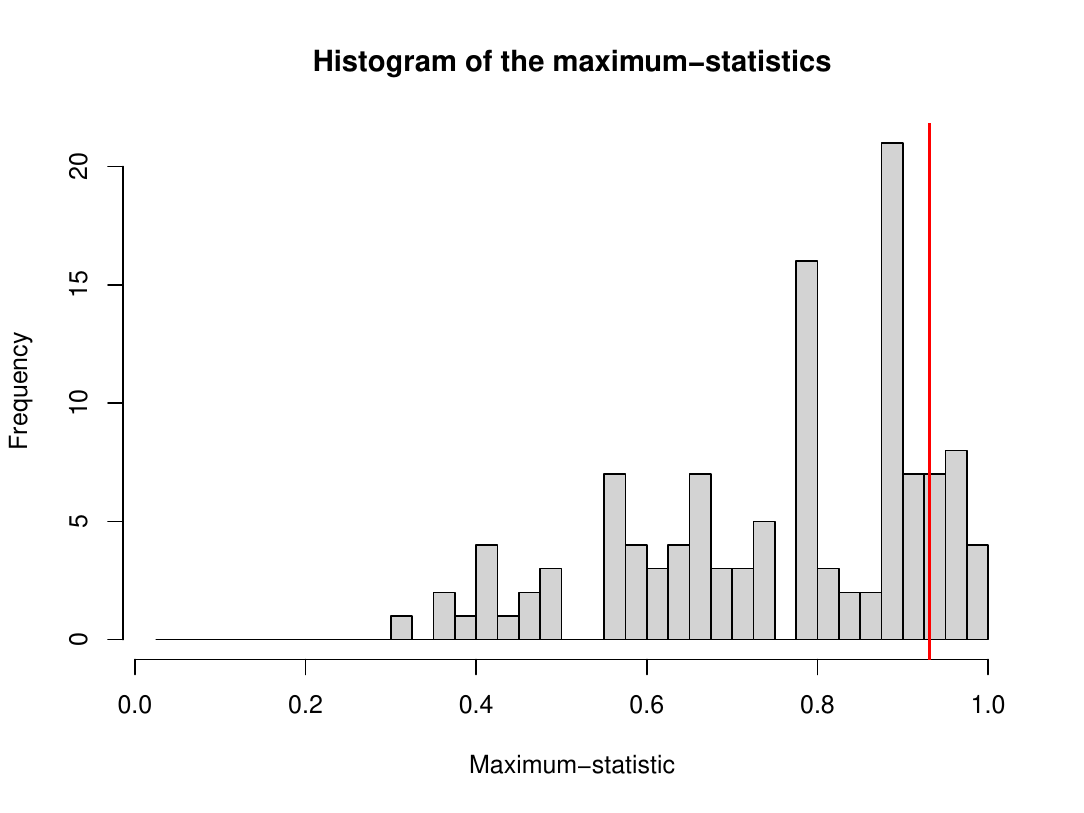}
    \caption{For each of the 120 permutations we computed the maximum of the $m=10$ test statistics. This results in the histogram above. The red line indicates $q$, which is the empirical $(1-\alpha)100\%$ quantile of these maxima (here $\alpha=0.1$). The value $q$ is the rejection threshold of the single-step maxT method. The sequential maxT method then continues by finding a potentially smaller threshold.}
    \label{fig:max-stats}
\end{figure}

We have seen that maxT rejects 2 hypotheses. Alternatively, we could compute individual p-values for each of the 10 hypotheses as in \S\ref{secexcortest}, and then apply Bonferroni (or Holm). The smallest (non-adjusted) p-value that we then obtain is $2/120=1/60$. The smallest Bonferroni-adjusted p-values is thus $(1/60)\cdot m=1/6$, which is larger than $\alpha$. Thus we see that if we perform classical permutation tests and use Bonferroni (or Holm), we do not reject anything. 

If more permutations are available, so that p-values can be very tiny, maxT often still has much better power, when there are strong dependencies among the test statistics.  This will be illustrated with a different dataset in \S\ref{secribo}. There Holm rejects 53 hypotheses, while maxT rejects 74 hypotheses. When there are many strong positive dependencies between the test statistics, the relative difference between Holm and maxT can be even much bigger.

\subsection{Theory for the maxT method} \label{secmaxTtheory}
Here we make more precise in which case the maxT method controls the FWER, for finite samples or asymptotically. Asymptotic FWER control means that $\liminf_{n\rightarrow\infty}FWER_n\leq \alpha$, where the substript $n$ indicates the sample size. We first briefly discuss using bootstrapping and then discuss using permutations (or other groups of transformations).
\\
\\
\emph{Bootstrap sampling.}
Recall that we defined $\bm{T}_{\N}$ to be the matrix containing the rows of $\bm{T}$ with indices in $\N$. Recall that we assume in this section that the first row of $\bm{T}$ is based on the original data.
In case of bootstrapping, a sufficient condition for asymptotic FWER control is that asymptotically, the rows of $\bm{T}_{\N}$ are i.i.d.. This will often be the case if we proceed analogously to \S\ref{secboot}, where now instead of bootstrap-sampling individual test statistics, we simultaneously bootstrap-sample whole columns of $\bm{T}$ (i.e., all values in the column are based on the same bootstrap sample of the data). For finite samples, bootstrap methods are typically not exact.
\\
\\
\emph{Permutations.}
In case we use permutations, we may even have finite-sample FWER control. Consider the following assumption.
Let $X$ be the full dataset and let $\G$ be the group of transformations (e.g. permutation maps) applied to the data.
\begin{assumption} \label{assjointd}
The joint distribution of the test statistics $T_i(gX)$ with $i \in \N$ and $g\in \G$ is invariant under all transformations $g\in \G$ of $X$. 
\end{assumption}
This assumption is satisfied in particular if the joint distribution of the part of the data corresponding to $\N$ is unchanged by the transformations.
In the example from \S\ref{secexmaxT}, a sufficient condition for the assumption to be satisfied,  is that permuting the first column of the dataset does not change the joint distribution of the first column and the columns corresponding to the true hypotheses --- or, equivalently, that permuting column 2 to 11 does not change that joint distribution.

\begin{theorem} \label{thmmaxTvalid}
Under Assumption \ref{assjointd}, maxT controls the FWER, i.e., $\pr(V>0)\leq\alpha$.
\end{theorem}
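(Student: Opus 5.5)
We combine the argument of Theorem \ref{basic} with the single-event device used for Holm in Theorem \ref{thmHolm}. Assume $\N\neq\emptyset$; otherwise the FWER is $0$. We take the full group $\G$ as the set of transformations, so $w=|\G|$ and the first column of $\bm{T}$ corresponds to the identity. For $g\in\G$ define
$$M_{\N}(gX)=\max\{T_i(gX): i\in \N\},$$
and let $q_{\N}(X)$ be the empirical $(1-\alpha)$-quantile of the values $M_{\N}(gX)$, $g\in \G$. This is the threshold that single-step maxT would use if it were applied only to the rows of $\bm{T}_{\N}$. Consider the event
$$\E=\{M_{\N}(X)\leq q_{\N}(X)\},$$
i.e., no true hypothesis has an original statistic above $q_{\N}(X)$. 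The proof then has two parts: first $\pr(\E)\geq 1-\alpha$, and then on $\E$ sequential maxT rejects no true hypothesis.

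\textbf{Step 1: $\pr(\E^c)\leq\alpha$.} Set $T(X):=M_{\N}(X)$. Assumption \ref{assjointd} says the joint law of $(T_i(gX))_{i\in\N,g\in\G}$ is invariant under transforming $X$ by any $h\in\G$. Hence the vector $(T(a_1X),\dots,T(a_{|\G|}X))$ satisfies Assumption \ref{assgroupinv}, because it is a fixed function of that array. The event $\E^c$ is exactly the event $T(X)>T^{(k)}(X)$ with $k=\lceil(1-\alpha)|\G|\rceil$; we should check that the quantile definition of $q$ in \S\ref{secdefmaxT} coincides with $T^{(k)}$. Theorem \ref{basic} then gives $\pr(\E^c)\leq\alpha$. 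The same argument applies verbatim if $\G$ is replaced by the identity plus i.i.d. uniform draws, by citing the random-transformation version mentioned after Theorem \ref{basic}.

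\textbf{Step 2: on $\E$, no false positives.} The key fact is monotonicity. If $\A\supseteq \N$, then the column maxima over rows $\A$ dominate those over rows $\N$ column-wise, so their empirical $(1-\alpha)$-quantile satisfies $q_{\A}\geq q_{\N}$. We argue by induction over the steps of sequential maxT. Suppose that before some step no true hypothesis has been rejected. Then the set $\A$ of remaining rows contains $\N$, so the current threshold is $q_{\A}\geq q_{\N}(X)\geq T_i(X)$ for every $i\in\N$, where the last inequality holds on $\E$. Since rejection requires $T_i>q_{\A}$, no true hypothesis is rejected at this step either. Therefore on $\E$ we have $V=0$, and $\pr(V>0)\leq\pr(\E^c)\leq\alpha$.

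The main obstacle is Step 1, namely justifying that the maximum over $\N$ inherits the invariance of Assumption \ref{assgroupinv} from Assumption \ref{assjointd}. Closely tied to this is the fact that the group property $\G h=\G$ keeps the quantile $q_{\N}$ unchanged under $X\mapsto hX$; this is exactly what the proof of Theorem \ref{basic} uses. The remaining care points are matching the tie conventions between the strict rejection rule $T_i>q$ and the definition of $T^{(k)}$.
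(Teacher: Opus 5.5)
Your proof is correct and follows essentially the same route as the paper's. The event $\{M_{\N}(X)\leq q_{\N}(X)\}$ has probability at least $1-\alpha$ by Theorem \ref{basic} applied to the maximum over $\N$, and on this event the sequential maxT thresholds satisfy $q_{\A}\geq q_{\N}(X)$ as long as the remaining set $\A$ contains $\N$, so no true hypothesis is ever rejected; you also make explicit two details the paper leaves implicit, namely that the empirical quantile equals $T^{(k)}$ and that the threshold is monotone in the set of remaining rows.
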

\begin{proof}
If all hypotheses are false, $V=0$. Now suppose there is at least one true hypothesis.

For every $g\in \G$ consider the maximum $M_{\N}^g:=\max\{T_i(gX):i\in \N\}$ and let $q^{\N}$ be the $(1-\alpha)$-quantile of these maxima. Let $\E$ be the event that $M_{\N}^{id}\leq q^{\N}$. Here $id$ is the identity transformation, i.e., $id(X)$ are the original data. 
It follows from Assumption \ref{assjointd} that the joint distribution of the values $M_{\N}^{g}$, $g\in \G$ is invariant under all transformations  $g\in \G$. Hence, by Theorem \ref{basic}, $\pr(M_{\N}^{id}> q^{\N})\leq \alpha$, i.e., $\pr(\E)\geq 1-\alpha$.

 In the rest of the proof, suppose $\E$ holds.
Note that if $\E$ happens, then step 1 of the  maxT method rejects no true hypotheses, since for every $i\in \N$, we then have $T_{i}(X)\leq M_{\N}^{id}\leq q^{\N}\leq q$.

In the second step of maxT, the test statistics corresponding to hypotheses that were rejected in step 1 are ignored, but we will still have $q^{\N}\leq q_2$ and hence $M_{\N}^{id}\leq q_2$, which means that again no true hypotheses are rejected. 

We can continue this reasoning, which means that  no true hypotheses are rejected when $\E$ happens. Since $\pr(\E)\geq 1-\alpha$, this finishes the proof.
\end{proof}
In case random permutations are used (resampled with or without replacement, but such that the first transformation is fixed to be the identity) then maxT also controls the FWER (in a marginal sense). In practice, often random transformations are used, to limit the computational burden.

\subsection{MaxT is a closed testing procedure} \label{secmaxTCTP}
Note that maxT is  similar to Holm, in the sense that it starts with a first step, and then if there are any rejections, it ignores the hypotheses that have been rejected and repeats the first step, etcetera. We saw that Holm is an exact shortcut for a CTP. Likewise, maxT is an exact shortcut for a (consonant) CTP,
namely the CTP based on the following local tests for the intersection hypotheses $\Hy_I$:
\begin{equation} \label{eqltmaxT}
\phi_{\I}= \mathbbm{1}\big( \max\{T_i:i\in\I\} > q^{\I}  \big),
\end{equation}
where 
$q^{\I}$ is the $(1-\alpha)$-quantile  of the maxima  $\max\{T_i(gX):i\in \I\}$, $g\in \G$.
To show this, we can reason in a way that is similar to how we showed that Holm is a CTP (Theorem \ref{thmHolmisCTP}). 

\begin{theorem}
The elementary hypotheses rejected by the CTP based on the local tests defined in \eqref{eqltmaxT}, are exactly the same as those rejected by maxT.
\end{theorem}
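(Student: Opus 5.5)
The proof will mirror the proof of Theorem \ref{thmHolmisCTP}. The one structural fact it needs is \emph{monotonicity}: if $\I\subseteq\J$, then $q^{\I}\leq q^{\J}$. This holds because for every $g\in\G$ we have $\max\{T_i(gX):i\in\I\}\leq \max\{T_i(gX):i\in\J\}$. Empirical quantiles are monotone under pointwise domination, so the $(1-\alpha)$-quantile over $\G$ can only grow when we pass from $\I$ to $\J$.

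As in the Holm proof, relabel the hypotheses so that $T_1\geq T_2\geq\dots\geq T_m$ (with $T_i=T_i(X)$). Use the one-by-one formulation of maxT from Remark \ref{maxT1by1}: maxT rejects $\Hy_{j+1}$ after having rejected $\Hy_1,\dots,\Hy_j$ exactly when $T_{j+1}>q^{\{j+1,\dots,m\}}$. In step $s$ of the original formulation, the threshold is $q^{\I}$ with $\I$ the set of not-yet-rejected indices. By monotonicity, processing one hypothesis at a time yields the same final rejection set. Since $T_{j+1}=\max\{T_i:i\in\{j+1,\dots,m\}\}$, this condition is exactly $\phi_{\{j+1,\dots,m\}}=1$. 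Thus maxT rejects $\Hy_1,\dots,\Hy_k$, where $k$ is the largest index with $\phi_{\{j,\dots,m\}}=1$ for all $j\leq k$.

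The CTP also rejects an initial segment of this ordering. Suppose $\Hy_{j}$ is rejected by the CTP and $l<j$. Any $\J\ni l$ can be compared with $\J'=(\J\setminus\{l\})\cup\{j\}$, which the CTP tests because $\J'\ni j$. We have $\max_{\J}T\geq \max_{\J'}T$, but the quantiles differ, so this comparison is not immediate. Instead, I will obtain the segment structure from the two inclusions below, in the style of the Holm proof.

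\emph{Part 1 (CTP $\subseteq$ maxT).} Induct on $j$. If the CTP rejects $\Hy_{j+1}$, then in particular $\phi_{\{j+1,\dots,m\}}=1$. This holds because that set contains $j+1$, and for $j=0$ it is the global set. This gives exactly the maxT condition at step $j+1$, provided maxT rejected $\Hy_1,\dots,\Hy_j$. For that I first show that the CTP rejections form an initial segment. If the CTP rejects some $\Hy_{j+1}$, take any $l\leq j$ and any $\J\ni l$, and set $a=\min \J$. Then $\J\subseteq\{a,\dots,m\}$ and $\max_{\J}T=T_a$. The CTP rejecting $\Hy_{j+1}$ gives $\phi_{\J\cup\{j+1\}}=1$, that is $T_a>q^{\J\cup\{j+1\}}\geq q^{\J}$, so $\phi_{\J}=1$. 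Hence $\Hy_l$ is rejected too. The segment property therefore follows from monotonicity.

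\emph{Part 2 (maxT $\subseteq$ CTP).} Suppose maxT rejects $\Hy_1,\dots,\Hy_k$, so $T_j>q^{\{j,\dots,m\}}$ for all $j\leq k$. Take $j\leq k$ and any $\J\ni j$, and let $a=\min\J\leq j\leq k$. Then $\J\subseteq\{a,\dots,m\}$, so $q^{\J}\leq q^{\{a,\dots,m\}}<T_a=\max_{\J}T$. Hence $\phi_{\J}=1$, and the CTP rejects $\Hy_j$.

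The main obstacle is being careful about ties among the $T_i$ and about the exact quantile definition. Ties can be handled by breaking them arbitrarily in the relabelling, since maxT's strict inequality treats tied statistics identically. The other delicate point is checking that the one-by-one and the batch formulations of maxT coincide. This again follows from monotonicity of $q^{\I}$: as rows are removed the thresholds decrease, so batch and sequential rejection reach the same fixed point.
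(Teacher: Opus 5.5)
Your proof is correct and takes the paper's route: like the paper, you mirror the proof of Theorem~\ref{thmHolmisCTP}, with Part~1 resting on the same key step of taking $\J=\{j+1,\dots,m\}$ and inducting through the one-by-one formulation of maxT. You go further than the paper in three places: you state and prove the monotonicity $q^{\I}\leq q^{\J}$ for $\I\subseteq\J$, you prove that the CTP's rejections form an initial segment (the paper uses both facts only implicitly), and you replace the inductive second half with the cleaner direct argument via $a=\min\J$.
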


\begin{proof}
We must show that maxT rejects at least as many hypotheses as the CTP, and vice versa. 
To show that maxT rejects at least as many hypotheses as the CTP, we proceed as follows. 
We first observe that if the CTP rejects at least one hypothesis, then $ \max\{T_i:1\leq i \leq m\} > q^{\{1,...,m\}\}}  $, which means that maxT rejects at least one hypothesis. Number the hypotheses in such a way that
$T_{1}\geq ... \geq T_{m}$ (thus, $\Hy_1$ is the hypothesis with the largest test statistic).
 Next, suppose that 
the CTP rejects $2\leq k \leq m$ hypotheses. We show that maxT  does as well. We give a proof by induction.
Let $1\leq j< k$ and suppose we have shown that maxT rejects $\Hy_1,...,\Hy_j$. We show that maxT also rejects $\Hy_{j+1}$.
As mentioned in Remark \ref{maxT1by1}, maxT can be seen as a procedure that rejects hypotheses one by one.
Since maxT rejects  $\Hy_1,...,\Hy_j$, we know that $\forall 1\leq i\leq j:  T_i> q^{\{i,...,m\}}$. We must show that also 
$T_{j+1}> q^{\{j+1,...,m\}}$. We know that the CTP rejects $\Hy_{j+1}$, i.e., 
$$ \forall \{{j+1}\} \subseteq \J \in \subs: \phi_{\J}=1  ,$$
which means that
$$ \forall\{{j+1}\} \subseteq \J \in \subs: \max\{T_i: i\in \J\}> q^{\J}.$$
Hence, in particular, for $\J=\{j+1,...,m\}$,
$$\max\{T_i: i\in \J\}> q^{\J},$$
which means that
$$T_{j+1}> q^{\{j+1,...,m\}},$$ which means that indeed maxT rejects $\Hy_{j+1}$ as well.

The above reasoning is analogous to the first half of the proof of Theorem \ref{thmHolmisCTP}, which says that  Holm rejects at least as many hypotheses as the CTP based on Bonferroni. To show that maxT does not reject more than the CTP, we can proceed analogously to the second half of the proof of that theorem.
\end{proof}

\subsection{Software implementations of maxT}
One can easily program maxT oneself, although implementations may vary in their computational efficiencies.
Some  efficient implementations of the maxT method are contained in the R package \emph{multtest} \citp{pollard2005multiple,multtest}. This package also provides maxT-adjusted p-values (see \citealp{romano2016efficient} for a formula). Recall that the \emph{adjusted p-value} for a hypothesis is the smallest $\alpha$ for which the hypothesis is rejected by the multiple testing procedure.

As mentioned, in case bootstrapping is used, the maxT method is sometimes called the Romano-Wolf method. The method is implemented under that name in Stata \citp{clarke2020romano}.

\subsection{Extension to k-FWER control} \label{secmaxTkFWER}
Just like the methods by  Bonferroni and Holm, the maxT method can be generalized to k-FWER control. There is a single-step version and a sequential version. The  single-step method is again simply the first step of the sequential version. Here we only provide the single-step version. For the sequential version, see Algorithm 2.1 of \citt{romano2007control}. The sequential version (for $k>1$) tends to be much more computationally intensive than  sequential maxT. This has to do with the fact that after performing e.g. the first step, we cannot be confident that the rejected hypotheses are all false (since we allow for $k-1$ false positives).

For any collection of numbers, define the \emph{k-maximum} as the $k$-th largest number, i.e., the smallest real number $x$ such that at most $k-1$ of the numbers exceed $x$.
The single-step k-FWER version of maxT is exactly the same as maxT, except that we now use the $k$-maxima where before we would take the maxima. Thus, $q$ is now the $(1-\alpha)$-quantile of the $k$-maxima 
$$k\text{-}\max\{T_i^j: 1\leq i \leq m\},\quad 1\leq j \leq w.$$
Then, under Assumption \ref{assjointd}, we obtain  k-FWER control, and in case of bootstrapping we can obtain asymptotic k-FWER control.

\subsection{Exercises}
\begin{enumerate}
\item  Show that maxT cannot reject any hypotheses if $w<\alpha^{-1}$.
\item 
In practice one would take $w$ large, but here we take $w=5$ for simplicity.
Suppose we are interested in testing $m=3$ hypotheses and have computed the follow matrix $\bm{T}$:
\begin{equation} 
\bm{T} =
\begin{bmatrix}
6.2 & 6.0 & 4.1 & 5.7 & 4.4 \\
6.3 & 4.0  &  6.2& 4.0 & 3.7 \\
3.5 & 5.4 & 4.4 & 2.9 & 3.5\\
\end{bmatrix}
,
\end{equation}
As usual, the first column contains the test statistics corresponding to the original data.
Take $\alpha=0.2$. Which hypotheses do the single-step and  the sequential version of maxT reject in this case? How many hypotheses are rejected when $\alpha<0.2$?

\item
$\bigstar$ Consider data $X$, hypotheses $\Hy_1,...,\Hy_m$ and corresponding test statistics $T_1(\cdot),...,T_m(\cdot)$. Suppose Assumption \ref{assjointd} holds. Consider some rejection threshold $t\in \reals$. 
Thus, we reject the hypotheses with indices in $\R(X):=\{1\leq i \leq m: T_i(X)>t\}$.

For every $g\in \G$, let 
$$ R^g(X):=|\{1\leq i \leq m: T_i(gX)>t\}|.$$
Let $$R^{(1)}\leq ...\leq R^{(|\G|)}$$ be the sorted values $R^g$, $g\in \G$.

Let $\alpha\in(0,1)$ and let 
$\bar{V}(X):= R^{(k)}$, where $k=\lceil (1-\alpha)|\G|\rceil$.

Prove that
$$\pr(V(X)\leq \bar{V}(X))\geq 1-\alpha,$$ i.e., $ \bar{V}(X)$ is a $(1-\alpha)100\%$-confidence upper bound for the number of false positives $V(X)=|\N\cap\R(X)|$. (This approach is discussed and  extended in \citealp{hemerik2018false}.)

\end{enumerate}

\begin{sols}

\subsection{Solutions}
\begin{enumerate}
\item  For each $1\leq j \leq w$, single-step  maxT computes the maximum $M^j:=\max\{T_i^j: 1\leq i \leq m\}$ and lets $q$ be the $(1-\alpha)$-quantile of these maxima, i.e., $q= \min\{x\in \reals: \frac{|\{1\leq j \leq w: M^j\leq x \}|}{w}\geq 1-\alpha  \}.$
Here $\frac{|\{1\leq j \leq w: M^j\leq x \}|}{w}$ takes values in $\{1/w, 2/w, .... w/w\}$. If $w<\alpha^{-1}$, then $(w-1)/w<1-\alpha$. It follows that $q$ is actually the maximum of the values $M^j$, $1\leq j \leq w$. Thus, $q$ is the maximum of all values in $\bm{T}$. The single-step maxT method only rejects hypotheses $\Hy_i$ with $T_i>q$. It follows that single-step maxT rejects no hypotheses. Consequently, the sequential maxT method rejects no hypotheses either.

\item We start with the first step of the maxT procedure. For each permutation or bootstrap sample, we must compute the maximum of the $m$ teststatistics. These maxima are 6.3, 6.0, 6.2, 5.7 and 4.4. We must then find the empirical $(1-\alpha)100\%=80\%$-quantile of these maxima, which is 6.2. We can then reject all hypotheses with test statistics strictly larger than 6.2. There is only one such hypothesis, namely $\Hy_2$, which has a statistic of 6.3. Thus, in the first step, we reject $\Hy_2$.

We now perform the second step. After removing the second row, the maxima are 6.2, 6.0, 4.4, 5.7 and 4.4. The $80\%$-quantile of these maxima is 6.2. We can then reject all hypotheses with test statistics strictly larger than 6.2. This leads to no additional rejections. Thus, maxT only rejects $\Hy_2$. 

In case $\alpha<0.2$, the  $(1-\alpha)100\%$-quantile of the maxima is simply the maximum of the maxima. No test statistic can exceed the maximum of the maxima, so no hypotheses are rejected when $\alpha<0.2$.

\item
Let
$$V^{(1)}\leq ...\leq V^{(|\G|)}$$ be the sorted values $V(gX)$, $g\in \G$. Here $V(gX)= |\N\cap\R(gX)|$, where 
$$ \R^g(X):=\{1\leq i \leq m: T_i(gX)>t\}.$$

Note that if we see the values $V(gX)$ as the ``test statistics'', then Assumption \ref{assgroupinv} holds, i.e., the joint distribution of the 
values $V(gX)$, $g\in \G$, is invariant under all transformations in $\G$ of $X$. Hence, by Theorem \ref{basic}, 
$$\pr(V(X)\leq V^{(k)}(X))\geq 1-\alpha.$$
Finally note that $V^{(k)}(X)\leq \bar{V}(X)$, so that 
$$\pr(V(X)\leq  \bar{V}(X))\geq 1-\alpha,$$
as was to be shown.

\end{enumerate}

\end{sols}

\newpage

\section{False discovery exceedance control} \label{secFDX}

\subsection{Multiple testing criteria involving false discovery proportions} \label{seccriteriaFDP} 
When we control the familywise error rate, we are very strict in the sense that we require that with large probability, there are no false positives at all. In many cases however, we would be willing to incur a small percentage of false positives if this leads to more power. In that case it can be useful to consider multiple testing methods that somehow control the \emph{false discovery proportion} (FDP), which is defined as the fraction of true hypotheses among all the rejections. We define the FDP  to be 0 when there are no rejections.

When we apply a multiple testing method, we will usually denote by $\R\subseteq\{1,...,m\}$ 
the set containing the indices of the rejected hypotheses and we write $R=|\R|$.
As usual, we denote the number of false positives by $V=|\R\cap \N|$.
the FDP then satisfies $$FDP=\frac{V}{R\vee 1},$$
where  ``$\vee 1$'' is added to avoid dividing by 0.

There are several possible approaches to ``keeping the FDP small''. A very popular approach is to consider the \emph{false discovery rate} (FDR), which is  the expected value of the FDP,
$$FDR=\mathbb{E}(FDP).$$
Controlling the FDR means ensuring that the FDR is below some value, which we typically denote by $\alpha\in(0,1)$. Typical choices are $\alpha=0.05$ and $\alpha=0.1$. Likewise we could control the median of the FDP, which means ensuring that 
$$\pr(FDP\leq \gamma)\geq 0.5$$
for some small value $\gamma\in[0,1)$.

If we control the mean or median of the FDP, then we control a central tendency of the FDP, but we do not guarantee that with e.g. $95\%$ confidence, the FDP lies below some small value $\gamma$. Ensuring that with large probability the FDP is at most $\gamma$, is called \emph{false discovery exceedance} (FDX) control (or sometimes simply ``FDP control''). Thus, we then ensure that
$$\pr(FDP\leq \gamma)\geq 1-\alpha$$
for some small value $\alpha$. Note that if we take $\alpha=0.5$, we control the median of the FDP.
Note that a more ``positive formulation'' of FDX control, is that we ensure that with probability at least $1-\alpha$, at least $(1-\gamma)100\%$ of the findings are true discoveries:
$$\pr(TDP\geq 1-\gamma)\geq 1-\alpha,$$
where $TDP:=1-FDP$ is the \emph{true discovery proportion}.

FDX control is the topic of this section.
Providing confidence statements on many FDPs simultaneously, is discussed in \S\ref{secCTPfdp}.
There we will allow picking any set of hypotheses post hoc and obtaining a confidence statement on the corresponding FDP (or equivalently, on the TDP). FDR controlling methods are discussed in \S\ref{secFDR} and \S\ref{secknockoffs}.

\subsection{Lehmann-Romano and related methods}
We will discuss several FDX methods, but we will omit proofs, which  can be found in the cited papers. We first define an FDX method by  \citt{lehmann2005generalizations}.
Suppose that at least one of the following assumptions holds: 
\begin{itemize}
\item the null p-values satisfy $\pr(P_i\leq c)\leq c$ for every $c\in[0,1]$, conditional on the non-null p-values.
\item the null p-values satisfy Simes' inequality. 
\end{itemize}
Note that the first assumption is satisfied if the  $(P_i:i\in \N)$ are independent of the $(P_i:i\in \N^c)$.
If at least one of the above two assumptions holds, then  the p-value-based step-down method (see \S\ref{secupdown})  based on critical values 
\begin{equation} \label{eq:LRpowerful}
c_i = \frac{(\lfloor \gamma i \rfloor +1 )\alpha}{m+\lfloor \gamma i \rfloor +1-i }
\end{equation} 
 provides FDX control 
\citp[][p.1148,1150]{lehmann2005generalizations}.
Recall that in \S \ref{secupdown} we defined what we mean by a step-down method: we need to find the smallest sorted p-value that exceeds its critical value, and then reject all hypotheses with smaller p-values.
The Lehmann-Romano procedure that we just defined is also the  Lehmann-Romano procedure studied in \citt{dohler2024FDX}, who make some improvements for the case that the data are discrete.
The procedure of \citt[][pp.1148, 1150]{lehmann2005generalizations} coincides with the FDX method of \citt{romano2007control} (see \S\ref{secRW}) if within that procedure the step-down k-FWER method from 
\citt[][p.1143]{lehmann2005generalizations} is used. The procedure of \citt[][p.1148,1150]{lehmann2005generalizations} is implemented in the function \emph{continuous.LR()}  from  the R package \emph{FDX} on CRAN. The improvement by \citt{dohler2024FDX} for discrete data is available through the function \emph{discrete.LR()}. Take note that the argument \verb|alpha| required by those R functions is our $\gamma$, and their argument \verb|zeta| is our $\gamma$.
\footnote{The reason why our $\gamma$ is called \emph{alpha} in that package, is that the literature sometimes uses $\alpha$ to denote the ``target FDP''.}

\citt[][p.1152]{lehmann2005generalizations} also provide a method that does not require dependence assumptions at all.
This procedure is the same as their other step-down method just discussed, except that we should decrease the critical values $c_i$ by dividing them by $C_{(\lceil \gamma m \rceil +1)}$, where $C_j := \sum_{i=1}^j (1/i)$. 
Thus the critical values are 
\begin{equation} \label{eq:LRgeneral}
c_i' = \frac{(\lfloor \gamma i \rfloor +1 )\alpha}{(m+\lfloor \gamma i \rfloor +1-i)C_{(\lceil \gamma m \rceil +1)} }
\end{equation} 
For example,
for $j=10$, it holds that  $C_j\approx 2.92$ and 
 for $j=100$, it holds that $C_j\approx 5.187$. 
 
 We have seen that there are two ``Lehmann-Romano (2005)'' methods.
 The second one that we mentioned (with ciritical values $c_i'$) was later uniformly improved in \citt[][Theorem 3.4]{romano2006stepdown}. That method makes no assumptions on the dependence either, and works by decreasing the critical values \eqref{eq:LRpowerful} not by $C_{(\lceil \gamma s \rceil +1)}$, but by a number $D(\gamma, m)\geq 1$, which depends on $\gamma$ and $m$. 
 Thus, the critical values are 
 \begin{equation} \label{eq:RS}
c_i'' = \frac{(\lfloor \gamma i \rfloor +1 )\alpha}{(m+\lfloor \gamma i \rfloor +1-i)D(\gamma, m)}.
\end{equation} 
 For example, for $\gamma=0.1$ and $m=1000$, it holds that  $D(\gamma, m)\approx 3.4179.$ \citt[][p.42]{romano2006stepdown} contains a table that shows $D(\gamma, m)$ for several other values of $\gamma$ and $m$. Clearly, the method from  \citt[][Theorem 3.4]{romano2006stepdown} (which makes no assumptions on the dependence) is still less powerful than \citt[][p.1148,1150]{lehmann2005generalizations} (which does require assumptions on the dependence).
 
 The method from \citt[][Theorem 3.4]{romano2006stepdown} is not implemented in the R package \emph{FDX}. However, note that if we multiply all p-values by $D(\gamma, m)$ and then apply the function \emph{continuous.LR()}, then this is equivalent to applying the method from \citt[][Theorem 3.4]{romano2006stepdown}. This is because  multiplying all p-values by $D(\gamma, m)$ has the same effect as dividing the critical constants by $D(\gamma, m)$.

\subsection{Romano-Wolf} \label{secRW}
The FDX controlling method of \citt{romano2007control} is based on iteratively applying k-FWER methods: the method starts with applying a 1-FWER method, then applies a 2-FWER method, etcetera until a step is reached where some criterion is not met. Romano-Wolf is thus a kind of step-down approach. The k-FWER method that is applied within Romano-Wolf could e.g. be one of the methods discussed in  \S\ref{seckFWER} and \S\ref{secmaxTkFWER}. Thus, Romano-Wolf is a general method, in the sense that there are multiple options for the k-FWER method that is used within Romano-Wolf.

Given some k-FWER method (for $k=1,2,...$), let $\R^k$ be the set containing the indices of the hypotheses rejected by the k-FWER method, and let $\R^0=\emptyset$. The Romano-Wolf method is defined as follows:
\begin{enumerate}
\item Let $k^*$ be the smallest $k\geq 1$ for which $k/(|\R^k|+1)>\gamma$ (if there is no such $k$, reject all hypotheses ). 
\item Reject all hypotheses with indices in $\R^{k^*}$.
\end{enumerate}
An equivalent formulation is in  Algorithm \ref{alg:RW}.
The intuition behind the method is as follows: if we reject the top $|\R^k|+1$ hypotheses, then (with probability at least $1-\alpha$) there are at most $k$ false positives, so the FDP is at most  $k/(|\R^k|+1)$. If this value is larger than $\gamma$, we stop, because we want the FDP to be at most $\gamma$.


\begin{algorithm}[H] 
\caption{Algorithm for the Romano-Wolf method. Algorithm 4.1 from \citt{romano2007control} is the same, except that it checks whether $|\R^k|<k/\gamma -1$, which is equivalent to checking whether $k/(|\R^k|+1)>\gamma$.} \label{alg:RW}
\begin{algorithmic}[1]
\STATE $k\leftarrow0$
\FOR{$k=1,\ldots,m$}   
      \STATE Compute $\R^k$
    \IF{$k/(|\R^k|+1)>\gamma$}
      \STATE \textbf{break}
    \ENDIF
\ENDFOR
	\STATE \textbf{return} $\R^k$
\end{algorithmic}
\end{algorithm}

It turns out that if in Romano-Wolf we use the k-FWER method from Theorem \ref{thmkFWERstepdown}, then we obtain the Lehmann-Romano method discussed above (the most powerful one, with critical values defined in \eqref{eq:LRpowerful}).

In \citt{romano2007control} it is proved that Romano-Wolf provides FDX control in an asymptotic sense. Further theoretical results are discussed in \citt{delattre2015new}. In practice, the Romano-Wolf method tends to be valid if the underlying k-FWER method is valid.

 \subsection{Application  to  property sales data}
 We continue our analysis from \S\ref{secappsales}, where we computed 31 p-values for the numerical predictors of the linear regression model predicting sales prices, see Figure \ref{fig:31pvalues}.  We took $\alpha=0.05$ there and saw that Bonferroni and Holm both rejected 10 hypotheses. If we keep $\alpha$ and let $\gamma=0.1$ and apply the Lehmann-Romano method with critical values defined in \eqref{eq:LRpowerful}, we also reject 10 hypotheses. This is because we require the FDP to be at most $\gamma$, which means that we allow almost no false hypotheses, given the fact that we are merely testing 31 hypotheses. If we increase $\gamma$ to 0.2, we reject 12 hypotheses. This is not a real improvement, given that  we already rejected 10 hypotheses with Bonferroni. 
 
 We see that the results based on Lehmann-Romano are somewhat underwhelming. Part of the reason is ``bad luck'': in \S\ref{secribo} the improvement with Lehmann-Romano is more impressive.
Another reason is that in sense Lehmann-Romano is quite strict, since it needs to account for  worst-case scenarios regarding the dependence structure of the p-values. (It makes some  assumptions on the dependence, but these are limited.) 
This is somewhat comparable to the way in which Bonferroni and Holm tend to strict compared to maxT.
The Romano-Wolf method can be potentially much more powerful than Lehmann-Romano, if a resampling-based k-FWER method is used within Romano-Wolf. Based on the existing literature however, it is not fully clear how such resampling-based methods can be used for testing coefficients in a model.
 Hence, we now move to a different dataset, where we can use a resampling-based version of Romano-Wolf.

  \subsection{Application  to  riboflavin production data} \label{secribo}
We analyze a dataset about riboflavin (vitamin B2) production with  bacteria called \emph{B. subtilis}. This dataset
is freely available \citp{buhlmann2014high}. It contains normalized measurements
of expression rates of 4088 genes from $n = 71$ samples. 
Further, the dataset contains the corresponding 71 measurements of the logarithm of the riboflavin production
rate.
For each $1 \leq i \leq  4088$, we are interested in the hypothesis  $\Hy_i$ that the riboflavin production
rate was stochastically independent
of the expression level of gene $i$. 
For the parametric methods, we computed p-values using two-sided correlation tests as in \S\ref{secexcortest}. 

 We took $\alpha=0.05$ again. Bonferroni and Holm rejected 53 hypotheses, while Hommel rejected 54 hypotheses. Next, we kept $\alpha=0.05$ and took $\gamma=0.1$ and applied the FDX method of Lehmann Romano (based on critical values \eqref{eq:LRpowerful}). This method resulted in 103 rejections. Since $\gamma$ was 0.1, we know that with probability at least $1-\alpha=0.95$, at most 10 of these 103 rejections are false positives. Thus  with probability at least $1-\alpha=0.95$, at least  93 of these 103 rejections are true positives. 
 
 We can have even more power than Lehmann-Romano if we use a permutation-based multiple testing method. Indeed, we can permute the  observations of the roboflavin production rate and construct a matrix of resampled test statistics as in \S\ref{secdefmaxT}. We can then perform permutation-based methods such as maxT and the k-FWER verision of maxT from \S\ref{secmaxTkFWER}. The test statistics that were used were the absolute values of the empirical correlations between the production rates and gene expressions.
  The maxT method rejected 74 hypotheses. As expected, this is more than what Holm rejects.   
 We can also use the FDX method of Romano and Wolf, where as the k-FWER method we take the k-FWER version of maxT from \S\ref{secmaxTkFWER}.  We took $w=1000$, i.e, we used 999 random permutations plus the original data.
 The Romano-Wolf method based on the single-step  k-FWER method from \S\ref{secmaxTkFWER} rejected 201 hypotheses. Note that it was quite expected that the resampling-based Romano-Wolf method would reject more than Lehmann-Romano: the k-FWER method used within Romano-Wolf is relatively powerful, because it takes into account the dependence structure of the gene expression levels.
 From Romano-Wolf we conclude that with probability at least $1-\alpha=0.95$, among the 201 hypotheses with the largest test statistics, at least $90\%$, so at least 181 hypotheses, are true discoveries.

 \subsection{Exercises}
\S\ref{secFDR} contains some exercises linking FDX control and FDR control.

 \begin{enumerate}
 
 \item Provide a simple expression for  the last critical constant, $c_m$, of the most powerful  method among  the two FDX methods from \citt{lehmann2005generalizations}.

 \item
Suppose $m=7$ and $\alpha=\gamma=0.2$. Suppose the sorted p-values are $0.01$, $0.02$, $0.04$, $0.06$, $0.14$, $0.36$ and $0.74$.
Determine how  many hypotheses are rejected by the most powerful method of the two FDX methods from \citt{lehmann2005generalizations}.


 \item Note that the  FDP is a random variable, which takes values in a subset of $[0,1]$.
 Assume for simplicity that the FDP is symmetric about its mean (in practice it is often right-skewed). Let $\gamma\in(0,1)$ and $ \alpha\in(0,1/2]$.
 Prove that $\pr(FDP\leq \gamma)\geq 1-\alpha$ implies that $\mathbb{E}(FDP)\leq \gamma$. 
 Discuss informally that that is not generally the case  the other way around.
 Further, discuss informally that the first statement is not generally true either when the FDP is right-skewed and $\alpha$ is close to 0.

 \item Suppose that all p-values are independent of each other. Suppose that the p-values corresponding to the true hypotheses all have the standard uniform distribution. Suppose that the p-values corresponding to the false hypotheses all have the same distribution as well (not necessarily uniform). Let $t\in (0,1)$ and suppose we simply reject all hypotheses with p-values at most $t$. Now suppose we keep adding hypotheses to the problem (so that $m\rightarrow\infty$), and that  the fraction $\pi_0$ of true hypotheses converges to some constant in $(0,1)$ as $m\rightarrow\infty$.
Write $FDP=FDP(m)$ to ephasize the dependence of the FDP on $m$. 

Prove that as $m\rightarrow\infty$, $FDP(m)$ converges to an (unknown) constant $c\in[0,1]$. Discuss that this  means that if we consider $\alpha, \gamma\in(0,1)$ and  choose $t$ such that $\lim_{m\rightarrow\infty} \mathbb{E}(FDP(m))< \gamma$, then we have $\lim_{m\rightarrow\infty} \pr(FDP(m)\leq \gamma)\geq 1-\alpha$.
 (Recall that ensuring that $\mathbb{E}(FDP)$ is bounded,  is called false discovery rate control.)
Discuss informally whether  these statements are  still necessarily true if the p-values are dependent on each other. 
 \end{enumerate}

\begin{sols}
\subsection{Solutions}

\begin{enumerate}
\item
We have $c_m = \frac{(\lfloor\gamma m\rfloor +1 )\alpha}{m+ \lfloor\gamma m\rfloor +1-m } =
\frac{(\lfloor\gamma m\rfloor +1 )\alpha}{\lfloor\gamma m\rfloor +1} =\alpha.$ Thus, the last critical value is simply $\alpha$, just like in the methods of Bonferroni and Holm and in Simes' global test.
  
\item
For every $1\leq i \leq 7$, the critical value $c_i$ satisfies
$$c_i = \frac{(\lfloor \gamma i \rfloor +1 )\alpha}{7+\lfloor \gamma i \rfloor +1-i }.$$
For $1\leq i \leq 4$, $\lfloor \gamma i \rfloor=0$, and for $5\leq i \leq 7$, $\lfloor \gamma i \rfloor=1$. Thus, we have
\begin{align*}
c_1=& \frac{0.2}{7+1-1} = 0.2/7 =1/35,\\
c_2=& \frac{0.2}{7+1-2} = 0.2/6 =1/30,\\
c_3=& \frac{0.2}{7+1-3} = 0.2/5 =1/25,\\
c_4=& \frac{0.2}{7+1-4} = 0.2/4 =1/20,\\
c_5=& \frac{0.4}{7+1+1-5} = 0.4/4 =1/10,\\
c_6=& \frac{0.4}{7+1+1-6} = 0.4/3 =2/15,\\
c_7=& \frac{0.4}{7+1+1-7} = 0.4/2 =1/5.
\end{align*} 
Note that $p_{(i)}\leq c_i$ holds for all $1\leq i \leq 3$, but not for $i=4$. Thus, 3 hypotheses are rejected, since the method is a step-down procedure. (Note that we could have stopped computing the critical values after computing $c_4$.)

\item
$\pr(FDP\leq \gamma)\geq 1-\alpha \geq  0.5$ means that the median of the FDP is at most $\gamma$. By symmetry, $\mathbb{E}(FDP)$ must also be at most $\gamma$.
 
If $\mathbb{E}(FDP)\leq \gamma$, there can be a $c>\alpha$ such that with probability at least $c$,  the FDP is strictly larger than $\gamma$. (When the FDP is right-skewed but $\alpha$ is small, this is often true as well.)
Then $\pr(FDP\leq \gamma)\leq 1-c< 1-\alpha.$
  
\item 
We have $FDP(m)=V(m)/(V(m)+S(m))$, where $S(m)$ is the number of correct rejections $|\{1\leq i \leq m: \Hy_i\text{ is false and } P_i\leq t \}|$.
 
Because of the independence of the p-values and because $\pi_0$ is asymptotically proportional to $m$, $V(m)/m$ converges to a constant $c_1$, say, and likewise $S(m)/m$ converges to some constant $c_2$. We have 
$$FDP(m) = \frac{V(m)/m}{V(m)/m+S(m)/m}.$$
As $m\rightarrow\infty$, the numerator converges to $c_1$ and the denominator converges to $c_1+c_2$, so that  $\lim_{m\rightarrow\infty}FDP(m) = c_1/(c_1+c_2)$, which meas that $FDP(m)$ converges to a constant $c$.
 
Since $\lim_{m\rightarrow\infty} \mathbb{E}(FDP(m))< \gamma$, it follows that $c<\gamma$.
Hence,  $\lim_{m\rightarrow\infty} \pr(FDP(m)< \gamma)=1 \geq 1-\alpha$.
 
When there is dependence among, say, the null p-values, then  the variance of $V$ does not necessarily vanish relative to its mean, so $V(m)/m$ does not converge to a constant. In general, when there is dependence among the p-values,  $FDP(m)$ may not converge to a constant but may vary substantially about its mean regardless of $m$. In that case, if $\alpha$ is small, requiring that  the tail probability $\pr(FDP>\gamma)$ is at most $\alpha$ is often a stricter requirement than requiring that the mean $\mathbb{E}(FDP)$ is at most $\gamma$ (also for large $m$).

\end{enumerate}
\end{sols}

 \newpage
 
 \section{Simultaneous confidence on all FDPs} \label{secCTPfdp}

\subsection{Motivation}
In \S\ref{seccriteriaFDP} we discussed a few  multiple testing criteria that  involve false discovery proportions (FDPs). FDX and FDR methods (the latter are discussed in \S\ref{secFDR}) have the limitation that the user must make all data analysis decisions before seeing the data, and these methods only provide a statement on a single set of hypotheses. FDX and FDR methods provide the user with a fixed set of rejected hypotheses, and the user has no freedom to explore other sets of hypotheses after that. For example, if an FDX method rejects 15 hypotheses, then it provides information on these hypotheses only. The method then provides no information on the top 5 hypotheses (i.e., the 5 hypotheses with the most extreme test statistics or p-values) or the top 20 hypotheses --- apart from any information that might be deduced from the method's statement about the 15 rejected hypotheses.

For example, suppose $\alpha=0.1$ and $\gamma=0.2$ and the FDX method rejects 15 hypotheses. Then we know with $90\%$ confidence that among the top 15 rejections, there are at most 3 false positives. This suggest that it is likely that among the top 5 rejections there are strictly fewer than 3 false positives. However, the FDX method does not provide any statement on that. Likewise, if we look at the top 20 hypotheses instead of the rejected 15 hypotheses, then the method tells us nothing about these 5 additional hypotheses. Thus, we can only say that with $90\%$ confidence, there are at most 3+5 true hypotheses among the top 20 hypotheses.
As another example, suppose that among the 15 rejections, 4 of them belong together in some way, or that for some other reason we are interested in these 4. Then we cannot say anything about these 4 hypotheses except that with large probability at most 3 of them are true. As a final example, suppose that we are interested in some subset of 3 hypotheses among the rejected 15. Then we cannot say anything about the number of false hypotheses among these 3. The general problem here is that FDX methods only provide a confidence statements on a single set of hypotheses.

Recall the notation $\C=\{\text{nonempty subsets of }\{1,...,m\}\}$.
In this chapter, we define a general approach that provides confidence statements on the FDPs of \emph{all} $|\C|=2^m-1$ sets of hypotheses. By the FDP of a set $\I\in \C$ we will simply mean the fraction of true hypotheses in $|\I|$, i.e., $|\N\cap\I|/|\I|$. Thus, we view the FDP as a property of a set of hypotheses; there are thus $2^m-1$ FDPs in total.
 The methods in this section  provide a $(1-\alpha)$-confidence upper bound $t_{\alpha}(\I)$ for the the number of true hypotheses $|\N\cap\I|$ in every set $\I$. Dividing by $|\I|$ gives a confidence bound for  $|\N\cap\I|/|\I|$, the FDP  for the set $\I$. 
In fact, we will guarantee  much more, because the bounds $t_{\alpha}(\I)$ are all simultaneously valid with probability at least $1-\alpha$, i.e., 
\begin{equation} \label{GSresult}
   \mathbb{P} \Bigg[\bigcap_{\I\in\subs}\Big\{     |\N\cap \I| \leq t_{\alpha}(\I)   \Big\} \Bigg]\geq 1-\alpha.
   \end{equation}

   Because with large probability the bounds $t_{\alpha}(\I)$ are all valid, we can in fact pick any $\I$ --- even after seeing the data --- and obtain a valid confidence statement on the number of true hypotheses in $\I$. This allows users the freedom to keep looking at various sets of hypotheses after looking the data, without invalidating the confidence statements. Thus, users then have much more flexibility than with FDX and FDR methods, which tell the user which hypotheses to look at.
   
   Note that since $t_{\alpha}(\I)$ is a $(1-\alpha)$-confidence upper bound for $|\N\cap\I|$, it follows that 
   \begin{equation} \label{eq:dalpha}
   d_{\alpha}(\I):=|\I|- t_{\alpha}(\I)
   \end{equation}   
    is a $(1-\alpha)$-confidence lower bound for $|\N^c\cap \I|$, the number of false hypotheses (true discoveries) among $\I$. ($t_{\alpha}$ stands for ``true'' and $d_{\alpha}$ stands for ``discoveries''.) Thus, the following is an equivalent formulation of \eqref{GSresult}:
   \begin{equation} \label{GSresultTDP}
   \mathbb{P} \Bigg[\bigcap_{\I\in\subs}\Big\{     |\N^c\cap \I| \geq  d_{\alpha}(\I)   \Big\} \Bigg]\geq 1-\alpha,
   \end{equation}
Such a formulation in terms $d_{\alpha}(\I)$ is sometimes preferred since it gives a more positive perspective: we ensure that there are at least a certain number of true discoveries among $\I$. Thus, instead of focussing on what goes wrong, we then focus on the correct rejections. Nevertheless, this formulation is completely equivalent to \eqref{GSresult}.

\subsection{General construction of the simultaneous bounds}
We now define bounds $t_{\alpha}(\I)$ that have the property \eqref{GSresult}. The construction is closely related to the general closed testing principle that we covered in  \S\ref{secCTPgeneral}; in fact, we will start by considering a closed testing procedure. Suppose that for every $\I\in\C$ we have defined a valid local test $\phi_{\I}$. Recall that the corresponding closed testing procedure (CTP) rejects all hypotheses $\Hy_{\I}$ for which $\psi_{\I}=1$, where  
\begin{align*}
\psi_{\I} = &  \min\{\phi_{\J}: \I\subseteq \J\in \subs\}\\
= & \mathbbm{1}\big(\forall  \I\subseteq \J\in \subs:  \phi_{\J}=1 \big).
\end{align*}
 As a shorthand, write
$$\X := \{\I\in \C: \Hy_{\I}   \text{ is rejected by the CTP}\} =  \{\I\in \C:  \psi_{\I}=1\}.$$
For every $\I\in \C$  we now define $t_{\alpha}(\I)$ as follows \citp{goeman2011multiple}:
\begin{equation} \label{eq:talpha}
t_{\alpha}(\I):=\max\{|\J|: \J \subseteq \I \text{ and }  \J\not \in \X\}.
\end{equation}
If there is no nonempty $\J \subseteq \I$ with  $\J\not \in \X$, we consider this maximum to be 0. 

\begin{theorem}\label{thmsimCTPbounds}
For every $\I\in \C$, define $t_{\alpha}(\I)$ as in \eqref{eq:talpha}. Then property \eqref{GSresult} --- and equivalently, property \eqref{GSresultTDP} --- is satisfied.
\end{theorem}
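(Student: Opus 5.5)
The proof reduces the simultaneous statement \eqref{GSresult} to the single event that the CTP makes no false rejection. That event is the one used in the proof of Theorem \ref{thmctp}. Let $\E$ be the event that no true intersection hypothesis is rejected by the CTP, i.e., that there is no nonempty $\J\subseteq\N$ with $\J\in\X$. If $\N=\emptyset$, we take $\E$ to be the whole probability space. By Theorem \ref{thmctp} (the CTP controls $\pr(V^{\text{tot}}>0)\leq\alpha$, given valid local tests), $\pr(\E)\geq 1-\alpha$. The remaining task is the deterministic claim that on $\E$ we have $|\N\cap\I|\leq t_{\alpha}(\I)$ for every $\I\in\C$ at once. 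Then the intersection event in \eqref{GSresult} contains $\E$, and the bound follows.

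For the deterministic step, fix $\I\in\C$ and assume $\E$ holds. If $\N\cap\I=\emptyset$, then $|\N\cap\I|=0\leq t_{\alpha}(\I)$, because $t_{\alpha}(\I)\geq 0$ by the convention that an empty maximum is $0$. Otherwise set $\J:=\N\cap\I$. This set is nonempty, satisfies $\J\subseteq\I$, and satisfies $\J\subseteq\N$. Because $\E$ holds and $\Hy_{\J}$ is a true intersection hypothesis, $\J\notin\X$. So $\J$ is admissible in the maximum \eqref{eq:talpha}, and hence $t_{\alpha}(\I)\geq|\J|=|\N\cap\I|$. The same event $\E$ serves for every $\I$, which gives simultaneity over all $2^m-1$ sets.

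Finally, the equivalence with \eqref{GSresultTDP} is immediate from $d_{\alpha}(\I)=|\I|-t_{\alpha}(\I)$ and $|\N^c\cap\I|=|\I|-|\N\cap\I|$. Under these identities each event $\{|\N\cap\I|\leq t_{\alpha}(\I)\}$ coincides with $\{|\N^c\cap\I|\geq d_{\alpha}(\I)\}$.

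I do not expect a real obstacle. The only point needing care is to define $\E$ through the CTP rather than only through $\phi_{\N}$. Equivalently, on $\{\phi_{\N}=0\}$ every nonempty $\J\subseteq\N$ has $\N$ as a superset whose local test does not reject, so $\psi_{\J}=0$. This is exactly the argument in the proof of Theorem \ref{thmctp}, and one can cite it or repeat it in one line. The edge cases of an empty $\N$ and an empty $\N\cap\I$ are handled by the zero convention.
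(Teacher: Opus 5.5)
Your proof is correct and follows essentially the same route as the paper. The paper works directly on the event $\{\phi_{\N}=0\}$, which has probability at least $1-\alpha$ and on which $\N\cap\I\notin\X$ for every $\I\in\C$, so that $\N\cap\I$ is admissible in \eqref{eq:talpha}; your event $\E$ (no false CTP rejection, obtained via Theorem \ref{thmctp}) contains that event, and your explicit handling of $\N=\emptyset$ and $\N\cap\I=\emptyset$ covers edge cases that the paper's one-line proof leaves implicit.
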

\begin{proof}
With probability at least $1-\alpha$, $\Hy_{\N}$ is not rejected by its local test. Suppose $\Hy_{\N}$ is not rejected by its local test. Then,  for every $\I \in \C$ it holds that $\N\cap \I\not\in \X$, so that $|\N\cap \I| \leq t_{\alpha}(\I)$. This proves inequality \eqref{GSresult}.
\end{proof}
We see that the proof is  short, just like the proof from \S\ref{secCTPgeneral} of the fact that a closed testing procedure controls the FWER. Both proofs are completely based on the fact with probability $1-\alpha$, $\Hy_{\N}$ is not rejected by its local test. As long as that is the case, all statements of the closed testing procedure and all the bounds $t_{\alpha}(\I)$ are correct. 

The following formulation of $t_{\alpha}(\I)$ (by \citealp{genovese2006exceedance}) can also be useful.

\begin{fact} \label{altformtalpha}
An equivalent  definition of the bounds $t_{\alpha}(\I)$ is the following:
\end{fact}
\vspace{-0.4cm}
\begin{equation} \label{eq:talpha2} 
t_{\alpha}(\I)=\max\{|\I\cap \K|: \K\in\C  \text{ and }\phi_{\K}=0\}.
\end{equation}
\begin{proof}
We first show that the quantity in    \eqref{eq:talpha} is at most the quantity in    \eqref{eq:talpha2}. To do this, we must show that for every $ \J \subseteq \I$ with $ \J\not\in \X$, there is a $\K\in \C$ for which $\phi_{\K}=0$ and such that $|\J|  \leq  |\I\cap \K|$.
To prove that, note that if $ \J \subseteq \I$ and $ \J\not\in \X$, then apparently there is some $ \J\subseteq \K\in \C$ for which $\phi_{\K}=0$, and then $|\J| =|\J\cap \K| \leq  |\I\cap \K|$. 

We now  show that the quantity in    \eqref{eq:talpha2} is at most the quantity in    \eqref{eq:talpha}.
To do this, we must show that for every $\K\in \C$ for which $\phi_{\K}=0$, there is a $ \J \subseteq \I$ with $ \J\not\in \X$ and $|\J|  \leq  |\I\cap \K|$.
To prove that, note that if $\K\in \C$ is such that $\phi_{\K}=0$, then we can take $\J=\I\cap \K$ and then we have $\J\subseteq\I$, $\J\not\in \X$ and $|\I\cap\K|= |\J|$.
\end{proof}

Which formulation of $t_{\alpha}(\I)$ is more useful or insightful, might depend on the situation. The first formulation, \eqref{eq:talpha}, connects the method to the large literature on closed testing. The second formulation, \eqref{eq:talpha2},  is in a sense the simplest, since it does not refer to the closed testing procedure. However, note that if we  use  the second formulation, we need to check for each $\K\in \C$ whether $\phi_{\K}=0$, which is very computationally expensive. In the first formulation, we  `only' need to check something for the $\J$ with $\J \subseteq \I$. This might be useful  in case we have a fast algorithm for checking whether $\J\in \X$. 

\subsection{Consonant procedures, e.g. Bonferroni-based local tests} \label{secexBonftalpha}
As a first example, suppose the local tests $\phi_{\I}$ are Bonferroni tests, i.e., 
\begin{equation}\label{localtBonf}
\phi_{\I} = \mathbbm{1}( \min_{i\in \I}P_i\leq \alpha/|\I|).
\end{equation}
We saw in \S\ref{secCTPHolm} that the CTP  based on these local tests is the Holm method; more precisely, they reject the same elementary hypotheses (and since the CTP is consonant, it follows which intersection hypotheses are rejected by the CTP).
Now, let us see what simultaneous FDP bounds $t_{\alpha}(\I)$ we get based on these local tests.


\begin{proposition} \label{proptalphaisHolm}
For every $\I\in \C$, the bound $t_{\alpha}(\I)$ is simply the number of  hypotheses in $\I$ that are not rejected by Holm's method.
\end{proposition}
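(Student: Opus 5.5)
Let $\mathcal{R}_H$ denote the set of indices rejected by Holm's method. The claim is that $t_{\alpha}(\I)=|\I\setminus \mathcal{R}_H|$ for every $\I\in\C$. I would use definition \eqref{eq:talpha} together with Theorem \ref{thmHolmisCTP}, and establish the two inequalities separately. The key structural fact I need is a characterization of $\X$ for the Bonferroni local tests \eqref{localtBonf}: namely that $\J\in\X$ if and only if $\J\cap\mathcal{R}_H\neq\emptyset$. Equivalently, the CTP is consonant, and its elementary rejections are exactly those of Holm.

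\emph{Step 1 (characterizing $\X$).} Suppose $\J\cap\mathcal{R}_H\neq\emptyset$ and pick $j\in\J\cap\mathcal{R}_H$. By Theorem \ref{thmHolmisCTP}, $\psi_{\{j\}}=1$, so $\phi_{\K}=1$ for every $\K\ni j$. Every $\K\supseteq\J$ contains $j$, hence $\psi_{\J}=1$ and $\J\in\X$.

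Conversely, suppose $\J\cap\mathcal{R}_H=\emptyset$ and $\J\neq\emptyset$. I take $\K:=\{1,\dots,m\}\setminus\mathcal{R}_H$, which contains $\J$. Write $r=|\mathcal{R}_H|$, so $|\K|=m-r$. Since Holm stops after rejecting $r$ hypotheses, either $r=m$ (impossible here, as $\J$ is nonempty) or $p_{(r+1)}>\alpha/(m-r)$. Moreover $\min_{i\in\K}P_i=p_{(r+1)}$, because Holm rejects exactly the $r$ smallest p-values. Ties need a brief check: Holm rejects all hypotheses with p-value at most $p_{(r)}$, so the tied values fall on the correct side. Hence $\phi_{\K}=0$, so $\psi_{\J}=0$ and $\J\notin\X$. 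This step is the main obstacle, and the crux is choosing $\K$ to be the complement of the Holm rejections.

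\emph{Step 2 (computing $t_{\alpha}$).} By Step 1, the sets $\J\subseteq\I$ with $\J\notin\X$ are exactly the nonempty subsets of $\I\setminus\mathcal{R}_H$. The maximal cardinality of such a set is $|\I\setminus\mathcal{R}_H|$, attained by $\J=\I\setminus\mathcal{R}_H$. If $\I\setminus\mathcal{R}_H$ is empty, the convention in \eqref{eq:talpha} gives $0$, which is again the number of non-rejected hypotheses in $\I$. This completes the proof.

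As a cross-check one could instead use Fact \ref{altformtalpha}. The same $\K$ shows $t_{\alpha}(\I)\geq|\I\cap\K|$. For the reverse inequality, any $\K$ with $\phi_{\K}=0$ must avoid $\mathcal{R}_H$, since every rejected $j$ has $\phi_{\K}=1$ whenever $j\in\K$; this gives $|\I\cap\K|\leq|\I\setminus\mathcal{R}_H|$.
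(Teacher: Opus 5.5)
Your proof is correct and essentially follows the paper. The paper remarks that the result is a corollary of Propositions \ref{boundscons} and \ref{Bonfcons} (consonance) together with Theorem \ref{thmHolmisCTP}, which is exactly what your characterization $\J\in\X \iff \J\cap\mathcal{R}_H\neq\emptyset$ packages, and its direct proof via \eqref{eq:talpha2} uses the same witness $\K=\{1,\dots,m\}\setminus\mathcal{R}_H$ as your Step 1 and cross-check. The only difference is in the upper bound: the paper does not invoke Theorem \ref{thmHolmisCTP} but argues directly, after sorting so that $P_1\leq\dots\leq P_m$, that if $\K$ meets $\mathcal{R}_H$ then $k=\min\K$ is Holm-rejected, whence $\min_{i\in\K}P_i=P_k\leq\alpha/(m+1-k)\leq\alpha/|\K|$.
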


\begin{proof}
This result is a corollary of Propositions \ref{boundscons} and \ref{Bonfcons}. We additionally provide a direct proof here.
Renumber the hypotheses such that $P_1\leq ...\leq P_m.$ Let $\J$  be the set of all hypotheses in $\{1,...,m\}$ that are not rejected by Holm. We must show that $t_{\alpha}(\I)=|\I\cap\J|.$
By  formula \eqref{eq:talpha2}, we have
\begin{align} 
t_{\alpha}(\I) = & \max\{|\I\cap \K|: \K\in\C  \text{ and }\phi_{\K}=0\}   \notag  \\ 
= & \max\{|\I\cap \K|: \K\in\C  \text{ and }\min_{i\in \K}P_i> \alpha/|\K|\}. \label{eq:TDPBonfproof}
\end{align}


In case $\J=\emptyset$, all hypotheses are rejected by Holm, and then $t_{\alpha}(\I)=0$. Indeed, pick some  $\K\in\C $ and let $k=\min\K.$ Then  
\begin{equation} \label{eq:trick}
\min_{i\in \K}P_i =P_k \leq\alpha/(m+1-k) =\alpha/|\{k,...,m\}| \leq  \alpha/|\K|.
\end{equation}
Thus, in case $\J=\emptyset$, we see that $t_{\alpha}(\I)=0=|\I\cap\J|.$

In case $\J\neq\emptyset$,   choose $j$ such that $\J=\{j,...,m\}$. 
Note that  $\min_{i\in \J}P_i = P_j> \alpha/(m+1-j)= \alpha/|\J|$. 
Thus, \eqref{eq:TDPBonfproof} is at least 
$ |\I\cap \J|$.

We are done if we show that \eqref{eq:TDPBonfproof} is also at most $|\I\cap \J|$. If $\I\cap \J=\I$, then  that immediately follows. Otherwise consider any $\K\in \C$ such that $|\I\cap \K|> |\I\cap \J|$. We must show that $\min_{i\in \K}P_i\leq \alpha/|\K|$.
To do this, note that at least one hypothesis in $\K$ is rejected by Holm. Write  $k=\min\K$, so that $\Hy_k$ is rejected by Holm. Then exactly as in \eqref{eq:trick} we find $\min_{i\in \K}P_i \leq \alpha/|\K|,$ as was left to show.
\end{proof}

We thus see that in case Bonferroni-based local tests are used, the bounds $t_{\alpha}(\I)$ follow directly from knowing which elementary hypotheses are rejected. It turns out that this is always the case if  the closed testing procedure is consonant. Recall that the CTP is called consonant if for every $\I\in \X$, there is an $i\in \I$ with $\{i\}\in \X$. 
Note that for every CTP, if $\{i\}\in \X$ then all $\I$ that contain $i$ are in $\X$. Thus, if the CTP is consonant, all intersection hypotheses that are rejected, are those that logically have to be false if the rejected elementary hypotheses are false. It may not be surprising that $t_{\alpha}(\I)$ is then simply the number of $i\in\I$ with $i\in \X$. This is true  for all consonant procedures.
We now prove that.
Then, we show that the CTP based on Bonferroni-based local tests is  consonant. These two results together imply  Proposition \ref{proptalphaisHolm}.

\begin{proposition} \label{boundscons}
Consider a consonant CTP. Then $t_{\alpha}(\I)$ is simply the number of elementary hypotheses in $\I$ that are not rejected by the CTP.
\end{proposition}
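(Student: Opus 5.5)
Let $\A := \{i\in\{1,\dots,m\}: \{i\}\notin\X\}$ denote the set of elementary hypotheses not rejected by the CTP. The claim is that $t_{\alpha}(\I) = |\I\cap\A|$ for every $\I\in\C$. I would work directly with the closed-testing formulation \eqref{eq:talpha}, $t_{\alpha}(\I)=\max\{|\J|: \J\subseteq\I,\ \J\notin\X\}$, and prove two inequalities. Along the way I will use the basic structural property of any CTP, noted after Definition \ref{defctp}: $\X$ is closed upward. That is, if $\I\in\X$ and $\I\subseteq\J\in\C$, then $\J\in\X$. This holds because $\psi_{\J}$ is a minimum over a subfamily of the local tests that determine $\psi_{\I}$.

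For the lower bound $t_{\alpha}(\I)\geq|\I\cap\A|$, take $\J=\I\cap\A$. If this set is empty, the bound is trivial, since the maximum is at least $0$ by convention. Otherwise $\J\subseteq\I$, and I need $\J\notin\X$. This is exactly where consonance enters. Suppose $\J\in\X$. Then consonance gives an $i\in\J$ with $\{i\}\in\X$, which contradicts $i\in\A$. Hence $\J\notin\X$, and $|\J|=|\I\cap\A|$ is a feasible value in the maximum.

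For the upper bound $t_{\alpha}(\I)\leq|\I\cap\A|$, take any nonempty $\J\subseteq\I$ with $\J\notin\X$. I claim that $\J\subseteq\A$. Suppose instead that some $i\in\J$ has $\{i\}\in\X$. Since $\{i\}\subseteq\J$, upward closedness gives $\J\in\X$, a contradiction. So $\J\subseteq\I\cap\A$, and therefore $|\J|\leq|\I\cap\A|$. Taking the maximum over all such $\J$ gives the bound, and the empty case again contributes $0$.

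Combining the two inequalities gives $t_{\alpha}(\I)=|\I\cap\A|$, which is the number of elementary hypotheses in $\I$ not rejected by the CTP. I do not expect a real obstacle here. The only points needing care are, first, justifying upward closedness cleanly from Definition \ref{defctp}, which is a one-line argument: $\{\K:\J\subseteq\K\}\subseteq\{\K:\{i\}\subseteq\K\}$. Second, handling the empty-set convention consistently, since $\J$ must be nonempty to be an element of $\C$. Only the lower bound uses consonance; the upper bound holds for every CTP.
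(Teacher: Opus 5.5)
Your proof is correct and takes essentially the same route as the paper's. The paper shows that, for $\J\subseteq\I$, $\J\notin\X$ holds if and only if $\J$ contains no rejected elementary hypothesis, so the maximum in \eqref{eq:talpha} equals $|\I\cap\A|$; one direction of that equivalence comes from the definition of the CTP (your upward closedness) and the other from consonance, and your two inequalities use exactly these two facts.
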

\begin{proof}
By   formula \eqref{eq:talpha},
$$t_{\alpha}(\I) = \max\{| \J|: \J\subseteq \I \text{ and }\J\not\in \X\}.$$
By definition of a CTP,  if $\J\not\in \X$, then for all $i\in \J$, $\{i\}\not\in \X$. The other way around, if for all $i\in \J$ it holds that $\{i\}\not\in \X$, then $\J\not\in \X$ due to consonance. Thus, the above equals 
$$\max\{|\J|: \J\subseteq \I \text{ and } \forall i\in \J:\{i\}\not\in \X\}.$$
This is simply the number of elementary hypotheses in $\I$ that are not rejected by the CTP.
\end{proof}

\begin{proposition} \label{Bonfcons}
The CTP based on Bonferroni local tests \eqref{localtBonf} is consonant.
\end{proposition}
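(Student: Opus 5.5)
We need to show that whenever $\I\in\X$, some $i\in\I$ has $\{i\}\in\X$. The plan is to reuse the structure of Theorem \ref{thmHolmisCTP}, which says that the CTP with local tests \eqref{localtBonf} rejects exactly the elementary hypotheses rejected by Holm, together with the trick in \eqref{eq:trick}.

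Renumber the hypotheses so that $P_1\leq\dots\leq P_m$. Fix $\I\in\X$ and let $k=\min\I$; I will show that $\{k\}\in\X$. By Theorem \ref{thmHolmisCTP}, it suffices to show that Holm rejects $\Hy_k$, i.e.\ that $P_j\leq\alpha/(m+1-j)$ for all $j\leq k$.

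For any $j\leq k$, consider $\J_j:=\{j,\dots,m\}\cup\I$. Since $j\leq k=\min\I$, we have $\I\subseteq\{j,\dots,m\}$, so $\J_j=\{j,\dots,m\}$. Also $\J_j\supseteq\I$, so $\I\in\X$ gives $\phi_{\J_j}=1$. This means $\min_{i\in\J_j}P_i=P_j\leq\alpha/|\J_j|=\alpha/(m+1-j)$. Hence Holm's step-down procedure passes every index $j\leq k$ and rejects $\Hy_k$. So $\{k\}\in\X$, which proves consonance.

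Alternatively, one can verify $\{k\}\in\X$ directly, without invoking Theorem \ref{thmHolmisCTP}. Take any $\K\ni k$ and let $l=\min\K\leq k$. Then, exactly as in \eqref{eq:trick}, $\min_{i\in\K}P_i=P_l\leq\alpha/(m+1-l)\leq\alpha/|\K|$, since $\K\subseteq\{l,\dots,m\}$. Therefore $\phi_\K=1$ for all such $\K$.

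The only delicate point is ties among the p-values. Any ordering consistent with ties works, since the local tests depend only on the values of the $P_i$. Beyond that, the argument is routine: the key observation is that the sets $\{j,\dots,m\}$ with $j\leq\min\I$ all contain $\I$.
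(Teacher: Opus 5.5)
Your proof is correct, and it uses the same witness as the paper: after sorting, $k=\min\I$ is the hypothesis in $\I$ with the smallest p-value. The route to $\{k\}\in\X$ is different, however.

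The paper needs no sorting and no reference to Holm. For each $\J\ni j$ it uses the single superset $\I\cup\J$. This set is locally rejected because $\I\in\X$, it has the same minimum p-value as $\J$, and it is at least as large, so $\min_{i\in\J}P_i\le\alpha/|\I\cup\J|\le\alpha/|\J|$.

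You instead use only the local rejections of the tail sets $\{j,\dots,m\}$ with $j\le k$. From these you extract Holm's inequalities $P_j\le\alpha/(m+1-j)$. You then conclude either through Theorem~\ref{thmHolmisCTP}, which is not circular since it is proved independently in \S\ref{secCTPHolm}, or through the bound $|\K|\le m+1-\min\K$ as in \eqref{eq:trick}.

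The paper's argument is shorter and avoids the tie-breaking remark. Yours is longer but more structural. It shows that, in the sorted labeling, $\I\in\X$ holds if and only if all tail sets $\{j,\dots,m\}$ with $j\le\min\I$ are locally rejected. So membership in $\X$ depends on $\I$ only through $\min\I$. Consonance follows at once, and this is exactly why Holm is an exact shortcut for this CTP; your direct variant essentially re-proves Part 2 of Theorem~\ref{thmHolmisCTP}.
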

\begin{proof}

Suppose $\I\in \X$.  
Pick $j\in \I$ such that $P_j=\min\{P_i:i\in \I\}$. We will be done if we show that $\{j\}\in \X$.

We must  show that for all  $\J\in \C$ with $j\in \J$, $\phi_{\J}=1$.
Consider  a $\J\in \C$ with $j\in \J$. Firstly,  note that $\phi_{\I\cup\J}=1$, since $\I\in\X$. 
Further, 
$\min\{P_i: i\in \J\} =\min\{P_i: i\in (\I\cup \J)\}  \leq \alpha /|\I\cup \J| \leq \alpha /| \J| $, which means that $\phi_{\J}=1$ as we needed.
\end{proof}

Another example of a consonant CTP, is the CTP that corresponds to the maxT method (see \S\ref{secmaxTCTP}). In the next section, we will look at a  CTP that is not consonant.

\subsection{Nonconsonant procedures, e.g. Simes local tests}
For nonconsonant (``dissonant'') methods, the  bounds $t_{\alpha}(\I)$ do not generally follow trivially from the set of rejected elementary hypotheses. 
An example of a nonconsonant CTP (see exercises) is the one based on Simes local tests 
\begin{equation}\label{localtSimes}
\phi_{\I} = \mathbbm{1}\Big( \bigcup_{i\in \I} \{P^{\I}_{(i)}\leq \frac{i\alpha}{|\I|} \}  \Big).
\end{equation}
We mentioned in \S\ref{secHommel} that the CTP based on these local tests is  Hommel's method --- more precisely, Hommel finds the elementary hypotheses that are rejected by this CTP. Since the CTP is not consonant, we can potentially make statements that do not trivially follow from Hommel's rejections of elementary hypotheses. For example, it may happen that Hommel rejects nothing, but we can still infer that $|\N^c\cap\I|>0$ for some $\I\in \C$.

Consider the CTP based on the  Simes local tests. Naively computing the $t_{\alpha}(\I)$ for all $\I\in \C$ is computationally infeasible for moderate or large $m$. 
A fast, exact computational shortcut for this CTP is presented in \citt{goeman2019simultaneous}. The method is implemented in the R package \emph{Hommel} on CRAN \citp{hommel}.
The R package  can be used as follows. First create an object of class \verb|hommel| using
$$\verb|hom <- hommel(p)|$$ where \verb|p| contains the $m$ p-values. Then run
$$\verb|discoveries(hom, I, alpha=0.05)|$$
where \verb|I| is the set of (indices of) hypotheses $\I$ for which we want to compute the bound $d_{\alpha}(\I)=|\I|-t_{\alpha}(\I).$ We then obtain $d_{\alpha}(\I)$.

If we apply this method (with $\alpha=0.05$ again) to the 4088 p-values from \S\ref{secribo}, we clearly obtain valuable information on top of what Hommel's FWER method provided. Indeed, we saw that Hommel's method rejected 54 hypotheses. However, if we compute $d_{\alpha}(\I)$ with $\I=\{1,...,m\}$, we get 104 rejections. Thus, we know that with probability at least $0.95$, there are at least 104 false hypotheses. Further, if we compute $d_{\alpha}(\I)$ where $\I$ contains the hypotheses with the 100 smallest p-values, then we get $d_{\alpha}(\I)=90$; thus,  among the 100 hypotheses with the smallest p-values, at least 90 are false. Importantly, all such statements are \emph{simultaneously} true with probability at least $0.95$.

If we consider the p-values from \S\ref{secappsales}, then unfortunately the bounds $d_{\alpha}(\I)$ provide no additional information in addition to the set of elementary hypotheses rejected by Hommel. Thus, we see that the information that we get on top of the rejected elementary hypotheses can sometimes be a lot, and sometimes nothing.

Other examples of nonconsonant CTPs are procedures with local tests based on sums of test statistics, including Fisher combinations.  As is the case for FWER and FDX control, sometimes we can gain more powerful simultaneous FDP statements by using approaches  based on permutations or bootstrapping. Examples are the methods in \citt{hemerik2019permutation, blain2022notip, andreella2023permutation, vesely2023permutation}.

\subsection{Exercises}

\begin{enumerate}

\item Consider Figure \ref{fig:ctp}, where crosses indicate the intersection hyptheses that are rejected by their local tests. Suppose the local tests have significance level $\alpha=0.05$.

 (a) What can we  say with $95\%$ confidence about the total number of false hypotheses among the three?

Now suppose that $\Hy_2\cap\Hy_3$ is also  rejected by its local test, in addition to the hypotheses indicated with a cross.

(b) What can we then say about the total number of false hypotheses in $\Hy_2\cap\Hy_3$?

(c) What can we then say about the total number of false hypotheses?

\item
Suppose we take $\alpha=0.01$ and are interested in four hypotheses $\Hy_1,...,\Hy_4$ and the $2^4-1$ intersections. Consider some CTP for these hypotheses. Suppose that all intersection hypothes are rejected by the CTP, except $\Hy_{12}$, $\Hy_1$ and $\Hy_2$.

(a) What can we say with $99\%$ confidence about the total number of false hypotheses among $\Hy_1,...,\Hy_4$? 

(b) What can we say with $99\%$ confidence about the total number of false hypotheses among $\Hy_3$ and $\Hy_4$?

\item 
Consider elementary hypotheses $\Hy_1,...,\Hy_m$ and a corresponding CTP. Prove that the following statements are equivalent.
\begin{itemize}
\item The procedure is consonant.
\item For every $\I\in\C$, $d_{\alpha}(\I)>0$ implies there exists an $i\in\I$ for which $d_{\alpha}(\{i\})=1$.
\end{itemize}


\end{enumerate}


\begin{sols}
\subsection{Solutions}
\begin{enumerate}
\item (a) For the first question, we must find $t_{\alpha}(\{1,2,3\})$, which is defined by 
$$\max\{|\J|: \J \subseteq \{1,2,3\} \text{ and }  \J\not\in \X\}.$$
Note that $\{1,2,3\}\in \X$, so this maximum is smaller than 3. Note that  $\{2,3\}\not\in \X$, so this maximum is 2. Thus,  with probability at least $0.95$ the number of true hypotheses among the three is at most 2, i.e., there is at least one false hypothesis among these three.

(b)  Note that in this case $\Hy_1\cap\Hy_2$, $\Hy_1\cap\Hy_3$ and $\Hy_2\cap\Hy_3$ are all rejected by the CTP. It follows that $t_{\alpha}(\{1,2,3\})$ is at most 1. It is not 0, since  $\{3\}\not\in \X$. Hence 
with probability at least $0.95$ the number of true hypotheses among the three is at most 1, i.e., there is are least 2 false hypotheses among these three.

(c) For the third question, we look at $t_{\alpha}(\{2,3\})$, which is 1. Hence, with probability at least $0.95$, the number of true hypotheses among these two is also at most 1.

\item (a) 
We must compute $$d_{\alpha}(\{1,...,4\})= 4- t_{\alpha}(\{1,...,m\})=$$
$$4-\max\{|\J|: \J \subseteq \{1,...,4\} \text{ and }  \J\not \in \X\}=$$
$$4-|\{1,2\}|=2.$$
We see that with probability at least $0.99$, there are at least two false elementary hypotheses.

(b)
Note that $$t_{\alpha}(\{3,4\})=$$
$$2-\max\{|\J|: \J \subseteq \{3,4\} \text{ and }  \J\not \in \X\}.$$
There is no $\J \subseteq \{3,4\}$ for which $\J\not \in \X$. Hence we know that with probability at least $0.99$,  $\Hy_3$ and $\Hy_4$ are both false.

\item 
We first show that the first statement implies the second one.
$d_{\alpha}(\I)>0$ means $t_{\alpha}(\I)<|\I|$. Hence, apparently, $\I\in \X$, since otherwise we would have $t_{\alpha}(\I)=|\I|$. Since the procedure is consonant, this means that there must be an elementary hypothesis $i\in \I$ that is rejected by the CTP, i.e., $\{i\}\in \X$. This means that $t_{\alpha}(\{i\})=0$, so $d_{\alpha}(\{i\})=1-0=1.$

We now show that the second statement implies the first one. Consider $\I\in \X$. We must show there is an $i\in \I$ that is rejected by the CTP. Note that $t_{\alpha}(\I)<|\I|$, so $d_{\alpha}(\I)>0$, so there is an $i\in \I$ with $d_{\alpha}(\{i\})=1$, so $t_{\alpha}(\{i\})=0$, which implies that  $\{i\}\in \X$. We conclude that the procedure is consonant.

\end{enumerate}

\end{sols}


\newpage

\section{False discovery rate control with Benjamini-Hochberg} \label{secFDR}

Recall that the false discovery rate (FDR) is defined as $\mathbb{E}(FDP)$.
The by far  most popular and well-known FDR controlling method is the Benjamini-Hochberg (BH) method. The method was proposed in \citet{benjamini1995controlling}. There the authors prove that the method controls the FDR when the null p-values are independent of each other. In fact, they show that the FDR is then at most $\pi_0\alpha$, where $\pi_0:=|\N|/m$ is the fraction of true hypotheses among all hypotheses. (Unfortunately,  $\pi_0$ is usually not exactly known.) In \citet{benjamini2001control} it is shown that this still holds under certain dependence structures of the p-values (see below). That paper also provides a variant of the method, which controls the FDR regardless of the dependence structure, but is less powerful.

BH is defined as follows. Let $k$ be the largest number $1\leq i \leq m$ for which $P_{(i)}\leq \frac{i}{m}\alpha$, if there is such an $i$, and reject the hypotheses corresponding to $P_{(1)},...,P_{(k)}$. Otherwise, reject nothing. Thus, BH is simply a step-up method with critical values $c_i=\frac{i}{m}\alpha$,  $1\leq i \leq m$. This means that BH-adjusted p-values can be computed as explained in \S\ref{secadj}. 

\subsection{Theory}

When $m$ is large and a substantial proportion of the hypotheses are false, BH can be much more powerful than FWER methods such as Bonferroni, Holm and Hommel. For example, in BH, the $(m/10)$-th sorted p-value is compared with $\alpha/10$ (assuming $m$ is a multiple of 10), while Bonferroni compares all p-values with $\alpha/m$.
Of course, ensuring that $FDR\leq\alpha$ is a much weaker guarantee than ensuring that $FWER\leq\alpha$.
Also, ensuring  that the FDR is below $0.05$ is weaker than ensuring that the FDP is below $0.05$ with high probability (FDX control), see some of the exercises.

BH often controls the FDR in practice, especially when two-sided p-values are used.
BH is known to control the FDR  when the \emph{PRDS} assumption is satisfied.  PRDS stands for \emph{positive regression dependence on the subset} $\N$. The PRDS assumption roughly  says that all p-values depend on the null p-values in a positive way. 
To define the PRDS assumption, we need the following concept. A set $\A\subseteq\reals^m$ is called \emph{increasing} if $a\in \A$ implies that $b\in \A$ for all $b\in \reals^m$ satisfying $b\geq a$ (meaning $b_i\geq a_i$ for all $1\leq i \leq m$).

\begin{definition}
Consider p-values $P_1,...,P_m$. They satisfy \emph{positive regression dependence on the subset} $\N$ (\emph{PRDS}) if for every $i\in \N$ and increasing set $\A\subseteq \reals^m$, the function $x\mapsto \pr((P_1,..,P_m)\in \A|P_i\leq x)$ is nondecreasing.
\end{definition}

\begin{theorem}
Suppose that the null p-values are i.i.d. and uniform on $[0,1]$ and independent of the non-null p-values. Then the FDR of BH is exactly $\pi_0\alpha$.

 If we only assume that the p-values satisfy the PRDS assumption, then the FDR of BH is at most $\pi_0\alpha$.
\end{theorem}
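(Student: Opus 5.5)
Write $R$ for the number of rejections of BH and decompose the FDP over the true hypotheses:
$$FDP=\sum_{i\in\N}\frac{\mathbbm{1}\{P_i\leq R\alpha/m\}}{R\vee 1}=\sum_{i\in\N}\sum_{k=1}^m \frac{\mathbbm{1}\{P_i\leq k\alpha/m,\ R=k\}}{k}.$$
This uses that, by the step-up definition, $\Hy_i$ is rejected if and only if $P_i\leq R\alpha/m$. It then suffices to show that for each fixed $i\in\N$,
$$\sum_{k=1}^m \frac{1}{k}\pr(P_i\leq k\alpha/m,\ R=k)\leq \alpha/m,$$
with equality under independence and exact uniformity. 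Summing over the $N$ true hypotheses then gives $FDR\leq N\alpha/m=\pi_0\alpha$.

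\emph{Independent case.} The key device is the ``leave-one-out'' rejection count. Let $R_i$ be the number of rejections of BH when $P_i$ is replaced by $0$; this is a function of $(P_j)_{j\neq i}$ only. I would verify the self-consistency fact that, on the event $\{P_i\leq k\alpha/m\}$, we have $R=k$ if and only if $R_i=k$. To see this, setting $P_i$ to $0$ does not change any of the comparisons $P_{(j)}\leq j\alpha/m$ at the relevant index once $P_i$ is itself below the threshold that is reached. Checking this carefully is routine bookkeeping with the sorted p-values. Given it,
$$\pr(P_i\leq k\alpha/m,\ R=k)=\pr(P_i\leq k\alpha/m)\,\pr(R_i=k),$$
by independence of $P_i$ from the other p-values. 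The first factor equals $k\alpha/m$ under exact uniformity, and is at most $k\alpha/m$ if $P_i$ is merely valid. Hence the sum over $k$ equals $(\alpha/m)\sum_k \pr(R_i=k)=\alpha/m$. This yields $FDR=\pi_0\alpha$ exactly when the null p-values are uniform and independent of each other and of the non-nulls.

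\emph{PRDS case.} Here I would follow Benjamini--Yekutieli. Write
$$\frac{1}{k}\pr(P_i\leq k\alpha/m,\ R=k)=\frac{\alpha}{m}\cdot\frac{\pr(P_i\leq k\alpha/m)}{k\alpha/m}\,\pr(R=k\mid P_i\leq k\alpha/m).$$
The middle fraction is at most $1$ by validity, so it suffices to show
$$\sum_k \pr(R=k\mid P_i\leq q_k)\leq 1,\qquad q_k:=k\alpha/m.$$
Set $D_k=\{R\leq k\}$. This set is increasing in the p-value vector, because larger p-values give fewer BH rejections. Then $\pr(R=k\mid P_i\leq q_k)=\pr(D_k\mid P_i\leq q_k)-\pr(D_{k-1}\mid P_i\leq q_k)$. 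By PRDS, $\pr(D_{k-1}\mid P_i\leq x)$ is nondecreasing in $x$, so $\pr(D_{k-1}\mid P_i\leq q_k)\geq \pr(D_{k-1}\mid P_i\leq q_{k-1})$. The sum therefore telescopes to at most $\pr(D_m\mid P_i\leq q_m)=1$.

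The main obstacle is the first case's self-consistency claim that $R=R_i$ on $\{P_i\leq k\alpha/m\}$, together with handling the $k$ indexing in the PRDS telescoping: the conditioning events change with $k$, which is exactly where PRDS monotonicity is needed. I would handle the boundary term $D_0$ (empty, so its probability is $0$) explicitly.
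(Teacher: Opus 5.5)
Your proof is correct, but it takes a different route from the paper. The paper proves only the first part, the independent and exactly uniform case, and it does so with a backward martingale. It shows that $t\mapsto V(t)/t$ is a martingale in reverse time with respect to $\mathcal{F}_t=\sigma(\mathbbm{1}\{P_j\le s\}: s\in[t,1],\ 1\le j\le m)$. It identifies BH with thresholding at $t_\alpha=\sup\{t: mt/(R(t)\vee 1)\le\alpha\}$ and notes that $t_\alpha$ is a stopping time with $R(t_\alpha)\vee 1=mt_\alpha/\alpha$. Optional stopping then gives $\mathbb{E}[V(t_\alpha)/t_\alpha]=|\N|$, hence $FDR=\pi_0\alpha$. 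The PRDS statement is explicitly left unproved.

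Your decomposition over $\{R=k\}$, with the leave-one-out count $R_i$, avoids martingale machinery entirely and gains two things. First, under independence it yields $FDR\le\pi_0\alpha$ even when the null p-values are merely valid. The paper's martingale step uses $\pr(P_i\le s\mid P_i\le t)=s/t$, and super-uniformity alone does not supply even the supermartingale inequality. Second, the same per-hypothesis decomposition is what drives the Benjamini--Yekutieli telescoping, so you also prove the second half of the theorem, which the paper omits. In exchange, the paper's proof is a single identity once the appendix theory is in place. It also frames BH as thresholding the estimated FDP $mt/R(t)$, which is the optional-stopping template the paper reuses for knockoffs$+$.

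Two small repairs are needed. First, $D_0=\{R\le 0\}$ is not empty (take all p-values equal to $1$). What you actually need is $\pr(D_0\mid P_i\le q_1)=0$, which holds because $P_i\le\alpha/m$ forces $R\ge 1$. The same observation gives $R_i\ge 1$ always, which you need for $\sum_{k=1}^m\pr(R_i=k)=1$ and hence for exact equality in the independent case. Second, the self-consistency step has a clean one-line proof via $R=\max\{k: |\{j:P_j\le k\alpha/m\}|\ge k\}$. Replacing a $P_i\le k\alpha/m$ by $0$ leaves every count $|\{j:P_j\le k'\alpha/m\}|$ with $k'\ge k$ unchanged, so $R=k$ if and only if $R_i=k$.
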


We will only prove the first part of the theorem here.
\begin{proof}
Multiple proofs can be found in \citt{wang2022elementary}. Here we give the proof based on martingale theory.
For $t\in[0,1]$ write 
$$R(t)= |\{1\leq i \leq m: P_i\leq t\}|,$$ 
$$V(t)= |\{i\in \N: P_i\leq t\}|.$$ 
A main idea in the proof will be  that $\mathbb{E}V(t)/t$ is equal to $|\N|$ for all $t\in (0,1]$
and is a martingale, so that we can use martingale theory.

Let $$t_{\alpha} = \sup\{t\in [0,1]: \frac{mt}{R(t)\vee 1}\leq \alpha\}$$ (not to be confused with earlier notation). Here $mt$ can be seen as a conservative estimate of $V(t)$, so that $t_{\alpha}$ has the interpretation of being the largest rejection threshold for which the estimated FDP is still below $\alpha$.

Note that the function $t\mapsto \frac{mt}{R(t)\vee 1}$ is piecewise continuous and has no jumps upwards. It follows that  $\frac{mt_{\alpha}}{R(t_{\alpha})\vee 1}$ is exactly $\alpha$, hence
\begin{equation} \label{eqR}
R(t_{\alpha})\vee 1 = \frac{mt_{\alpha}}{\alpha}.
\end{equation}

Consider any $1\leq i \leq m$ and suppose $P_i\leq t_{\alpha}$. On the interval $[P_i, t_\alpha]$ the function $t\mapsto \frac{mt}{R(t)\vee 1}$ takes its minimum at some $t\in\{P_1,...,P_m\}$.
Hence, there is a $1\leq j\leq m$ with $P_i\leq P_j$ and $mP_{j}/R(P_j)\leq \alpha$, i.e.,  $\Hy_i$ is rejected by BH\footnote{Here we use that $\Hy_i$ is rejected by BH if and only if   there is a $1\leq j\leq m$ with $P_i\leq P_j$ and $P_j\leq \frac{R(P_j)}{m}\alpha$.}.
If $P_i> t_{\alpha}$, there is no such $j$. Thus, each $\Hy_i$ is rejected by BH if and only if $P_i\leq t_{\alpha}$. 

For every $t\in[0,1]$, consider the $\sigma$-algebra generated by the variables $(\mathbbm{1}_{\{P_1\leq s\}},...,\mathbbm{1}_{\{P_m\leq s\}}:s\in [t,1])$:
$$\mathcal{F}_t = \sigma(\mathbbm{1}_{\{P_1\leq s\}},...,\mathbbm{1}_{\{P_m\leq s\}}:s\in [t,1]).$$
Conditioning on $\mathcal{F}_t$ means conditioning on the information which p-values have been oberved (and their values) before time $t$, if we let time run backwards.

We now show that  $t\mapsto V(t)/t$ is a backward martingale with respect to the filtration $(\mathcal{F}_t)_{t\in [0,1]}$.
For $s\leq t$,
$$ \mathbb{E} (V(s)/s|\mathcal{F}_t) = $$
$$s^{-1}\mathbb{E} (\sum_{i\in \N}\mathbbm{1}(P_i\leq s)|\mathcal{F}_t)=$$
$$ s^{-1}\sum_{i\in \N}\mathbb{E}(\mathbbm{1}(P_i\leq s)|\mathbbm{1}(P_i\leq t)). $$
We have $\pr(P_i\leq s|P_i>t)=0$ and $\pr(P_i\leq s|P_i\leq t)=s/t$.
Hence, the above equals
$$ s^{-1}\sum_{i\in \N} (s/t)  \mathbbm{1}(P_i\leq t)  =V(t)/t,$$
so $t\mapsto V(t)/t$ is a backward martingale.

Note that $t_{\alpha}$ is a stopping time with respect to the filtration $(\mathcal{F}_t)_{t\in[0,1]}$. (We start at timepoint 1 and let $t$ decrease until we observe that  $\frac{mt}{R(t)\vee 1}\leq \alpha$, which is known  conditional on $\mathcal{F}_t$.) The optional stopping theorem tells us that $\mathbb{E}(V(t_{\alpha})/t_{\alpha})= \mathbb{E}(V(1)/1)=|\N|$.

Hence, by \eqref{eqR},
$$\mathbb{E}\frac{V(t_{\alpha})}{R(t_{\alpha})\vee 1} =  \mathbb{E}\frac{V(t_{\alpha})\alpha}{mt_{\alpha}} =\frac{|\N|}{m}\alpha.$$
\end{proof}


The proof is by \citt{finner2009false} and is also provided in  \citt{wang2022elementary}. (\citealp{benjamini2001control} prove a slightly different result.) 
Note that the critical values of BH are identical to the critical values of Simes' global test. From the fact that BH provides weak FWER control under the PRDS assumption (see exercises) it follows that Simes' inequality \eqref{SimesProbEq} is valid under  the PRDS assumption. The reverse is not always true, i.e., Simes may hold while PRDS does not hold. When there are strong positive dependencies in the data, BH and Simes's global test tend to be conservative.

BH is implemented in the standard R function \emph{p.adjust()}. The mentioned  conservative variant that controls the FDR regardless of the dependence structure \citep{benjamini2001control} is also available in that function.
Applying BH (with $\alpha=0.05$) to the 279 p-values from  \S\ref{secappsales} results in     58 rejections (Holm rejected 36).  Applying it to the 31 p-values   corresponding to the numerical covariates results in 17 rejections (Holm rejected 10).

\subsection{Exercises}
\begin{enumerate}

\item Suppose $m=5$, $\alpha=0.05$ and the p-values corresponding to $\Hy_1,...,\Hy_5$ are $0.200$, $0.015$, $0.025$, $0.030$,  and $0.720$ respectively.

(a) Which hypotheses are rejected by Benjamini-Hochberg?

(b) Compute the  Benjamini-Hochberg-adjusted p-value for each hypothesis. See \S\ref{secadj} on how to compute adjusted p-values for step-up methods.

\item 
Consider $m$ hypotheses. Below, ``hypotheses'' always refers to these $m$ hypotheses and not to any of their intersections.

For each of the following general statements, indicate whether they are true. Explain your claims.
\begin{itemize}
\item
Statement 1: Whenever Benjamini-Hochberg rejects 0 hypotheses, Hommel also rejects 0 of the hypotheses.
\item  Statement 2: Whenever Hommel rejects 0 hypotheses,  Benjamini-Hochberg also rejects 0 hypotheses.
\end{itemize}

\item Consider a multiple testing method that controls the FDR at level $\alpha$, i.e., $FDR\leq\alpha$, where $\alpha\in(0,1)$. Prove that the method provides weak FWER control at level $\alpha$.

\item Consider a multiple testing method that controls the FDR, i.e., $FDR\leq\alpha$, where $\alpha\in(0,1)$. Let $\gamma\in (0,1)$.
Prove that $\pr(FDP\geq \gamma) \leq \alpha/ \gamma$, i.e., the method provides FDX control.

\item Suppose a method provides FDX control, i.e., $\pr(FDP\leq \gamma) \geq 1-\alpha$. 
Show that then $FDR\leq (1-\alpha)\cdot \gamma  + \alpha$.

\item Let $\E$ be the event that Benjamini-Hochberg rejects at least 1 hypothesis. 
Consider the following statement: ``If the p-values are valid and satisfy the PRDS assumption, BH controls the FDR conditionally on $\E$, i.e., $\mathbb{E}(FDP|\E)\leq \alpha$''. Is this statement generally true?

\end{enumerate}

\begin{sols}
\subsection{Solutions}
\begin{enumerate}

\item 

(a) The critical constants are $0.01, 0.02,...,0.05$. Note that $p_{(3)}\leq 0.03$, so at least 3 hypotheses are rejected, since Benjamini-Hochberg is a step-up method. The p-values $p_{(4)}$ and $p_{(5)}$ do not exceed their critical values. Hence, exactly three hypotheses are rejected. The hypotheses with the smallest p-values are $\Hy_2, \Hy_3, \Hy_4$, so these hypotheses are rejected.

(b) First we must compute $p_{(i)}\alpha c_i^{-1} =p_{(i)}\alpha ((\alpha i/m))^{-1} = p_{(i)}m/i$ for every $1\leq i \leq 5$. This gives the values $0.075,  0.062, 0.050, 0.250, 0.720$. Finally, we must enforce monotonicity. 
Since Benjamini-Hochberg is a step-up method, this means decreasing the values 0.075 and 0.062 to 0.050. Thus, the \emph{sorted} adjusted p-values are 
 $0.050,  0.050, 0.050, 0.250, 0.720$. Thus, for $\Hy_1,...,\Hy_5$, the respective adjusted p-values are $0.250, 0.050,  0.050, 0.050,  0.720$.

\item The critical values of the step-up method of Benjamini and Hochberg are identical to the critical values of Simes' global test. 
Consequently, Benjamini-Hochberg rejects something if and only if Simes' global test rejects.

Hommel is the CTP based on Simes local tests. A CTP will reject none of the  elementary hypotheses if the global hypothesis $\Hy_1\cap...\cap \Hy_m$ is not rejected by its local test. Hence,  Hommel will reject nothing if Simes' global test does not reject.
Thus, if Benjamini-Hochberg rejects nothing, then Hommel rejects nothing, so Statement 1 is true. 

If Simes' global test rejects something, then that does not imply that Hommel rejects one of the elementary hypotheses. This is because the CTP based on Simes local tests is not consonant. Thus, it may happen that Benjamini-Hochberg rejects something, but Hommel rejects none of the $m$ hypotheses. Thus, Statement 2 is false.

\item Suppose all hypotheses are true. We must show that $\pr(V>0)\leq \alpha$. Note that if all hypotheses are true, then $V>0\Leftrightarrow FDP=1$ and $V=0\Leftrightarrow FDP=0$.
Hence, 
 $\pr(V>0) =\pr(FDP=1) = \mathbb{E}(FDP)\leq\alpha$.

\item Since $FDP$ is a nonnegative random variable, we can use Markov's inequality, which says that $\mathbb{E}(FDP)\geq \gamma \pr(FDP\geq \gamma)$. Hence $$\alpha \geq FDR=\mathbb{E}(FDP)\geq \gamma\pr(FDP\geq \gamma) $$ and the result follows.

\item Note that by the law of total probability,
$\mathbb{E}(FDP)$ equals
$$\pr(FDP\leq \gamma)\mathbb{E}(FDP|FDP\leq \gamma) + \pr(FDP> \gamma)\mathbb{E}(FDP|FDP>\gamma) $$
\begin{equation} \label{eqproofFDRFDX}
\leq \pr(FDP\leq \gamma) \cdot \gamma  + \pr(FDP> \gamma) \cdot 1.
\end{equation}
Here we know that $\pr(FDP> \gamma)$ lies in the interval $[0,\alpha]$.
\eqref{eqproofFDRFDX} is maximized when $\pr(FDP\leq \gamma)$ is minimized and $\pr(FDP> \gamma)$ is maximized, i.e., when $\pr(FDP> \gamma)=\alpha$.
Thus, \eqref{eqproofFDRFDX} cannot exceed
$$ (1-\alpha)\cdot \gamma  + \alpha,$$
as was to be shown.

Alternative solution: Note that \eqref{eqproofFDRFDX} equals 
$$ (1-\pr(FDP> \gamma)) \cdot \gamma  + \pr(FDP> \gamma)$$
$$=\pr(FDP> \gamma)(1-\gamma)+\gamma $$
$$\leq \alpha (1-\gamma)+\gamma $$
$$= (1-\alpha)\cdot \gamma  + \alpha.$$

\item
No. Suppose for example that all hypotheses are true.  Then conditional on $\E$, the FDP is always 1. 
Thus, conditional on finding something, the error rate is not generally controlled. This is also true for many other  multiple testing methods, and also for tests of a single hypothesis.
\end{enumerate}

\end{sols}

\newpage

\section{FDR control with knockoffs} \label{secknockoffs}
\subsection{Motivation and setting}
Consider covariates $X_1,...,X_m$ and a response variable $Y$.
Often we want to know for every $1\leq j \leq m$ if  $Y$ depends on  $X_j$ \emph{conditional} on the other $m-1$ covariates. For every $1\leq j \leq m$, let $\Hy_j$ be the hypothesis that $Y$ is conditionally independent on $X_i$. If $\Hy_j$ is true, then that means that given the other covariates, $X_j$ does not provide any information on $Y$; in short, $Y\indep X_j|X_{-j}$.\footnote{This means in particular that conditional on $(X_j:j\in \N^c)$, $Y$ is independent of each individual $X_j$ with $j\in \N$.
However, it does not always mean that conditional on $(X_j:j\in \N^c)$, $Y$ is independent of the whole vector $(X_j:j\in \N)$. \citt[][pp.557-558]{candes2018panning} give an example to illustrate that.
However, if we exclude some pathological cases, these two notions are equivalent.}

To test $\Hy_j$, we could for instance assume a generalized linear model where $\mathbb{E}(Y)= g^{-1}(\beta_1X_1+...+\beta_mX_m)$, and test whether the coefficient of  $X_i$ is 0. Of course, we then make assumptions, since we assume a certain distributional shape for $Y$ and we assume that $g(\mathbb{E}(Y))$ depends linearly on $\beta_1X_1+...+\beta_mX_m$. We could make the model potentially more realistic by adding interactions, but still some substantial assumptions remain, and we should also be careful that the number of parameters does not become too large compared to the sample size.

The \emph{knockoffs} methodology from \citt{candes2018panning} provides an alternative method that gets rid of such assumptions, replacing it by assumptions on the joint dependence structure of $(X_1,...,X_m)$. The method tests the conditional-independence hypotheses $\Hy_1,...,\Hy_m$ in such a way that the FDR is controlled (for the \emph{knockoffs+} method, to be precise). The method puts no restrictions on $m$: it may be larger than $n$. The only assumption that the method makes, is that the joint dependence structure of $(X_1,...,X_m)$ is known. This is of course a big assumption, but in some settings this dependence structure is (approximately) known. \citt[][p.554-555]{candes2018panning} give these examples where the dependence structure is (approximately) known:
\begin{itemize}
\item Sometimes the covariate distribution is known because we control it, e.g. in some experiments.
\item Sometimes we very accurately know the covariate distribution, due to the availability of a large amount of additional unlabeled data.
\item A third example is when we have  pre-existing knowledge about aspects of the joint distribution of the covariates.
\end{itemize}

 Moreover, if the covariate distribution is unknown, it can be estimated based on the data at hand; there are some limited results on knockoffs methods for the situation where the dependence structure of $(X_1,...,X_m)$ is unknown a priori.

 The knockoffs method is a multiple testing method, but it can of course be used as a variable selection method.  We could therefore denote the set of hypotheses that the method selects by $\mathcal{S}$ as in \citt{candes2018panning}, but we will use the notation $\R$ in line with our usual notation. Thus, the (knockoffs+) method guarantees that $FDR=\mathbb{E}(|\N\cap \R|)/(|\R|\vee 1)\leq \alpha$.
 
 \subsection{Definition of knockoffs and test statistics}
 
 \begin{definition}\label{defknockoffs} \emph{(Model-X) knockoffs} for the variables $X=(X_1,...,X_p)$ are a new vector of random variables $\tilde{X}=(\tilde{X}_1,...,\tilde{X}_p)$ that is constructed to have the following properties:
 \begin{enumerate}
 \item\label{it:swapinv} for any subset $\A\subset \{1,...,m\}$, 
 $$  (X, \tilde{X})_{\text{swap}(\A)}   \,{\buildrel d \over =}\, (X, \tilde{X}), $$ where the swapping operation is defined below;
 \item\label{it:XindY} $\tilde{X}\indep Y | X$. 
 \end{enumerate}
 \end{definition}
 Note that property \ref{it:XindY} is automatically satisfied if $\tilde{X}$ is constructed based on $X$ without using $Y$.
 The vector $(X,\tilde{X})_{\text{swap}(\A)}$ is obtained from $(X,\tilde{X})$ by swapping the $X_i$ with $i\in \A$ with the corresponding $\tilde{X}_i$. For example, if $m=3$ and $\A=\{2,3\}$, we have
 $$  (X_1,X_2,X_3,\tilde{X}_1,\tilde{X}_2,\tilde{X}_3)_{\text{swap}(\A)} =  (X_1,\tilde{X}_2,\tilde{X}_3,\tilde{X}_1,X_2,X_3).$$ By property \ref{it:swapinv},  the distribution of $(X,\tilde{X})$ is invariant under these swapping operations. Thus the knockoff variables are interchangeable with the corresponding original covariates, except that the knockoff variables  are conditionally independent of $Y$. We will discuss the construction of knockoff variables in \S\ref{secconknock}.
 
 We now define the data in detail. We consider i.i.d. random vectors $(X_{i1},...,X_{im}, Y_i)\in \reals^m\times\reals$, which we collect in an $n\times m$-matrix  $\mathbf{X}$ and a column vector $y$ of length $n$. The knockoffs matrix $\tilde{\mathbf{X}}$ is constructed in such a way that for each $1\leq i \leq n$, 
 $(\tilde{X}_{i1},...,\tilde{X}_{im})$ is a knockoff of $(X_{i1},...,X_{im})$ as explained above.
We define $(\mathbf{X}, \tilde{\mathbf{X}})_{\text{swap}(\A)}$ to be the matrix for which the swapping operation defined above is applied to each row. 
We need the following lemma.
 
 \begin{lemma}\label{lemmaswap}
 If $\A\subseteq\N$, then 
 $$(\mathbf{X}, {\tilde{\mathbf{X}}})|y   \,{\buildrel d \over =}\, (\mathbf{X}, {\tilde{\mathbf{X}}})_{\text{swap}(\A)}|y$$
 \end{lemma}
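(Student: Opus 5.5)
Since the rows $(X_{i\cdot},\tilde X_{i\cdot},Y_i)$, $1\leq i\leq n$, are i.i.d. and the swap acts row by row, I will reduce the lemma to a single row. If for one row we show
$$\big((X,\tilde X)_{\text{swap}(\A)},Y\big)\,{\buildrel d \over =}\,\big((X,\tilde X),Y\big),$$
then independence across rows gives the same equality for the full matrices together with $y$. Conditioning on $y$ then preserves the equality in distribution, because the conditional distributions are determined by the joint law. So the whole argument is about one row.

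For one row, write the joint density or law as $p(x,\tilde x,y)=p(x,\tilde x)\,p(y\mid x,\tilde x)$. Property \ref{it:XindY} of Definition \ref{defknockoffs} gives $p(y\mid x,\tilde x)=p(y\mid x)$. Let $(x',\tilde x')=(x,\tilde x)_{\text{swap}(\A)}$. The law of the swapped vector together with $Y$ is then $p(x',\tilde x')\,p(y\mid x')$, evaluated at the swapped argument (the swap is an involution).

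\begin{itemize}
\item By property \ref{it:swapinv}, $p(x',\tilde x')=p(x,\tilde x)$.
\item It remains to show $p(y\mid x')=p(y\mid x)$. Here $x'$ agrees with $x$ outside $\A$, and on $\A$ it has the coordinates $\tilde x_j$ instead of $x_j$.
\end{itemize}

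The key step, and the main obstacle, is this second point: showing that $p(y\mid x)$ does not depend on the coordinates $x_j$ with $j\in\A$. The hypotheses $\Hy_j$ ($j\in\N$) each say $Y\indep X_j\mid X_{-j}$ individually, whereas we need the joint statement $Y\indep X_{\A}\mid X_{-\A}$.

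\begin{itemize}
\item \textbf{Route via the footnote.} The footnote says that, excluding pathological cases, the individual and joint notions are equivalent. I would invoke this and assume that $Y\mid X$ depends on $X$ only through $X_{\N^c}$, hence only through $X_{\A^c}$ since $\A\subseteq\N$.
\item \textbf{Route by induction.} Alternatively, swap one coordinate at a time. For a single $j\in\N$, the conditional law $p(y\mid x)$ is unchanged by replacing $x_j$ with $\tilde x_j$, provided $\Hy_j$ gives $p(y\mid x)=p(y\mid x_{-j})$ for all values of $x_j$ with positive density. This needs a positivity or regularity condition, which I would state.
\item \textbf{Completing the induction.} Iterating over the elements of $\A$ needs care, because after earlier swaps the other coordinates take knockoff values. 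The conditional-independence identity $p(y\mid x)=p(y\mid x_{-j})$ holds pointwise on the support of $X$. By property \ref{it:swapinv} the swapped vectors have the same support, so the identity still applies.
\end{itemize}

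With $p(y\mid x')=p(y\mid x)$ established, the joint laws coincide: $p(x',\tilde x')\,p(y\mid x')=p(x,\tilde x)\,p(y\mid x)$. This gives the one-row statement, and the reduction in the first paragraph finishes the lemma.
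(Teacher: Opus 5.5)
Your argument is the paper's. You split the joint law of $((X,\tilde X),Y)$ into two parts: the law of $(X,\tilde X)$, which is swap-invariant by Property \ref{it:swapinv}, and the conditional law of $Y$, which by Property \ref{it:XindY} depends only on $X$ and is unaffected by swapping a null coordinate. You then handle one swap, iterate, and condition on $y$. Reducing to a single row and working with densities are cosmetic differences. The row reduction does use that knockoff rows are generated independently across rows, but the paper tacitly uses this too when it applies Property \ref{it:swapinv} to the matrices.

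What needs correcting is your ``main obstacle''. The lemma needs neither the joint statement $Y\indep X_{\A}\mid X_{-\A}$ nor a positivity condition, so both of your routes prove it under hypotheses it does not need.

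For one $j\in\N$, $\Hy_j$ and Property \ref{it:XindY} give a measurable $h$ with $\pr(Y\in C\mid X,\tilde X)=h(X_{-j},C)$ a.s., for each measurable $C$. Write $Z=(X,\tilde X)$ and $Z'=Z_{\text{swap}(\{j\})}$. Since $Z'$ has the same $X_{-j}$-block as $Z$, setting $F(z)=\mathbbm{1}\{z\in B\}\,h(x_{-j},C)$ and applying Property \ref{it:swapinv} gives
\[
\pr(Z'\in B,\,Y\in C)=\mathbb{E}F(Z')=\mathbb{E}F(Z)=\pr(Z\in B,\,Y\in C).
\]
No conditional law is ever evaluated at a swapped point.

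In your density language: $p(y\mid x)=p(y\mid x_{-j})$ holds off a $P_X$-null set. That set is also null for the swapped covariate vector, because that vector has the same law as $X$. This is the correct form of your ``same support'' remark, and it is needed already for a single swap, not only for the iteration.

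The passage to general $\A\subseteq\N$ then needs no pointwise bookkeeping. Suppose $(Z_{\text{swap}(\A')},Y)\,{\buildrel d \over =}\,(Z,Y)$ and $k\in\N\setminus\A'$. Applying $(z,y)\mapsto(z_{\text{swap}(\{k\})},y)$ to both sides gives $(Z_{\text{swap}(\A'\cup\{k\})},Y)\,{\buildrel d \over =}\,(Z_{\text{swap}(\{k\})},Y)\,{\buildrel d \over =}\,(Z,Y)$. This is exactly the paper's ``repeated swapping''.

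In particular, the lemma holds even in the footnote's pathological cases. For instance, if $X_1=X_2$ a.s., Property \ref{it:swapinv} forces $\tilde X_1=\tilde X_2=X_1$ a.s., so the swap is trivial. Drop the footnote route and the positivity caveat, and your proof is complete.
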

 \begin{proof}
 
 We will prove the result for $|\A|=1$, so $\A=\{i\}$ for some $1\leq i\leq m$; then the general result follows from repeated swapping.
 Swapping does not change the distribution of $(\mathbf{X}, {\tilde{\mathbf{X}}})$, by Property \eqref{it:swapinv}. Further, the conditional distribution of $Y$ given $(\mathbf{X}, {\tilde{\mathbf{X}}})=(x,\tilde{x})$ is
 the same as the conditional distribution of $Y$ given $(\mathbf{X}, {\tilde{\mathbf{X}}})_{\text{swap}(\A)}=(x,\tilde{x})$, since $Y$ is conditionally independent on the covariate $X_i$ corresponding to $\A$. Thus, the joint distribution of $(\mathbf{X}, {\tilde{\mathbf{X}}}, Y)$ is unchanged by the swapping. Hence, given any $y$, the distribution of $(\mathbf{X}, {\tilde{\mathbf{X}}})|y$ is unchanged by the swapping.
 \end{proof}
 
 The knockoffs method takes certain statistics $W_j$, $1\leq j \leq m$, as input.
We now explain the construction of these $W_j$, which are a function ($w_j(\cdot)$, say) of the data and the knockoffs:
 $$W_j=w_j\big\{(\mathbf{X}, \tilde{\mathbf{X}}),y\big\}.$$
 In order for the knockoffs method to work, we need to define the test statistics $W_j$ in such a way that they have the \emph{sign flip property}, which means that swapping the $j$-th
 variable with its knockoff changes the sign of $W_j$:
 \begin{equation}\label{eq:signflip}
 w_j\big\{(\mathbf{X}, \tilde{\mathbf{X}})_{\text{swap}(\A)},y\big\} = 
 \begin{cases}
w_j\big\{(\mathbf{X}, \tilde{\mathbf{X}}),y\big\}, & j\not\in \A, \\
-w_j\big\{(\mathbf{X}, \tilde{\mathbf{X}}),y\big\},  & j\in \A,
\end{cases}
 \end{equation}
 for every $\A\subseteq \{1,...,m\}$.
 
Apart from requiring the sign flip property, how should we define these statistics $W_j$ to have good power? The key idea is that we define them in such a way that if $\Hy_j$ is false, then $w_j\big\{(\mathbf{X}, \tilde{\mathbf{X}}),y\big\}$ tends to be (much) larger than 0. The reason why we want that, is that the procedure will reject all hypotheses with test statistics exceeding some value. For example \citp[][p.561]{candes2018panning}, consider a linear regression model with $2m$ coefficients, corresponding to our $m$ covariates and their knockoffs --- where $Y_i$ depends on the $i$-th row of $(\mathbf{X}, \tilde{\mathbf{X}})$. (Note that do not require this model to be valid. We will only need the assumptions stated above to achieve FDR control.) For this model, consider the lasso,

$$ \min_{b\in \reals^{2m}} \frac{1}{2} \lVert y - (\mathbf{X}, \tilde{\mathbf{X}}) b\rVert^2_2 +\lambda  \lVert b \rVert_1 $$
and let $\hat{b}$ be the solution. Here $\lambda>0$ can be prespecified or based on the data, as long as we define it in such a way that permuting the columns of $(\mathbf{X}, \tilde{\mathbf{X}})$ does not change its value.
Then we can define the test statistics $W_j$ based on the lasso-estimated coefficients in the following way:
$$W_j  = |\hat{b}_j| - |\hat{b}_{j+m}|.$$
With this definition in place, the sign flip property \eqref{eq:signflip} is satisfied. If instead of the lasso we use ridge or elastic net to get $\hat{b}$, then likewise property \eqref{eq:signflip} is satisfied. These are common ways to define the test statistics, but other definitions satisfying property \eqref{eq:signflip} are also possible. For example, the test statistics may be based on a generalized linear model instead of a linear model. They may also be based on variable importance measures from some machine learning model.

The following lemma \citp[][p.561]{candes2018panning}  states the key property that we will use to prove FDR control. Note that a \emph{Rademacher variable} takes the values $-1$ and $1$, each with probability $0.5$.
\begin{lemma}\label{lemsignsradem}
Conditionally on $(|W_1|,...,|W_m|)$, the signs of the $W_j$ with $j\in \N$ are i.i.d. coin flips, i.e., Rademacher variables.
\end{lemma}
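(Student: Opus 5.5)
The plan is the standard swap argument: combine Lemma \ref{lemmaswap} with the sign flip property \eqref{eq:signflip}. Write $W=(W_1,\dots,W_m)$, and let $\epsilon=(\epsilon_1,\dots,\epsilon_m)\in\{-1,1\}^m$ be any sign vector with $\epsilon_j=1$ for all $j\notin\N$. The main step is to show that
$$(W_1,\dots,W_m)\,{\buildrel d \over =}\,(\epsilon_1W_1,\dots,\epsilon_mW_m),$$
conditionally on $y$ and therefore also unconditionally.

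To see this, take $\A=\{j\in\N:\epsilon_j=-1\}\subseteq\N$. By \eqref{eq:signflip}, $w_j\{(\mathbf{X},\tilde{\mathbf{X}})_{\text{swap}(\A)},y\}=\epsilon_j W_j$ for every $j$. By Lemma \ref{lemmaswap}, $(\mathbf{X},\tilde{\mathbf{X}})_{\text{swap}(\A)}$ has the same conditional distribution given $y$ as $(\mathbf{X},\tilde{\mathbf{X}})$. Applying the fixed measurable map $w=(w_1,\dots,w_m)$ to both sides, with the same $y$, gives
$$(\epsilon_jW_j)_j\,|\,y\,{\buildrel d \over =}\,(W_j)_j\,|\,y.$$
This is the only substantive step. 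Everything after it is a probability manipulation.

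Next, introduce a vector $\eta=(\eta_j)_{j\in\N}$ of i.i.d.\ Rademacher variables, independent of $W$, and set $\eta_j=1$ for $j\notin\N$. Since the identity in distribution holds for every fixed $\epsilon$ of the above type, mixing over $\epsilon$ gives $W\,{\buildrel d \over =}\,\eta\odot W$, where $\odot$ is the entrywise product. Note that $|\eta_jW_j|=|W_j|$. Conditionally on $(|W_1|,\dots,|W_m|)$, the signs of $(\eta_jW_j)_{j\in\N}$ are therefore i.i.d.\ Rademacher: they are $\eta_j\,\mathrm{sign}(W_j)$, and multiplying a fixed sign by an independent Rademacher variable yields a Rademacher variable. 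The signs with $j\notin\N$ do not matter here.

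Because $W$ and $\eta\odot W$ have the same joint law, the conditional law of the null signs given the absolute values is the same for both. This gives the statement.

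The main obstacle is a technicality about zeros. If $W_j=0$, its sign is not a coin flip. I would handle this with the convention that $\mathrm{sign}(0)$ is replaced by an independent fair coin, or equivalently by stating the lemma on the event $\{W_j\neq0\}$, which holds almost surely for continuous designs. I would state this caveat explicitly. Everything else is routine; in particular, passing from equality conditional on $y$ to unconditional equality only requires integrating over the distribution of $y$.
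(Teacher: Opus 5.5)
Your proof is correct and takes the same route as the paper. Both arguments reduce the lemma to showing $W \,{\buildrel d \over =}\, \varepsilon\odot W$, with $\varepsilon_j$ Rademacher on the null coordinates. Both obtain this from the sign flip property \eqref{eq:signflip} and Lemma~\ref{lemmaswap}, first conditionally on $y$ and then marginally, and then read off the conditional law of the null signs given the absolute values.

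Your version is slightly more careful than the paper's in two places. You fix $\epsilon$ before mixing, whereas the paper plugs in the random set $\A=\{j:\varepsilon_j=-1\}$ directly and uses the independence of $\varepsilon$ from the data only implicitly. You also flag the $W_j=0$ issue, which the paper addresses only later, by discarding zero statistics in the proof of Theorem~\ref{thmKOmethods}.
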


 \begin{proof}
 Write $W=(W_1,...,W_m)$.
Let $\varepsilon = (\varepsilon_1,\ldots,\varepsilon_m)$ be  independent random variables such that
$\varepsilon_j $ is Rademacher if $j \in \N$, and $\varepsilon_j = 1$ otherwise.
To prove the claim, it suffices to show that
\begin{equation} \label{eqepsW}
W \,{\buildrel d \over =}\, W',
\end{equation}
where $W':= \varepsilon \odot W$, with 
 $\odot$ denoting pointwise multiplication.  Indeed, $W'$ has the property stated in the lemma:
$(W_j':j\in \N)$ has i.i.d. Rademacher-distributed signs independent from $(W_1,...,W_m)$,  so independent from $(|W_1'|,...,|W_m'|)$.

To prove property \eqref{eqepsW}, let $\A = \{j: \varepsilon_1=-1\}$.
We have
$$W' = (\varepsilon_1w_1\big\{(\mathbf{X}, \tilde{\mathbf{X}}),y\big\},\ldots,\varepsilon_m w_m\big\{(\mathbf{X}, \tilde{\mathbf{X}}),y\big\} ). $$
By property  \eqref{eq:signflip}, this equals
$$ (w_1\big\{(\mathbf{X}, \tilde{\mathbf{X}})_{\text{swap}(\A)},y\big\},\ldots, w_m\big\{(\mathbf{X}, \tilde{\mathbf{X}})_{\text{swap}(\A)},y\big\} )  $$
By Lemma \ref{lemmaswap}, conditional on $y$ this random vector has the same distribution as 
$$ (w_1\big\{(\mathbf{X}, \tilde{\mathbf{X}}),y\big\},\ldots, w_m\big\{(\mathbf{X}, \tilde{\mathbf{X}}),y\big\} )  $$ and hence also marginally over $y$. But the above is $W$, which finishes the proof of property \eqref{eqepsW}.
\end{proof}
 
 \subsection{FDR control}
 Consider some threshold $t>0$. By Lemma \ref{lemsignsradem},
 $$|\{1\leq j \leq m: W_j\leq -t\}|\geq |\{j\in \N: W_j\leq-t\}| \,{\buildrel d \over =}\,  |\{j\in \N: W_j\geq t\}|.$$
 Note that the above equality in distribution holds because conditionally on the indices of ``extreme'' null statistics $\{j\in \N: |W_j|\geq t\}$, the signs of these extreme null statistics are i.i.d. Rademacher variables.
 Thus, if we reject all hypotheses $\Hy_j$ with $W_j\geq t$, then we see that $|\{1\leq j \leq m: W_j\leq -t\}|$ is an upward biased estimate of $|\{j\in \N: W_j\geq t\}|$, which is the number false discoveries.
 Thus, we can view
 $$\widehat{FDP}(t):= \frac{|\{1\leq j \leq m: W_j\leq-t\}|}{|\{1\leq j \leq m: W_j\geq t\}|}$$ as an estimate of  the FDP when we use threshold $t$.
 The idea of the knockoffs procedure is to choose the rejection threshold $t$ in a data-dependent way. Namely, we take $t$ to be the smallest threshold for which the estimated $FDP$ is still below $\alpha$.
 
 The following theorem says that if we use the estimate $\widehat{FDP}(t)$ defined above then we do not control the FDR but something which is similar (if $m$ is large). To ensure real FDR control, we have to  add $+1$ in the numerator of $\widehat{FDP}(t)$, which makes the estimate more conservative \citp[][p.562]{candes2018panning}. The proof only makes use of the coin-flip property from  Lemma \ref{lemsignsradem}  and is given in \citt{barber2015controlling}. Here we give a shortened version of the proof, sometimes referring to \citt{barber2015controlling} for details.
 
 \begin{theorem} \label{thmKOmethods}
Define the rejection  threshold $\tau > 0$ by 
$$
\tau
= \min\left\{
t\in \mathcal{W}^+ :
\frac{|\{j : W_j \le -t\}}{|\{j : W_j \ge t\}|} \le \alpha
\right\}
\qquad
\text{(knockoffs method),}
$$
where $\mathcal{W}^+:=\{|W_j|: |W_j|>0\}$ (or $\tau = \infty$ if the set above is empty). Here, any number divided  by $0$ is interpreted as $\infty$.
Then the procedure selecting the variables
$$
\R = \{ j : W_j \ge \tau \}
$$
controls  a ``modified FDR'', defined as
\[
\mathrm{mFDR}
= \mathbb{E}\!\left[
\frac{|\R \cap \N|}{|\R| + 1/\alpha}
\right]
\le \alpha.
\]

\medskip

Consider the slightly more conservative threshold
$$
\tau_+ = 
\min\left\{ t \in \mathcal{W}^+ :
\frac{1 + |\{j : W_j \le -t\}}{|\{j : W_j \ge t\}|} \le \alpha
\right\}.
\qquad
\text{(knockoffs$+$ method),}
$$
Setting $\R = \{ j : W_j \ge \tau_+ \}$, controls the usual FDR,
\[
\mathbb{E}\!\left[
\frac{|\R \cap\N|}{|\R| \vee 1}
\right]
\le \alpha.
\]

\medskip

These results are non-asymptotic and hold no matter the dependence between the response
and the covariates—in fact, they hold \emph{conditionally} on the response $y$.

\end{theorem}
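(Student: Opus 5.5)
Throughout, I condition on $(|W_1|,\dots,|W_m|)$ (and on $y$). By Lemma \ref{lemsignsradem}, the signs of the null statistics are then i.i.d. Rademacher. I treat knockoffs$+$ in detail; the mFDR claim follows from the same argument with the $+1$ moved from the numerator to the denominator.

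The first step is algebraic. Write $V^+(t)=|\{j\in\N: W_j\ge t\}|$ and $V^-(t)=|\{j\in\N: W_j\le -t\}|$. At $t=\tau_+$ the defining inequality gives $|\R|\ge (1+|\{j:W_j\le-\tau_+\}|)/\alpha \ge (1+V^-(\tau_+))/\alpha$. Hence
$$\frac{|\R\cap\N|}{|\R|\vee1}=\frac{V^+(\tau_+)}{1+V^-(\tau_+)}\cdot\frac{1+V^-(\tau_+)}{|\R|\vee1}\le \alpha\,\frac{V^+(\tau_+)}{1+V^-(\tau_+)} .$$
The case $\tau_+=\infty$ is trivial, since then nothing is rejected. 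It therefore suffices to show
$$\mathbb{E}\Big[\frac{V^+(\tau_+)}{1+V^-(\tau_+)}\Big]\le 1 .$$
For knockoffs without the $+1$, I would use instead $|\R|+1/\alpha \ge (|\{j:W_j\le-\tau\}|+1)/\alpha$, which reduces the mFDR claim to the same inequality.

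The second step builds a supermartingale, mirroring the backward-martingale proof for BH in \S\ref{secFDR}. Order the null indices by decreasing $|W_j|$ and reveal signs from the smallest $|W_j|$ upward, so that the threshold $t$ increases through the values in $\mathcal{W}^+$. Let $\mathcal{F}_t$ be the $\sigma$-algebra generated by $V^+(s)+V^-(s)$ for all $s$, together with $V^\pm(s)$ for $s\ge t$, and by the non-null $W_j$. Then $\tau_+$ is a stopping time with respect to this filtration run backward, that is, as $t$ decreases from $\infty$. The claim is that
$$M(t)=\frac{V^+(t)}{1+V^-(t)}$$
is a supermartingale in this reverse time. The key computation is this. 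Conditional on the past, the next null sign to be resolved is, by exchangeability of Rademacher signs among the remaining nulls, positive with probability $V^+/(V^++V^-)$. A direct calculation (Lemma 4 in \citt{barber2015controlling}) then gives
$$\mathbb{E}[M(t')\mid\mathcal{F}_t]\le M(t) \quad\text{when passing one null index,}$$
and $M$ is unchanged when a non-null index is passed.

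The third step applies optional stopping, and this is where I expect the main obstacle. We get $\mathbb{E}[M(\tau_+)]\le \mathbb{E}[M(t_0)]$, where $t_0$ is the starting time with all nulls included. There $V^+\sim\mathrm{Bin}(N,1/2)$ and $V^-=N-V^+$ for $N=|\N|$ (counting only nonzero $W_j$). It remains to bound
$$\mathbb{E}\Big[\frac{X}{1+N-X}\Big]\le 1, \qquad X\sim \mathrm{Bin}(N,1/2).$$
This follows from the identity $\mathbb{E}[X/(1+N-X)]=1-2^{-N}$, which one verifies by the index shift $\binom{N}{k}\frac{k}{N-k+1}=\binom{N}{k-1}$. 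Both the stopping-time setup (handling ties and zeros in $W$ carefully) and the supermartingale inequality need care, and I would follow \citt{barber2015controlling} closely for these details. Because everything is conditional on $y$ and on $|W|$, the conclusion holds conditionally on $y$, as claimed.
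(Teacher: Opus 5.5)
Your reduction is correct. For both $\tau_+$ and $\tau$, the defining inequality bounds the integrand at the threshold by $\alpha V^+/(1+V^-)$. Your process $M(t)=V^+(t)/(1+V^-(t))$ is exactly the paper's $M(k)$ for the binary p-values with $c=1/2$, and $\mathbb{E}[X/(1+N-X)]=1-2^{-N}$ is right. So this is the paper's argument carried out directly in terms of $W$, bypassing Selective SeqStep+, and it also covers the mFDR claim that the paper defers to Barber--Cand\`es.

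The problem is the filtration you actually write down: it runs in the wrong direction. Knowing $V^\pm(s)$ for all $s\ge t$ means knowing the individual signs of the nulls with $|W_j|\ge t$, i.e.\ those inside the rejection region. This information grows as $t$ decreases from $\infty$. With that filtration neither key claim holds:
\begin{itemize}
\item $M$ starts at $0$, and each newly revealed null sign is an independent fair coin. Hence $\mathbb{E}[M(t')\mid\mathcal{F}_t]=\tfrac12\frac{V^++1}{1+V^-}+\tfrac12\frac{V^+}{2+V^-}$, which exceeds $M(t)$ whenever $V^+<V^-+2$ (already $1/2>0$ at the first step). So $M$ is not a supermartingale.
\item $\tau_+$ is the \emph{smallest} $t$ satisfying the condition, so it is the last crossing in that direction. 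The event $\{\tau_+\ge t\}$ depends on the signs of nulls with $|W_j|<t$ and is not in $\mathcal{F}_t$. So $\tau_+$ is not a stopping time.
\end{itemize}
Your transition probability $V^+/(V^++V^-)$ and your starting time ``with all nulls included'' are in fact incompatible with the filtration as you defined it.

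Your verbal description (reveal signs from the smallest $|W_j|$ upward) is the right one; make the definition match it. Let $\mathcal{F}_t$ be generated by $V^\pm(s)$ for $s\le t$, together with $|W|$, $y$ and the non-null $W_j$. That is, it contains the aggregate counts $V^\pm(t)$ of the nulls still above the threshold, plus the individual signs of the nulls with $|W_j|<t$. Let time run with $t$ increasing from $\min\mathcal{W}^+$, which is the paper's $k$ decreasing from $m$. Then everything goes through:
\begin{itemize}
\item $\{\tau_+\le t\}\in\mathcal{F}_t$, because the condition at each $s\le t$ is a function of $V^\pm(s)$ and the non-null signs.
\item Given $\mathcal{F}_t$, the signs of the included nulls form a uniformly random arrangement of $V^+$ pluses and $V^-$ minuses. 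So the next null to drop out is positive with probability $V^+/(V^++V^-)$. Then $M$ is a martingale while $V^-\ge 1$, and it drops by $1$ when $V^-=0$, so it is a supermartingale.
\item Optional stopping, with $\tau_+=\infty$ replaced by the last time point (where nothing is rejected, so the bound holds trivially), gives $\mathbb{E}[M(\tau_+)]\le\mathbb{E}[M(\min\mathcal{W}^+)]=1-2^{-N}\le 1$.
\end{itemize}
The same filtration works for $\tau$ in the mFDR part.
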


\begin{proof}
\phantom{.}
\\
\noindent \emph{Overview of the proof.}
We will only give the proof for the second  method, the \emph{knockoffs+} method. For a proof of  the result on the  \emph{knockoffs} method, see \citt{barber2015controlling}.

To prove the validity of the knockoffs+ method, \citt{barber2015controlling} first consider a multiple testing procedure based on  p-values, and prove that this procedure controls the  FDR, 
$\mathbb{E}\big(V/(R\vee1) \big)   \leq \alpha.$
Then, they show that the knockoffs+ method can be formulated as a special case of that multiple testing method, thus proving that the knockoffs+ method controls the FDR. Here we will first prove validity of the general multiple testing method. This is not completely necessary, but the advantage is that we closely stick to the reasoning of \citt{barber2015controlling}.
\\
\\
\emph{Step 1: Definition of the general multiple testing method.}
The general multiple testing method based on  p-values is called \emph{Selective Seqstep+} \citp[][p.2081]{barber2015controlling} and is defined as  follows. Consider $m$ hypotheses with corresponding p-values $p_1,..,p_m$ and assume the null p-values are valid (Assumption \ref{assvalidp}). We do \emph{not} sort these p-values based on their values.
Fix $c\in(0,1)$\footnote{In fact it is sufficient to consider the case $c=0.5$, for proving FDR control of knockoffs+. Then $(1-c)/c$ is simply $1$.} and define
$$\hat{k}_1 = \max\big\{ 1\leq k \leq m: \frac{1+|\{j\leq k: p_j>c\}|}{|\{j\leq k: p_j\leq c\}|\vee 1} \leq \frac{1-c}{c}\alpha    \big\},$$
with the convention that $\hat{k}_1 =0$ if the set is empty. The Selective Seqstep+ procedure rejects all hypotheses with indices $j$ satisfying $j \leq \hat{k}_1 $ \emph{and} $p_j\leq c$.
\\
\\
\emph{Step 2: Proving FDR control for the  general multiple testing method.}
We now prove that the Selective Seqstep+ method controls the FDR. This is stated in \citt[][Theorem 3]{barber2015controlling} and proved in their supplement.
Note that $R = |\{j\leq\hat{k}_1: P_j\leq c  \}|$ and hence $V = |\{\text{null } j\leq\hat{k}_1: P_j\leq c  \}|$. We have 
$$\mathbb{E}(\frac{V}{R\vee 1}) = \mathbb{E}(\frac{V}{R\vee 1}\cdot \mathbbm{1}_{\{\hat{k}_1>0\}})=$$
$$ \mathbb{E}\Bigg[  \frac{|\{\text{null } j\leq\hat{k}_1: P_j\leq c  \}| }{1+ |\{j\leq\hat{k}_1: P_j> c  \}|}                  \cdot     \Bigg(\frac{1+ |\{j\leq\hat{k}_1: P_j> c  \}|}{|\{j\leq\hat{k}_1: P_j\leq c  \}|\vee 1}  \mathbbm{1}_{\{\hat{k}_1>0\}}\Bigg) \Bigg] \leq $$
\begin{equation}\label{eqsupermar}
\mathbb{E}\Bigg[  \frac{|\{\text{null } j\leq\hat{k}_1: P_j\leq c  \}| }{1+ |\{j\leq\hat{k}_1: P_j> c  \}|}                    \Bigg]  \cdot  \frac{1-c}{c}\cdot  \alpha 
\end{equation}
by definition of  $\hat{k}_1$.

That the above is at most $\alpha$, is a consequence of Lemma 1 of the supplement of \citt{barber2015controlling}. The idea is as follows. Consider the following stochastic process indexed by $k$:
$$  M(k) =   \frac{|\{\text{null } j\leq k: P_j\leq c  \}| }{1+ |\{\text{null }j\leq k: P_j> c  \}|}, $$
with $k=m, m-1,...$ running backwards. It turns out that this is a backward supermartingale with respect to a certain  filtration, with respect to which $\hat{k}_1$ is a stopping time. 
Note that at the first timepoint $k=m$, we have 
$$
\mathbb{E}M(m) =  \frac{|\{\text{null } j: P_j\leq c  \}| }{1+ |\{\text{null }j: P_j> c  \}|}.
$$
As shown at the end of the proof of Lemma 1 in the supplement of \citt{barber2015controlling}, this expected value is bounded from above by $c/(1-c)$. $\hat{k}_1$ is a stopping time with respect to the filtration, so that the optional stopping theorem for super-martingales  implies that $\mathbb{E}M(\hat{k}_1)\leq c/(1-c)$. Note that clearly, the expectation in \eqref{eqsupermar} is at most  $\mathbb{E}M(\hat{k}_1)$, so  the quantity \eqref{eqsupermar} is at most $(c/(1-c))\cdot ((1-c)/(c))\alpha =\alpha$.
\\
\\
\emph{Step 3: Noting that knockoffs+ is a special case of the  multiple testing method.}
Index the hypotheses  in such  a way that $|W_1|\geq ...\geq |W_m|>0$ (ignore statistics that are 0 and let $m$ be the number of nonzero statistics). Note that even with this new indexation, the statistics corresponding to the true null hypotheses have i.i.d. Rademacher signs, by Lemma \ref{lemsignsradem}.

To be able to use the Selective Seqstep+ method, we must define p-values $P_1,...,P_m$, which we do in the following way:
 \begin{equation}\label{defbinpv}
P_j = 
 \begin{cases}
1 & W_j<0, \\
0.5,  &W_j>0.
\end{cases}
 \end{equation}
 Note that the $P_j$ with $j\in \N$ are then valid and independent p-values, since the signs of the $W_j$ with $j\in \N$ are i.i.d. Rademacher variables.

The knockoffs+ method computes
$$
\tau_+
= \min\left\{
t \in \mathcal{W}^+ :
\frac{1 + |\{j : W_j \le -t\}|}{|\{j : W_j \ge t\}|} \le \alpha
\right\}
$$
and rejects the hypotheses $\R = \{ j : W_j \ge \tau_+ \}$.

This leads to the same rejections as defining 
$$
\hat{k}_1
= \max\left\{
1\leq k \leq m :
\frac{1 + |\{j : P_j=1 \}|}{|\{j : P_j=0.5\}|} \le \alpha
\right\}
$$
and rejecting the hypotheses $\R = \{ j\leq \hat{k}_1  : P_j=0.5\}$.
But that is exactly the set that Selective Seqstep+ rejects, if we take $c=0.5$.
Thus, the knockoffs+ method has FDR at most $\alpha$.

\end{proof}

 \subsection{Constructing knockoffs} \label{secconknock}
 How can we construct knockoff variables, i.e. variables $\tilde{X}$ such that the properties in Definiton \ref{defknockoffs} hold?
 
 \subsubsection{The case that $X$ is multivariate normal} \label{secXmvn}
First consider the case  that $X$ has a known multivariate normal distribution $\N(0,\Sigm)$. Then an example of a  joint distribution for $(X, \tilde{X})$ that satisfies property \eqref{it:swapinv} is 
$\N(0,\mathbf{G})$,
where 
\begin{equation} \label{refG}
\mathbf{G}  = 
\begin{pmatrix}
\Sigm & \Sigm -\text{diag}(s) \\
\Sigm -\text{diag}(s) & \Sigm,
\end{pmatrix}
\end{equation}
  where $\text{diag}(s)$ is the diagonal matrix with diagonal $s=(s_1,...,s_m)'$ and $s\succeq 0$ (meaning all elements of $s$ are nonnegative) is chosen in such a way that the matrix is positive semidefinite. Note that by defining $\mathbf{G}$ in this way, we ensure that for every $1\leq j \leq m$, $X_j$ and $\tilde{X}_j$ have the same covariance with all the other variables\footnote{For example, for $i\neq j$, $\Cov(X_j,X_i)=\Sigma_{ji}=(\Sigma-\text{diag}(s))_{ji} = \Cov(\tilde{X}_j,X_i)$.}, while  their covariance with each other is smaller than their variance. That is essential, because if we would have $\Cov(X_j,\tilde{X}_j) = \Var(X_j)=\Var(\tilde{X}_j)$, then $X_j$ and $\tilde{X}_j$ would be perfectly correlated and  $X_j$ and $\tilde{X}_j$ would be (a.s.) identical. That would mean that the procedure usually has no power for rejecting $\Hy_j$.
  
  To sample $\tilde{X}$ such that $(X, \tilde{X})\sim \N(0,\mathbf{G})$, we can use classical formulas for the conditional distribution of a subvector of a multivariate normal distribution\footnote{
  The well known result that we use is the following. If a random vector $Z$ has a $N(\mu_z,\mathbf{G}_z)$ distribution with 
  $$
  Z=\begin{bmatrix}
{Z_1}\\
{Z_2} 
\end{bmatrix},\quad \mu_z=\begin{bmatrix}
{\mu_1}\\
{\mu_2} 
\end{bmatrix},\quad \mathbf{G}_z=
\begin{bmatrix}
\mathbf{G}_{11}& \mathbf{G}_{12} \\
\mathbf{G}_{21} &\mathbf{G}_{22}
\end{bmatrix}
,$$ 
then the conditional distribution of $Z_2$ given $Z_1=z_1$ is
$$ N\big(\mu_2+\mathbf{G}_{21}\mathbf{G}_{11}^{-1}(Z_1-\mu_1), \mathbf{G}_{22}- \mathbf{G}_{21}\mathbf{G}_{11}^{-1}\mathbf{G}_{12}\big).$$
}. 
These tell us that
  $$\tilde{X}|X \,{\buildrel d \over =}\, \N(\mu,\mathbf{V}),$$
  where 
  \begin{align*}
   \mu =& \Sigm^{-1}(\Sigm-\text{diag}(s)) X \\
         =& X-\Sigm^{-1}\text{diag}(s)X,\\
         \mathbf{V} =&  \Sigm -  (\Sigm-\text{diag}(s))\Sigm^{-1} (\Sigm-\text{diag}(s)) \\
         =&  \Sigm -  \Sigm +           2\text{diag}(s)   -\text{diag}(s)\Sigm^{-1}\text{diag}(s) \\
         =& 2\text{diag}(s) - \text{diag}(s)\Sigm^{-1}\text{diag}(s).
  \end{align*}
  
  It turns out that $\mathbf{G}$ is positive semidefinite if and only if $s\succeq 0$ and $2\Sigm\preceq \text{diag}(s)$ \citp[][pp. 2057, 2062]{barber2015controlling}. ($\mathbf{A} \succeq \mathbf{B}$ is defined to mean that $\mathbf{A} - \mathbf{B}$ is positive semidefinite.) 
  How can we choose $s$ such that this is satisfied and the method has good power? In order to have good power, we want $\Cov(X_j,\tilde{X}_j)$ to be small, which means we want $s_j$ to be large. Thus we wish to find large $s_j$'s in such a way that the constraints $s\succeq 0$ and $2\Sigm\preceq \text{diag}(s)$ are satisfied. Especially when $m$ is large, this quicky becomes computationally demanding: in practice we will need to choose a tradeoff between statistical power and computational efficiency. Details are in \citt[][pp.564-565]{candes2018panning}.
  
\subsubsection{The general case}
Above we assumed that $X$ has a known multivariate normal distribution $\N(0,\Sigm)$. Now suppose we do not have such knowledge. One possible situation is that  we  know that $X$ is multivariate normal with mean 0, but we do not know $\Sigm$. Then we can estimate $\Sigm$ and proceed as above, which leads to approximate FDR control.
Another possible situation is that we do not even know whether $X$ is multivariate normal. Then a proposal by \citt{candes2018panning} is that we do not require $(X,\tilde{X})_{\text{swap}(\A)}$ and $(X,\tilde{X})$ to have exactly the same distribution, but require them to have the same first two moments. (If we have to estimate the covariance matrix $\Sigm$, then this moment matching can of course only be done approximately.)

Requiring that $(X,\tilde{X})_{\text{swap}(\A)}$ and $(X,\tilde{X})$ have the same covariance matrix, means requiring that $$\Cov(X,\tilde{X})=\mathbf{G},$$
with $\mathbf{G}$ as in \eqref{refG}. Hence, \citt[][pp.564-565]{candes2018panning} propose to use the same approaches for finding $s$ as in the case where $X$ is multivariate normal (see \S\ref{secXmvn}). 

Note that when $\Sigm$ is unknown and $m$ is large, estimating $\Sigm$ is no small task. This means that the knockoffs(+) method can be quite inaccurate, in the sense that the FDR may substantially exceed $\alpha$. However, it is worth adding that when $m$ is large and $\Sigm$ is unknown, we might be able to use some pre-existing knowledge on $\Sigm$, for example that entries far from the diagonal are 0. And again, sometimes we do know $\Sigm$ quite accurately a priori. For example, \citt{candes2018panning} study a genetic dataset with $m=377749$ predictor variables (single nucleotide polymorphisms, SNPs). There they use existing knowledge on the correlation structure of such variables. For example, SNPs on different chromosomes are roughly independent of each other (depending on the type of population the sample is from). Further, scientists have some knowledge of the correlations between the SNPs within a chromosome.
Alternative methods for sampling approximate knockoffs have also been proposed \citp{romano2020deep,blain2024knockoffs}. 

To conclude, knockoff methods do not always work well (i.e., they do not always lead to accurate FDR control), but that is mainly because  conditional independence testing with many variables is often nearly impossible. Knockoff methods can potentially be a good approach in this context, although research is ongoing. The discussed methods are implemented in the R package \emph{knockoff}, available on CRAN \citp{knockoff}.

\subsection{Exercises}
\begin{enumerate}
\item
Suppose $m=6$, $\alpha=0.2$ and $(W_1,...,W_6)=(0.7, 1.1 ,-0.8 ,0.5 ,1.2,0.6)$. 

(a) Which hypotheses are rejected by the knockoffs method as defined in Theorem \ref{thmKOmethods}? Which hypotheses are rejected by the knockoffs+ method?

(b) For each of these two methods, briefly discuss what it guarantees  in this situation.
\end{enumerate}

\begin{sols}
\subsection{Solutions}
\begin{enumerate}
\item
(a) We first consider the  knockoffs method. For $t=0.5$, we have 
$$\frac{|\{j: W_j\leq -t\}|}{|\{j: W_j\geq t\}|} = 1/5 \leq \alpha.$$ For smaller $t$ the fraction becomes $\infty$. Hence, $\tau=0.5$, so all hypotheses except the third one are rejected.

We now consider the  knockoffs+ method. For $t\in [0.5,0.8]$, we have 
$$\frac{1+ |\{j: W_j\leq -t\}|}{|\{j: W_j\geq t\}|} \geq 2/5 > \alpha.$$ For smaller $t$ the fraction becomes $\infty$. 
For  $t>0.8$, we have 
$$\frac{1+ |\{j: W_j\leq -t\}|}{|\{j: W_j\geq t\}|} = \frac{1}{|\{j: W_j\geq t\}|} \geq 1/2 > \alpha.$$ Hence, $\tau+=\infty$ and the knockoffs+ method rejects no hypotheses.

(b) The knockoffs method guarantees that 

$$\mathbb{E}(\frac{|\R\cap\N|}{|\R|+1/\alpha}) =\mathbb{E}(\frac{|\R\cap\N|}{|\R|+5}) \leq 0.2.$$
This does not have any interpretation that is similar to FDR control or one of the other error rates that we have seen. Since $m$ is small and $\alpha$ is not so small, the method does not even provide something close to FDR control. The knockoffs+ method is stricter, but guarantees that the $FDR$, i.e. $\mathbb{E}(\frac{|\R\cap\N|}{|\R|})$, is at most 0.2.
\end{enumerate}
\end{sols}

\setlength{\bibsep}{3pt plus 0.3ex}  
\def\bibfont{\small}  

\bibliographystyle{biblstyle}
\bibliography{bibliography}

\newpage

\appendix
\section{Measure theoretic probability} 
An extensive treatment of measure theoretic probability  and martingale theory is contained in the freely available  lecture notes of \citt{spreij2023measure}. In \S\ref{apprv} and \S\ref{appmart} we discuss a few concepts from this theory, because theory from  \S\ref{appmart} is used in a few proofs in \S\ref{secFDR} and \S\ref{secknockoffs}.
In \S\ref{apprv} we provide a formal definition of \emph{random variables}, which is required in \S\ref{appmart}.
\S\ref{appmart} contains theory on \emph{martingales} and \emph{stopping times}.

\subsection{Measure theory and random variables} \label{apprv}
In probability theory (and in particular martingale theory), the concept of a measurable random variable can be important. To define this concept, we must first introduce the notion of a $\sigma$-algebra \citp{spreij2023measure}.

Let $\mathcal{S}$ be a non-empty set. As usual, let $2^\mathcal{S}$ denote the set of all subsets of $\mathcal{S}$. A collection $\Sigma\subseteq 2^{\mathcal{S}}$ is called a $\sigma$-algebra on $\mathcal{S}$ if
\begin{enumerate}
\item $\mathcal{S}\in \Sigma,$
\item $\mathcal{E}\in \Sigma \Rightarrow \mathcal{E}^c\in \Sigma,$
\item $\mathcal{E}, \mathcal{F}\in \Sigma  \Rightarrow \mathcal{E}\cup \mathcal{F}\in \Sigma$,
\item if $\mathcal{E}_n\in \Sigma$ ($n=1,2,...$) then $\bigcup_{i=1}^{\infty} \mathcal{E}_n \in \Sigma$.
\end{enumerate}
Note that $\emptyset\in \Sigma$ and if $\mathcal{E}, \mathcal{F}\in \Sigma$, then  $\mathcal{E} \cap \mathcal{F}\in \Sigma$, since $\mathcal{E} \cap \mathcal{F} =  (\mathcal{E}^c \cup \mathcal{F}^c)^c.$

If $\Sigma$ is a $\sigma$-algebra on $\mathcal{S}$, then $(\mathcal{S},\Sigma)$ is called a \emph{measurable space} and the elements of $\Sigma$  are called measurable sets. 

If $\mathcal{D}$ is a collection of subsets of $\mathcal{S}$, then by $\sigma(\mathcal{D})$  we denote the smallest $\sigma$-algebra containing $\mathcal{D}$.  If $\Sigma = \sigma(\mathcal{D})$, we say that $\mathcal{D}$ \emph{generates} $\Sigma$. 

If $\mathcal{S}=\reals$ is the set of real numbers, then the $\sigma$-algebra on $\reals$ that is usually considered is the \emph{Borel} $\sigma$-algebra, which is the $\sigma$-algebra  generated by all \emph{open} subsets of $\reals$. This is the same as the $\sigma$-algebra  generated by all open intervals $(a,b)$. This is also identical to the $\sigma$-algebra  generated by all closed intervals $[a,b]$ for example. This is because an open interval is a union of closed intervals. For example, $(0,5)$ is the union of all intervals of the form  $[0+1/n,5-1/n]$, where $n\in\{1,2,...\}$.

A map $\mu: \Sigma \rightarrow [0,\infty]$ is called a \emph{measure} if it is countable addititive, which means that
\begin{itemize}
\item $\mu(\emptyset)=0,$
\item $\mu(\cup_i\mathcal{E}_i) =  \sum_i \mu(\mathcal{E}_i)$ for every countable collection $\mathcal{E}_1, \mathcal{E}_2,...$ of disjoint elements of $\Sigma$.
\end{itemize}
Further, if $\mu(\mathcal{S})=1$, then $\mu$ is called a probability measure. Such a measure is often denoted by $\mathbb{P}$ and $(\mathcal{S}, \Sigma, \mathbb{P})$ is then called a probability space. It is then also common to use the symbol $\Omega$ for $\mathcal{S}$. We say that a subset of $\mathcal{S}$ is \emph{measurable} if it is contained in $\Sigma$.

We will also need the notion of a \emph{measurable function}. Let $\mathcal{B}$ denote the Borel $\sigma$-algebra. A function
$X: \mathcal{S} \rightarrow \reals$ is called ($\Sigma$-)measurable if for every $\mathcal{A}\in \mathcal{B}$, $X^{-1}(\mathcal{A})\in \Sigma$. $X$ being ($\Sigma$-)measurable means that if we know for each element of $\Sigma$ whether it contains $\omega$, then we know $X(\omega)$.

A \emph{random variable} is a measurable function from a probability space $\Omega$ to  $\reals$ \citp{spreij2023measure}. More precisely, it is a function $X: \Omega \rightarrow \reals$ such that for every $\mathcal{A}\in \mathcal{B}$, $X^{-1}( \mathcal{A})\in \Sigma$. We could also interpret the term ``random variable'' more broadly as a measurable map from a probability space to some other measurable space, e.g. $\reals^n$ with the Borel $\sigma$-algebra on  $\reals^n$, or a space of functions, etc..  In case of $\reals^n$, we often speak of a \emph{random vector}. In general, the space where $X$ takes its values is called the \emph{sample space}. 

The interpretation of the function $X$ is as follows:  ``the probability that the random variable $X$ takes a value in $\mathcal{A}$'' is $\pr(X^{-1}( \mathcal{A}))$. This quantity is well-defined since we know that $X^{-1}( \mathcal{A}) \in \Sigma$, since $X$ is measurable. Note that when we write $\pr(X\in \mathcal{A})$, this is interpreted as $\pr(\omega\in \Omega: X(\omega)\in \mathcal{A})$, which is the probability that $X$ is in $\mathcal{A}$.

The $\sigma$-\emph{algebra generated by} a real-valued variable $X$ is the set 
$$\sigma(X):=\sigma(X^{-1}(\mathcal{A}): \mathcal{A}\in \mathcal{B}),$$
i.e., the $\sigma$-algebra generated by $\{X^{-1}(\mathcal{A}): \mathcal{A}\in \mathcal{B}\}$. $\sigma(X)$ is the smallest $\sigma$-algebra with respect to which $X$ is measurable. Conditioning on $\sigma(X)$ means conditioning on knowing $X$. Indeed, if we know which elements of $\sigma(X)$  contain $\omega$, then we know $X(\omega)$, and vice versa.

The set
$$\{X\in \mathcal{A}\}=\{\omega\in \Omega: X(\omega)\in \mathcal{A}\} = \{X^{-1}( \mathcal{A})\}$$ is called the \emph{event} that $X$ is in $A$. Formally, an event is a  measurable subset of the  measurable  space $(\Omega, \Sigma)$. We say that an event $\A$ holds \emph{almost surely} (``a.s.'') if $\pr(\A)=1,$ i.e., there may be some $\omega$'s outside $\A$, but their measure $\pr(\Omega\setminus\A)$ is $0$. 

In practice we often do not explicitly define random variables or vectors as functions. Instead, we often just define a probability measure on the sample space --- e.g. by defining a joint PDF on the sample space $\reals^n$, which implies a probability measure on  $\reals^n$. 

The probability measure $\pr$ of the probability space $(\Omega, \Sigma, \pr)$ and the variable $X$ imply a probability measure $\mu_X$ on the outcome space (with $\sigma$-algebra $\mathcal{B}$, say), if for every $\mathcal{A}\in \mathcal{B}$ we define $\mu_X(\mathcal{A})=\pr(X^{-1}(\mathcal{A}))$. Note that $\mu_X(\mathcal{A})$ then has the interpretation of being the probability that the random variable $X$ takes a value in $\mathcal{A}$. 

The expected value of a random variable $X$ is defined as the (Lebesgue) integral of $X(\cdot)$ with respect to $\pr$,
$$\mathbb{E}(X) =\int_{\Omega} X(\omega) d\pr(\omega),$$
see \citt{spreij2023measure} for details on Lebesgue integration, which is a generalization of Riemann integration. 
In many basic texts on probability theory, the expectation of a continuous random variable is defined in terms of the PDF of the variable, and  the expectation of a discrete random variable is defined as a sum. Such definitions have limitations. For example, a random variable might be neither fully continuous nor fully discrete. 
The above definition is more general. 



\subsection{Martingales and  optional stopping} \label{appmart}
A stochastic process (in $\reals$) is a sequence of random variables (measurable functions from a measurable space $(\Omega, \mathcal{F})$ to $(\reals, \mathcal{B})$) indexed by a set $T$ that indicates time. That set could e.g. be $T=\{0,1,2,...\}$ (a discrete set of time points), $T=[0,\infty)$ or $T=[0,1]$ (a continuous set of time points). We denote the process by $(X_t)_{t\in T}$. Just like a random variable, the process $(X_t)_{t\in T}$ is a mapping from $\Omega$. A realization of the process corresponds to some $\omega\in \Omega$, so it can be written as $(X_t)_{t\in T}(\omega)$.

In order to define martingales and to give the optional stopping theorem, we need the concept of a \emph{filtration} $\mathbb{F}$, which is a sequence $(\mathcal{F}_t)_{t\in T}$ of sub-$\sigma$-algebras on $\mathcal{F}$, such that for all $s,t\in T$ with $s\leq t$ we have $\mathcal{F}_s\subseteq \mathcal{F}_t\subseteq \mathcal{F}$.

A stochastic process $(X_t)_{t\in T}$ is called \emph{adapted} to $\mathbb{F}$ if for every $t\in T$, $X_t$ is measurable with respect to $\mathcal{F}_t$. Note that 
$X_t$ being measurable with respect to $\mathcal{F}_t$ means that   if we know for every element of $\mathcal{F}_t$ whether it contains $\omega$, then we know $X(\omega)$. 



Conditioning on a $\sigma$-algebra has the interpretation of knowing for every element of the $\sigma$-algebra whether it contains $\omega$.
The interpretation of a filtration $\mathbb{F}$ is often that $\mathcal{F}_t$ represents the information available by time $t$, in the following sense: if we have observed the process $(X_t)_{t\in T}(\omega)$ up to time $s<\max(T)$, then we have observed part of the process but not the whole process, so we only have limited information about which $\omega\in \Omega$ our process corresponds to. However, we do know for each element of $\mathcal{F}_s$ whether it contains $\omega$. 



\begin{definition} \label{defmart}
A stochastic process $(X_t)_{t\in T}$ is called a \emph{martingale} (or $\mathbb{F}$-martingale) if it is adapted to the filtration $\mathbb{F}$ and for $s\leq t$ 
\begin{equation} \label{defmgale}
\mathbb{E}(X_t|\mathcal{F}_s) =  X_s
\end{equation}
almost surely, where we assume Lebesgue integrability where required (see \citt{spreij2023measure} for details).
\end{definition}
The  interpretation is as follows:  conditioning on $\mathcal{F}_s$ means that we know in which subsets of $\mathcal{F}_s$ our realization $\omega$ is contained, which means that we know $(X_t)_{t\leq s}$. The martingale property says that given our information at time $s$, the expected future values of the process are exactly $X_s$. A \emph{supermartingale} is defined in the same way, except that the equality in \eqref{defmgale} becomes ``$\leq$''.

For example, suppose we are playing a gambling game with multiple rounds, and in every round we win a euro or lose a euro, each with probability 0.5. For simplicity, suppose that it is possible to have negative capital, so that we can always keep playing. Then if we are currently in timepoint $s$ and have 40 euros, then conditional on everything that has happened until now, our expected capital at any future timepoint is also 40 euros. 

\begin{remark}
We could consider  a filtration with $\F_t =\F_s$ for all $s,t\in T$, i.e., we already have all information from the start. Then all $X_{t}$ are measurable with respect to $\F_{\min T}$ (assuming the minimum exists), which means that conditional on $\F_{\min T}$,  $X_{t}$ is known for all $t\in T$.
That means that we already know the whole process at timepoint $\min T$. That is only possible if after observing $X_{\min T}$, the process is not random at all. Since a martingale has constant conditional expectation, that must mean that the process is a horizontal line.
Indeed, if we consider Definiton \ref{defmart}, we see that we require $X_t =X_s$ (almost surely), since $\mathbb{E}(X_t|\mathcal{F}_s) = \mathbb{E}(X_t|\mathcal{F}_t) =X_t$. 
\end{remark}

Often we are interested in evaluating or stopping a process at some timepoint that we pick based on everything that has happened before it. For example, we might want to stop the gambling game as soon as our capital exceeds 1000. This relates to the concept of \emph{stopping times}.

\begin{definition}
A map $\tau: \Omega \rightarrow T\cup\{\infty\}$ is called a \emph{stopping time} if for all $t\in T$ it holds that $\{\tau \leq t\}\in \mathcal{F}_t$.  (Recall that $\{\tau\leq t\}$ means  $\{\omega \in\Omega:\tau(\omega)\leq t\}$.)
\end{definition}

Thus, $\tau: \Omega \rightarrow T\cup\{\infty\}$ is a stopping time if the decision to stop at time $t$ depends on the information available at time $t$.

Then we have the following result, which is known a version of the \emph{optional stopping theorem}.

\begin{theorem}[A version of the optional stopping theorem] \label{thmopstop}
Suppose $T$ is bounded and so is $\tau$, and $T$ has a minimum $t_0$. Suppose $(X_t)_{t\in T}$ is an  $\mathbb{F}$-martingale. Then
$$\mathbb{E}(X_{\tau})=\mathbb{E}(X_{t_0}).$$
\end{theorem}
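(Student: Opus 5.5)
We prove the optional stopping theorem (Theorem \ref{thmopstop}) first for a finite (discrete) time set, where it reduces to a telescoping argument, and then extend it to a general bounded time set.

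\textbf{Step 1: discrete case.} Suppose $T=\{t_0<t_1<\dots<t_n\}$. Since $\tau$ takes values in $T$ (boundedness of $\tau$ rules out the value $\infty$), we can write
$$X_\tau = X_{t_0}+\sum_{k=1}^n (X_{t_k}-X_{t_{k-1}})\mathbbm{1}_{\{\tau\geq t_k\}}.$$
The event $\{\tau\geq t_k\}=\{\tau\leq t_{k-1}\}^c$ lies in $\F_{t_{k-1}}$ by the definition of a stopping time. Conditioning each summand on $\F_{t_{k-1}}$ gives
$$\mathbb{E}\big[(X_{t_k}-X_{t_{k-1}})\mathbbm{1}_{\{\tau\geq t_k\}}\big]=\mathbb{E}\big[\mathbbm{1}_{\{\tau\geq t_k\}}\,\big(\mathbb{E}(X_{t_k}|\F_{t_{k-1}})-X_{t_{k-1}}\big)\big]=0,$$
using the martingale property \eqref{defmgale}. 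Summing over $k$ yields $\mathbb{E}(X_\tau)=\mathbb{E}(X_{t_0})$. Integrability is immediate, since $|X_\tau|\leq\sum_k|X_{t_k}|$.

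\textbf{Step 2: general bounded $T$.} Let $t^*=\sup T$; assume first that $t^*\in T$, and otherwise reduce to this case. Since the martingale closes at $t^*$, we have $X_t=\mathbb{E}(X_{t^*}|\F_t)$ for all $t\in T$, so the family is uniformly integrable. Approximate $\tau$ by discrete stopping times $\tau_n\downarrow\tau$ taking values in finite grids of $T$, where $\tau_n$ is the smallest grid point that is at least $\tau$. Each $\tau_n$ is a stopping time, and Step 1 gives $\mathbb{E}(X_{\tau_n})=\mathbb{E}(X_{t_0})$.

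Passing to the limit then requires two ingredients:
\begin{itemize}
\item right-continuity of paths, so that $X_{\tau_n}\to X_\tau$;
\item uniform integrability of $\{X_{\tau_n}\}$, which follows from $X_{\tau_n}=\mathbb{E}(X_{t^*}|\F_{\tau_n})$, itself a consequence of Step 1 applied on the grid.
\end{itemize}
Vitali's theorem then yields $\mathbb{E}(X_\tau)=\mathbb{E}(X_{t_0})$.

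\textbf{Main obstacle.} The main obstacle is Step 2. The statement as given carries no path-regularity assumption, so we will add the standard requirement of right-continuous paths, as in the theorem's source, \citt{spreij2023measure}. Alternatively, we can note that in the applications in \S\ref{secFDR} and \S\ref{secknockoffs} the processes are piecewise constant and the stopping times take values in a finite random set of points, so that a discrete argument essentially suffices there. For the general continuous-time statement, we will defer to \citt{spreij2023measure} for the regularity details.
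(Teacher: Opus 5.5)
The paper gives no proof of this theorem; the appendix only refers to \citet{spreij2023measure}. So there is no in-paper argument to compare against. Your Step~1 telescoping argument is complete. Step~2 is the standard route: grid-valued stopping times $\tau_n\downarrow\tau$ approximating from above, uniform integrability via closure at $t^*$, then Vitali. It is correct once right-continuity is assumed.

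You are right to add that hypothesis, and it is genuinely necessary rather than a convenience of your proof: as printed, the statement is false in continuous time. Take $T=[0,1]$, $U$ uniform on $[0,1]$, the constant filtration $\F_t=\sigma(U)$ (as in the paper's own Remark), $X_t=\mathbbm{1}\{U=t\}$ and $\tau=U$. Each $X_t$ equals $0$ almost surely, so $(X_t)$ is a martingale and $\tau$ is a bounded stopping time. Yet $X_\tau\equiv 1$, so $\mathbb{E}(X_\tau)=1\neq 0=\mathbb{E}(X_0)$.

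Similarly, your ``otherwise reduce to this case'' needs $\tau\le t'$ for some $t'\in T$, after which you restrict to $T\cap[t_0,t']$. If $\tau$ is only bounded by $\sup T\notin T$, the claim fails. For example, take $Y_n=2^n\mathbbm{1}\{N>n\}$ with $\pr(N=n)=2^{-n}$ and its natural filtration, placed at times $1-2^{-n}$ and stopped at $N$. Then $Y_N=0$ while $Y_0=1$.

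Two small repairs are needed. First, Step~1 as stated gives only $\mathbb{E}(X_\sigma)=\mathbb{E}(X_{t_0})$ for grid stopping times $\sigma$. It does not give the identity $X_{\tau_n}=\mathbb{E}(X_{t^*}\mid\F_{\tau_n})$ that you invoke for uniform integrability. To get it, apply Step~1 to the grid stopping time equal to $\tau_n$ on $A$ and to $t^*$ on $A^c$, for $A\in\F_{\tau_n}$. With that, your argument is self-contained and nothing needs to be deferred to \citet{spreij2023measure}.

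Second, your remark on the applications is correct and can be made exact:
\begin{itemize}
\item In the knockoffs proof, time is already finite and discrete, and the same telescoping gives ``$\le$'' for supermartingales.
\item In the Benjamini--Hochberg proof, \eqref{eqR} gives $t_\alpha=\alpha(R(t_\alpha)\vee 1)/m$. This lies in the deterministic grid $\{k\alpha/m:1\le k\le m\}$, so Step~1 on $\{1\}\cup\{k\alpha/m\}$ in reversed time suffices.
\end{itemize}
The discrete route is actually the better one there, because in reversed time $V(t)/t$ has left- rather than right-continuous paths, so the right-continuous version would not apply directly.
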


Thus, in the mentioned gambling game (where we can have negative capital), if we start with 100 euros, and we stop once we have won 3 times in a row, then our expected capital at the stopping time is 100 euros. 
(Note that if we stop when we have 200 euros, then Theorem \ref{thmopstop} does not apply since $\tau$ is not bounded then. Note that if we would use that strategy, with probability 1, we would at some point reach 200 euros. However, it turns out that the expected time until we reach the value 200 is infinite. This strategy is not useful in practice, because  we do not get expected gains after finite time.)

For super-martingales the above optional stopping theorem also holds, except that the equality becomes ``$\leq$''. In general, the theorem is called the \emph{optional} stopping theorem because the stopping decision is based on observed information, i.e., you do not need to stop at some fixed time.

\end{document}